\crefname{principle}{Principle}{Principles} 
\Crefname{principle}{Principle}{Principles} 
\crefname{finding}{Finding}{Findings} 
\Crefname{finding}{Finding}{Findings} 
\crefname{point}{Point}{Points}
\Crefname{point}{Point}{Points}
\crefname{commentcounter}{R\!}{Rs} 
\Crefname{commentcounter}{R\!}{Rs}
\begin{document}

\newtheorem{theorem}{Theorem}
\newtheorem{lemma}{Lemma}
\newtheorem{definition}{Definition}
\newtheorem{problem}{Problem}
\newtheorem{proposition}{Proposition}
\newtheorem{corollary}{Corollary}
\newcommand{\example}[1]{{\bf Example}: {#1}}
\newtheorem{principle}{Principle}
\newtheorem{finding}{Finding}

\newcommand{\mybullet}{\vspace{1mm}\noindent$\bullet$~~}

\renewcommand{\hl}[1]{#1}
\newcommand{\mh}[1]{[[\emph{\color{blue}MH: #1}]]}

\newcommand{\vect}[1]{\mathbf{#1}}
\newcommand{\E}{\mathbb{E}}

\newcommand{\scale}[1]{\Lone{#1}}
\newcommand{\dpbench}{{\sc DPBench}\xspace}
\newcommand{\dpbenchone}{{\sc DPBench-1D}\xspace}
\newcommand{\dpbenchtwo}{{\sc DPBench-2D}\xspace}
\newcommand{\datagen}{{\sc DataGen}\xspace}

\def\aa{\mathbb{A}}  
\def\bb{\mathbb{B}}  

\def\e{\vect{e}}
\def\x{\vect{x}}
\def\p{\vect{p}}
\def\estx{\vect{\hat x}}
\def\tildex{\vect{\tilde x}}
\def\ests{\vect{\hat s}}
\def\W{\vect{W}}
\def\z{\vect{z}}
\def\Z{\vect{Z}}

\def\db{I}  \def\nbrs{nbrs}
\def\dbdom{\mathbb{Z}_{\geq 0}^n}  \def\allbuckets{\mathcal{B}}
\def\allhist{\mathcal{P}}
\def\bmean{\frac{b_i(\x)}{|b_i|}}  \def\unexp{{expand}}

\def\hcost{pcost}
\def\bcost{bcost}

\def\real{\mathbb{R}}
\def\posreal{\mathbb{R}_{\geq 0}}
\def\sens{\Delta}

\newcommand{\Lone}[1]{\left\Vert #1  \right\Vert_1}
\newcommand{\Ltwo}[1]{\left\Vert #1  \right\Vert_2}
\newcommand{\set}[1]{\{#1\}}   
\def\myvert{\;\vert\;}

\def\B{B}

\newcommand{\stitle}[1]{\vspace{-2mm}\paragraph*{#1}}

\def\param{\rho}
\def\seed{\sigma}

\def\ones{\vect{1}}
\def\zeros{\vect{0}}
\def\plus{{\!+}}
\def\b{\vect{b}}  
\def\a{\vect{a}}
\def\i{\vect{i}}
\def\y{\vect{y}}
\def\q{\vect{q}}  
\def\w{\vect{w}} 
\def\v{\vect{v}}  
\def\estw{\vect{\hat w}}
\def\estq{\vect{\hat q}}
\def\A{\vect{A}}
\def\T{\vect{T}}
\def\Q{\vect{Q}}
\def\M{\vect{M}}
\def\D{\vect{D}}
\def\d{\vect{d}}
\def\P{\vect{P}}
\def\p{\vect{p}}
\def\I{\vect{I}}
\def\V{\vect{V}}
\def\H{\vect{H}}
\def\G{\vect{G}}
\def\R{\vect{R}}
\def\X{\vect{X}}
\def\Y{\vect{Y}}
\def\s{\vect{s}}
\def\tq{\hat{q}}
\def\tW{\hat{W}}
\def\tWW{\mathbf{\tW}}
\def\lambdaB{\vect{\lambda}}
\def\LambdaB{\vect{\Lambda}}

\def\esty{\vect{\hat y}}
\def\m{\vect{m}}

\def\dom{\mathbb{N}^n}
\def\rng{\mathcal{Y}}
\def\obj{y}

\def\uni{\mathcal{U}}
\def\coll{\mathcal{S}}
\def\packing{\mathcal{C}}
\def\Lap{\mbox{Laplace}}

\def\dom{\mathcal{D}}\def\reg{reg}

\def\cover{H}
\def\allparts{\mathcal{P}(\allbuckets)}

\def\DWnospace{\text{DAWA}}
\def\DW{\text{DAWA} }
\def\DWall{\text{DAWA-all} }
\def\DWapprox{\text{DAWA-subset} }

\def\algG{\mathcal{A}}  \def\alg{\mathcal{K}}  \def\LM{\mathcal{L}}	\def\GM{\mathcal{G}}	\def\MM{\mathcal{M}}	\def\hist{{hist}}	
\def\7{\surd}

\def\propPart{\rho}
\newcommand{\uniform}{\text{{\sc Uniform}}\xspace}
\newcommand{\identity}{\text{{\sc Identity}}\xspace}
\newcommand{\privelet}{\text{{\sc Privelet}}\xspace}
\newcommand{\Hier}{\text{{\sc H}}\xspace}
\newcommand{\Hierb}{\text{{\sc H$_b$}}\xspace}
\newcommand{\greedyH}{\text{{\sc Greedy H}}\xspace}
\newcommand{\dawa}{\text{{\sc DAWA}}\xspace}
\newcommand{\mwem}{\text{{\sc MWEM}}\xspace}
\newcommand{\mwemv}{\text{{\sc MWEM$^*$}}\xspace}
\newcommand{\efpa}{\text{{\sc EFPA}}\xspace}
\newcommand{\php}{\text{{\sc PHP}}\xspace}
\newcommand{\ahp}{\text{{\sc AHP}}\xspace}
\newcommand{\ahpv}{\text{{\sc AHP$^*$}}\xspace}
\newcommand{\structurefirst}{\text{{\sc SF}}\xspace}
\newcommand{\dpcube}{\text{{\sc DPCube}}\xspace}
\newcommand{\quadtree}{\text{{\sc QuadTree}}\xspace}
\newcommand{\hybridtree}{\text{{\sc HybridTree}}\xspace}
\newcommand{\UG}{\text{{\sc UGrid}}\xspace}
\newcommand{\AG}{\text{{\sc AGrid}}\xspace}

\newcommand{\patent}{\text{{\sc Patent}}\xspace}
\newcommand{\hepph}{\text{{\sc HepPh}}\xspace}
\newcommand{\income}{\text{{\sc Income}}\xspace}
\newcommand{\adult}{\text{{\sc Adult}}\xspace}
\newcommand{\searchlogs}{\text{{\sc Search}}\xspace}
\newcommand{\nettrace}{\text{{\sc Trace}}\xspace}
\newcommand{\medcost}{\text{{\sc Medcost}}\xspace}
\newcommand{\bidsfj}{\text{{\sc Bids-FJ}}\xspace}
\newcommand{\bidsfm}{\text{{\sc Bids-FM}}\xspace}
\newcommand{\bidsall}{\text{{\sc Bids-All}}\xspace}
\newcommand{\mdsalary}{\text{{\sc MD-Sal}}\xspace}
\newcommand{\mdsalaryfa}{\text{{\sc MD-Sal-FA}}\xspace}

\newcommand{\lcreqfa}{\text{{\sc LC-Req-F1}}\xspace}
\newcommand{\lcreqfb}{\text{{\sc LC-Req-F2}}\xspace}
\newcommand{\lcreqall}{\text{{\sc LC-Req-All}}\xspace}
\newcommand{\lcdtirfa}{\text{{\sc LC-DTIR-F1}}\xspace}
\newcommand{\lcdtirfb}{\text{{\sc LC-DTIR-F2}}\xspace}
\newcommand{\lcdtirall}{\text{{\sc LC-DTIR-All}}\xspace}
\newcommand{\TDadult}{\text{{\sc Adult-2D}}\xspace}
\newcommand{\TDloan}{\text{{\sc LC-2D}}\xspace}
\newcommand{\TDtwitter}{\text{{\sc TWITTER-100-100}}\xspace}
\newcommand{\TDcheckin}{\text{{\sc Gowalla}}\xspace}
\newcommand{\TDstoke}{\text{{\sc Stroke}}\xspace}
\newcommand{\TDmdsalary}{\text{{\sc MD-Sal-2D}}\xspace}
\newcommand{\TDbeijingtaxis}{\text{{\sc BJ-Cabs-S}}\xspace}
\newcommand{\TDbeijingtaxie}{\text{{\sc BJ-Cabs-E}}\xspace}
\newcommand{\TDcabspottings}{\text{{\sc SF-Cabs-S}}\xspace}
\newcommand{\TDcabspottinge}{\text{{\sc SF-Cabs-E}}\xspace}
\newcommand{\TDlbl}{\text{{\sc LBL}}\xspace}

\newcommand{\randomworkload}{\text{{\em Random}}\xspace}
\newcommand{\randomsmall}{\text{{\em Random Small}}\xspace}
\newcommand{\prefixworkload}{\text{{\em Prefix}}\xspace}

\toappear{} 

\newcommand\debullet[2]{#2}

\title{Principled Evaluation of Differentially Private Algorithms using DPBench}

\numberofauthors{3}
\author{
		Michael Hay\textsuperscript{\textasteriskcentered}, 
	Ashwin Machanavajjhala\textsuperscript{\textasteriskcentered\textasteriskcentered}, 
	Gerome Miklau\textsuperscript{\textdagger}, 
	Yan Chen\textsuperscript{\textasteriskcentered\textasteriskcentered}, 
	Dan Zhang\textsuperscript{\textdagger} \and
		\alignauthor
	       \affaddr{
	       \makebox[0pt][r]{\textsuperscript{\textasteriskcentered}}
	       Colgate University}\\
	       \affaddr{Department of Computer Science}\\
	       \affaddr{mhay@colgate.edu} 
		\alignauthor
	       \affaddr{
	       \makebox[0pt][r]{\textsuperscript{\textasteriskcentered\textasteriskcentered}}
	       Duke University}\\
	       \affaddr{Department of Computer Science}\\
	       \affaddr{\{ashwin,yanchen\}@cs.duke.edu} 
	\alignauthor
	       \affaddr{
	       \makebox[0pt][r]{\textsuperscript{\textdagger}}
	       University of Massachusetts Amherst\\}
		   \affaddr{School of Computer Science}\\
	       \affaddr{ \{miklau,dzhang\}@cs.umass.edu} 
}

\maketitle
\pagestyle{plain}
\thispagestyle{empty}
\pagenumbering{arabic}

\begin{abstract}

Differential privacy has become the dominant standard in the research community for strong privacy protection.  There has been a flood of research into query answering algorithms that meet this standard.  Algorithms are becoming increasingly complex, and in particular, the performance of many emerging algorithms is {\em data dependent}, meaning the distribution of the noise added to query answers may change depending on the input data.  Theoretical analysis typically only considers the worst case, making empirical study of average case performance increasingly important. 

In this paper we propose a set of evaluation principles which we argue are essential for sound evaluation.  Based on these principles we propose \dpbench, a novel evaluation framework for  standardized evaluation of privacy algorithms.  We then apply our benchmark to evaluate algorithms for answering 1- and 2-dimensional range queries.
The result is a thorough empirical study of \hl{15} published algorithms on a total of \hl{27} datasets that offers
 new insights into algorithm behavior---in particular the influence of dataset scale and shape---and a more complete characterization of the state of the art.  Our methodology is able to resolve inconsistencies in prior empirical studies and place algorithm performance in context through comparison to simple baselines.
Finally, we pose open research questions 
which we hope will guide future algorithm design.

\end{abstract}

\section{Introduction}\label{sec:intro}

Privacy is a major obstacle to deriving important insights from collections of sensitive records donated by individuals.  Differential privacy \cite{dwork2006calibrating, dwork2011a-firm, Dwork08Differential, Dwork14Algorithmic} has emerged as an important standard for protection of individuals' sensitive information. 
Informally, differential privacy is a property of the analysis algorithm which guarantees that the output the analyst receives is statistically indistinguishable (governed by a privacy parameter $\epsilon$) from the output the analyst would have received if any one individual had opted out of the collection.  
Its general acceptance by researchers has led to a flood of work across the database, data mining, theory, machine learning, programming languages, security, and statistics communities.

Most differentially private algorithms work by introducing noise into query answers. Finding algorithms that satisfy $\epsilon$-differential privacy and introduce the least possible error for a given analysis task is a major ongoing challenge both in research and practice. Standard techniques for satisfying differential privacy that are broadly-applicable (e.g. the Laplace and exponential mechanisms \cite{Dwork14Algorithmic}) often offer sub-optimal error rates. Much recent work that deem differential privacy as impractical for real world data (e.g., \cite{Hu2015Telco}) use only these standard techniques. 

These limitations have led to a slew of work in new differentially private algorithms that reduce the achievable error rates. Take the example task of privately answering 1- and 2-dimensional range queries on a dataset (the primary focus of this paper). Proposed techniques for this task include answering range queries using noisy hierarchies of equi-width histograms \cite{Cormode11Differentially,Qardaji13Understanding,hay2010boosting, Li:2010Optimizing-Linear},  noisy counts on a coarsened domain \cite{xu2013differential, zhangtowards, Li14Data-,xiao2014dpcube,qardaji2013differentially}, or by reconstructing perturbed wavelet \cite{xiao2010differential} or Fourier \cite{Acs2012compression} coefficients,  or based on a synthetic dataset built using multiplicative weights updates \cite{hardt2012a-simple}. These recent innovations  reduce error at a fixed privacy level $\epsilon$ by many orders of magnitude for certain datasets, and can have a large impact on the success of practical differentially private applications. 

However, the current state-of-the-art poses a new challenge of algorithm selection. Consider a data owner (say from the US Census Bureau) who would like to use differentially private algorithms to release a 1- or 2- dimensional histogram over their data. She sees a wide variety of algorithms in the published literature, each demonstrating settings or contexts in which they have advantageous theoretical and empirical properties. Unlike in other fields (e.g. data mining), the data owner can not run all the algorithms on her dataset and choose the algorithm that incurs the least error -- this violates differential privacy, as the choice of the algorithm would leak information about the input dataset.  Hence, the data owner must make this choice using prior theoretical and empirical analyses of these algorithms, and faces the following problems: 
\begin{enumerate}[leftmargin=*]
\item {\em Gaps in Empirical Evaluations:} As algorithms become progressively more complex, their error rates are harder to analyze theoretically, underscoring the importance of good empirical evaluations. For a number of reasons, including chronology of algorithms, lack of benchmark datasets, and space constraints in publications, existing empirical evaluations do not comprehensively evaluate all existing algorithms leaving gaps (and even inconsistencies; see \cref{sec:prior_results}) in our knowledge about algorithm performance. 
\item {\em Understanding Data-Dependence:} A number of recent algorithms are {\em data dependent}; i.e., their error is sensitive to properties of the input. Thus, a data dependent algorithm $A$ may have lower error than another algorithm $B$ on one dataset, but the reverse may be true on another dataset. While a few prior empirical evaluations do evaluate algorithms on diverse datasets, there is little guidance for the data owner on how the error would extrapolate to a new dataset (e.g., one with a much larger number of tuples) or a new experimental setting (e.g., a smaller  $\epsilon$). 
\item {\em Choosing Values for Free Parameters:} Algorithms are often associated with free parameters (other than the privacy parameter $\epsilon$), but their effect on error is incompletely quantified. Published research provides little guidance to the data owner on how to set these parameters, or default values are suboptimal. 
\item {\em Unsound Utility Comparisons:} Even when empirical analyses are comprehensive, the results may not be useful to a practitioner. For instance, the error of a differentially private algorithm is a random variable. However, most empirical analyses only report the mean error and not the variability in error. Moreover, algorithm error is often not compared with that of simple baselines like the Laplace mechanism, which are the first algorithms a practitioner would attempt applying.
\end{enumerate} 

Given these problems, even as research progresses, the practitioner is lost and incapable of deploying the right algorithm. Moreover, researchers are likely to propose new algorithms that improve performance in narrow contexts. In this paper, we attempt to remedy the above problems with a state-of-the-art of empirical evaluation of differentially private algorithms, and help shed light on the algorithm selection problem for 1- and 2-dimensional range query answering. 
We make the following novel contributions: \begin{enumerate}[leftmargin=*]
\item We propose a set of principles that any differentially private evaluation must satisfy (\cref{sec:principles}). Based on these principles we develop \dpbench, a novel methodology for evaluating differentially private algorithms (\cref{sec:benchmark}). Experiments designed using \dpbench help tease out dependence of algorithm error on specific data characteristics like the size of the domain, the dataset's number of tuples (or scale) and its empirical distribution (or shape). \dpbench provides an algorithm to automatically tune free parameters of algorithms, and ensures fair comparisons between algorithms. Finally, by reporting both the mean and variation of error as well as comparing algorithms to baselines the results of \dpbench experiments  help the practitioner get a better handle on the algorithm selection problem.  
\item Using \dpbench, we present a comprehensive empirical study of \hl{15} published algorithms for releasing 1- or 2-dimensional range queries on a total of 27 datasets (\cref{sec:setup}). We evaluate algorithm error on \hl{7,920} experimental settings. Our study presents the following novel insights into algorithm error (\cref{sec:evaluation}): \begin{enumerate}[leftmargin=*]
\item {\em Scale and Data Dependence:} The error incurred by recently proposed data dependent algorithms is heavily influenced by scale (number of tuples). At smaller scales, the best data-dependent algorithms can beat simpler data independent algorithms by an order of magnitude. However, at large scales, many data dependent algorithms start performing worse than data independent algorithms. \item  {\em Baselines:} A majority of the recent algorithms do not even consistently beat the Laplace mechanism, especially at medium and larger scales in both the 1- and 2-D cases. 
\item {\em Understanding Inconsistencies:} Our results also help us resolve and explain inconsistencies from prior work. For instance, a technique that uses multiplicative weights \cite{hardt2012a-simple} was shown to outperform a large class of  data independent mechanisms (like \cite{xiao2010differential}), but in a subsequent paper \cite{Li14Data-}, the reverse was shown to be true. Our results explain this apparent inconsistency by identifying that the former conclusion was drawn using a dataset with small scale while the latter conclusions were made on datasets with larger scales.  
\end{enumerate}
\item  While the \dpbench algorithm for tuning free parameters is quite straightforward, we show that it results in $13\times$ improvement in error for certain algorithms like the multiplicative weights method \cite{hardt2012a-simple} over the original implementations of where parameters are set in a default manner.  
\item We formalize two important theoretical properties, {\em scale-epsilon exchangeability} and {\em consistency}, that can help extrapolate results from an empirical study to other experimental setting not considered by the study.  For an algorithm that is {\em scale-espilon exchangeable}, increasing the scale of the input dataset and increasing epsilon have equivalent effects on the error. Thus, an algorithm's error is roughly the same for all scale and $\epsilon$ pairs with the same product. We prove that most algorithms we consider in this paper satisfy this property. An algorithm that is {\em consistent} has error that tends to 0 as the privacy parameter $\epsilon$ tends to $\infty$. That is, inconsistent algorithms return answers that are biased even in the absence of privacy constraints. While all the data independent algorithms considered in the paper are consistent, we show that some data dependent algorithms are not consistent, and hence a practitioner must be wary about using such algorithms. 
\end{enumerate}

While our analysis is restricted to 1- and 2-dimensional range queries, we believe our results are very useful and impactful as the evaluation principles will extend to evaluation of differentially private algorithms for other tasks. We conclude the paper with comparisons to published results, lessons for practitioners, and a discussion of open questions in \cref{sec:discussion}.

\section{Preliminaries}\label{sec:background}

In this section we review basic privacy definitions and introduce notation for databases and queries. 
\subsection{Differential Privacy} \label{sec:diffp}

In this paper we are concerned with algorithms that run on a private database and publish their output.  To ensure the released data does not violate privacy, we require the algorithms to obey the standard of differential privacy.

Let $\db$ be a database instance consisting of a single relation.  Let $\nbrs(\db)$ denote the set of databases differing from $I$ in at most one record; i.e., if $\db' \in \nbrs(\db)$, then $|(\db - \db') \cup (\db' - \db)| = 1$.

\begin{definition}[Differential Privacy~\cite{dwork2006calibrating}] \label{def:diffp}
A randomized algorithm $\algG$ is $\epsilon$-differentially private if for any instance $\db$, any $\db' \in \nbrs(\db)$, and any subset of outputs $S \subseteq Range(\algG)$,\[
Pr[ \algG(\db) \in S] \leq \exp(\epsilon) \times Pr[ \algG(\db') \in S]
\]		
\end{definition}
For an individual whose data is represented by a record in $\db$, differential privacy offers an appealing guarantee.  It says that including this individual's record cannot significantly affect the output: it can only make some outputs slightly more (or less) likely -- where ``slightly'' is defined as at most a factor of $e^{\epsilon}$.  If an adversary infers something about the individual based on the output, then the same inference would also be likely to occur {\em even if the individual's data had been removed from the database} prior to running the algorithm.

Many of the algorithms considered in this paper are composed of multiple subroutines, each taking the private data as input.  Provided each subroutine achieves differential privacy, the whole algorithm is differentially private.  More precisely, the sequential execution of $k$ algorithms $\algG_1, \dots, \algG_k$, each satisfying $\epsilon_i$-differential privacy, results in an algorithm that is $\epsilon$-differentially private for $\epsilon = \sum_i \epsilon_i$~\cite{mcsherry2009pinq}.  Hence, we may think of $\epsilon$ as representing an algorithm's {\em privacy budget} which can be allocated across its subroutines.

A commonly used subroutine is the Laplace mechanism, a general purpose algorithm for computing numerical functions on the private data.  It achieves privacy by adding noise to the function's output.
We use $\Lap(\sigma)$ to denote the Laplace probability distribution with mean 0 and scale $\sigma$. 
\begin{definition}[Laplace Mechanism~\cite{dwork2006calibrating}] \label{def:laplace_mechanism}
Let $f(\db)$ denote a function on $\db$ that outputs a vector in $\real^d$.  The Laplace mechanism $\LM$ is defined as $\LM(I) = f(I) + \z$, where $\z$ is a $d$-length vector of random variables such that $z_i \sim \Lap(\sens f / \epsilon)$.  

The constant $\sens f$ is called the {\em sensitivity of $f$} and is the maximum difference in $f$ between any two databases that differ only by a single record,
$
\sens f = \max_{\db, \db' \in \nbrs(\db)} \Lone{ f(\db) - f(\db') }
$.
\end{definition}
The Laplace mechanism can be used to provide noisy counts of records satisfying arbitrary predicates.  For example, suppose $\db$ contains medical records and $f$ reports two counts: the number of male patients with heart disease and the number of female patients with heart disease.  The sensitivity of $f$ is 1: given any database instance $\db$, adding one record to it (to produce neighboring instance $\db'$), could cause at most one of the two counts to increase by exactly 1.  Thus, the Laplace mechanism would add random noise from $\Lap(1 / \epsilon)$ to each count and release the noisy counts.

\subsection{Data Model and Task} \label{sec:data_model}

The database $\db$ is an instance of a single-relation schema $R(\aa)$, with attributes $\aa=\{A_1, A_2, \ldots, A_{\ell}\}$.  Each attribute is discrete, having an ordered domain (continuous attributes can be suitably discretized).  We are interested in answering range queries over this data; range queries support a wide range of data analysis tasks including histograms, marginals, data cubes, etc.  
We consider the following task.  The analyst specifies a subset of target attributes, denoted $\bb \subseteq \aa$, and $\W$, a set of multi-dimensional range queries over $\bb$.  We call $\W$ the {\em workload}.  For example, suppose the database $\db$ contains records from the US Census describing various demographic characteristics of US citizens.  The analyst might specify $\bb = \set{age, salary}$ and a set $\W$ where each query is of the form, 
\begin{align*}
& \text{\sf select count(*) from R} \\
& \text{\sf where $a_{low} \leq$ age $\leq a_{high}$ and $s_{low} \leq$ salary $\leq s_{high}$} 	
\end{align*}
with different values for $a_{low}$, $a_{high}$, $s_{low}$, $s_{high}$.   We restrict our attention to the setting where the dimensionality, $k = |\bb|$, is small (our experiments report on $k \in \set{1, 2}$). 
All the differentially private algorithms considered in this paper attempt to answer the range queries in $\W$ on the private database $\db$ while incurring as little error as possible.

{ 

In this paper, we will often represent the database as a multi-dimensional array $\mathbf{x}$ of counts. 
For $\bb = \set{ B_1, \dots, B_k }$, let $n_j$ denote the domain size of $B_j$ for $j \in [1,k]$.  Then $\mathbf{x}$ has $(n_1 \times n_2 \times \ldots \times n_k)$ cells and the count in the $(i_1, i_2, \ldots, i_k)^{th}$ cell is
}
\begin{align*}
&\text{\sf select count(*) from R} \\
&\text{\sf where $B_1 = i_1$ and $B_2 = i_2$ and \ldots $B_k = i_k$}
\end{align*}
To compute the answer to a query in $\W$, one can simply sum the corresponding entries in $\x$.  (Because they are range queries, the corresponding entries form a (hyper-)rectangle in $\x$.)

\begin{example}
Suppose $\bb$ has the attributes age and salary (in tens of thousands) with domains $[1,100]$ and $[1,50]$ respectively. Then $\mathbf{x}$ is a $100\times 50$ matrix. The $(25,10)^{th}$ entry is the number of tuples with age $25$ and salary \$100,000. 
\end{example}

We identify three key properties of $\x$, each of which significantly impacts the behavior of privacy algorithms.  The first is the {\em domain size}, $n$, which is equivalently the number of cells in $\x$ (i.e., $n = n_1 \times \dots \times n_k$).  The second is the {\em scale} of the dataset, which is the total number of tuples, or the sum of the counts in $\x$, which we write as $\scale{\x}$.  Finally, the {\em shape} of a dataset is denoted as $\p$ where $\p = \x/\Lone{\x} = [p_1, \dots, p_n]$ is a non-negative vector that sums to 1.  The shape captures how the data is distributed over the domain and is independent of scale.


\section{Algorithms \& Prior Results}  \label{sec:algorithm_background}

\subsection{Overview of Algorithm Strategies}
\label{sec:algorithm_overview}


\begin{table}	
	\centering
	{\scriptsize
	\begin{tabular}{|l|c|c|l|l|l|c|c|}
	\hline
	 &  \multicolumn{5}{|c|}{\em Properties} & \multicolumn{2}{|c|}{\em Analysis} \\ \hline
	                   & H  &  P & \text{Dimen-} & \text{Param-}  & \text{Side} & \text{Consis-} & \text{Scale-$\epsilon$} \\ 
	\text{Algorithm}   &    &    & \text{sion}   &  \text{eters}              & \text{info} & \text{tent} & \text{Exch.} \\ \hline\hline
	\multicolumn{8}{|l|}{\em Data-independent} \\ \hline
	\hspace{0.05em} \identity~\cite{dwork2006calibrating}                    &   &     & Multi-D & -- & & yes & yes \\ \hline
	\hspace{0.05em} \privelet~\cite{xiao2010differential}                    & X &     & Multi-D & -- & & yes & yes \\ \hline
	\hspace{0.05em} \Hier~\cite{hay2010boosting}                             & X &     & 1D & $b=2$  & & yes & yes \\ \hline
	\hspace{0.05em} \Hierb~\cite{Qardaji13Understanding}                     & X &     & Multi-D & -- & & yes & yes \\ \hline
	\hspace{0.05em} \greedyH~\cite{Li14Data-}                              & X &     & 1D, 2D & $b=2$  & & yes & yes \\ \hline
	\multicolumn{8}{|l|}{\em Data-dependent} \\ \hline  
	\hspace{0.05em} \uniform                                                 &   & $\sim$  & Multi-D & -- & & no & yes \\ \hline
	\hspace{0.05em} \mwem~\cite{hardt2012a-simple}                           &   &     & Multi-D & $T$ & scale & no & yes \\ \hline 
	\hspace{0.05em} \mwemv                                                   &   &     & Multi-D & -- &  & no & yes \\ \hline 
	\hspace{0.05em} \ahp~\cite{zhangtowards}                                 &   & X   & Multi-D & $\propPart, \eta$ & & yes & yes \\ \hline 
	\hspace{0.05em} \ahpv                                                    &   & X   & Multi-D & -- & & yes & yes \\ \hline 
	\multirow{2}{*}{\hspace{0.05em} \dpcube~\cite{xiao2014dpcube}}           & \multirow{2}{*}{$\sim$} & \multirow{2}{*}{X}   & \multirow{2}{*}{Multi-D} & $\propPart=.5$,  & & \multirow{2}{*}{yes} & \multirow{2}{*}{yes} \\ 
	                            &  &  & & $n_p = 10$  & &  &  \\ \hline
	\multirow{2}{*}{\hspace{0.05em} \dawa~\cite{Li14Data-}}	& \multirow{2}{*}{X} & \multirow{2}{*}{X}   & \multirow{2}{*}{1D, 2D} & $\propPart=.25$, & & \multirow{2}{*}{yes} & \multirow{2}{*}{yes} \\ 
	 & & & & $b=2$ & & & \\ \hline
	\hspace{0.05em} \quadtree~\cite{Cormode11Differentially}                 & X & X   & 2D & c=10  & & no$^*$ & yes \\ \hline
		\hspace{0.05em} \UG~\cite{qardaji2013differentially}                     &   & X   & 2D & $c=10$  & scale & yes & yes \\ \hline
	\multirow{3}{*}{\hspace{0.05em} \AG~\cite{qardaji2013differentially}}                     & \multirow{3}{*}{$\sim$}     & \multirow{3}{*}{X}   & \multirow{3}{*}{2D} & $c = 10$,  & \multirow{3}{*}{scale} & \multirow{3}{*}{yes} & \multirow{3}{*}{yes} \\ 
	                     &      &    &  & $c_2 = 5$,  &  &  & \\ 
	                     &      &    &  & $\propPart=.5$  &  &  & \\ \hline
	\hspace{0.05em} \php~\cite{Acs2012compression}                           &   & X   & 1D & $\propPart=.5$  & & no & yes \\ \hline 
	\hspace{0.05em} \efpa~\cite{Acs2012compression}                          &   &     & 1D & -- & & yes & yes \\ \hline
	\hspace{0.05em} \structurefirst~\cite{xu2013differential}                &   & X   & 1D & $\propPart, k, F$ & scale & yes$^*$ & no \\ \hline 
	\end{tabular}
	}
	\caption{\label{tbl:algorithms} Algorithms evaluated in benchmark. Property column H indicates hierarchical algorithms and P indicates partitioning.  Parameters without assignments are ones that remain free.  Side information is discussed in \cref{sec:end-to-end}.  Analysis columns are discussed in \cref{sec:scale-eps} and \cref{sec:sub:var}.  Algorithm variants \mwemv and \ahpv are explained in \cref{sec:violators}.
	}
\end{table}

The algorithms evaluated in this paper are listed in \cref{tbl:algorithms}.  For each algorithm, the table identifies the dataset dimensionality it supports as well as other key properties (discussed further below).  In addition, it identifies algorithm-specific parameters as well as the possible use of ``side information'' (discussed in \cref{sec:principles}).  The table also summarizes our theoretical analysis, which is described in detail later (\cref{sec:scale-eps,sec:sub:var}).  Descriptions of individual algorithms are provided in \cref{sec:alg-desc}.

In this section, we categorize algorithms as either data-independent or data-dependent, and further highlight some key strategies employed, such as the use of hierarchical aggregations and partitioning.  In addition, we also illustrate how algorithm behavior is affected by properties of the input including dataset shape, scale, and domain size.

First, we describe a simple baseline strategy: release $\x$ after adding independent random noise to each count in $\x$.  To ensure differential privacy, the noise distribution is calibrated according to \cref{def:laplace_mechanism}, and each cell receives independent Laplace noise with a scale of $1/\epsilon$.  The limitation with this simple strategy is that when answering range queries, the variance in the answer increases linearly with the number of cells that fall within the range.  For large ranges, the error becomes intolerably high.  Thus, the performance of this approach depends critically on the {\em domain size}.

{\bf Hierarchical aggregation:} 
To mitigate the noise accumulation, several approaches not only obtain noisy counts for the individual cells in $\x$, but also obtain noisy estimates for the total count in hierarchically grouped subsets of cells.  Because each record is now being counted multiple times, the noise must be proportionally increased.  However, it has been shown in the 1D case that the increase is only logarithmic in the domain size and, further, any range query now requires summing only a logarithmic, rather than linear, number of noisy counts~\cite{hay2010boosting,xiao2011ireduct}.  For higher dimensions, the relative benefit of hierarchical aggregations diminishes~\cite{Qardaji13Understanding}.  
Algorithms employing hierarchical aggregations include \privelet~\cite{xiao2010differential}, \Hier~\cite{hay2010boosting}, \Hierb~\cite{Qardaji13Understanding}, \greedyH~\cite{Li14Data-}, \dawa~\cite{Li14Data-}, \AG~\cite{qardaji2013differentially}, \dpcube~\cite{xiao2014dpcube}, \quadtree~\cite{Cormode11Differentially}, \hybridtree~\cite{Cormode11Differentially}.

For many of the algorithms mentioned above, the noise added does not depend on the data. Thus, the performance is the same on all datasets of a given domain size.  Following~\cite{Li14Data-}, an algorithm whose error rate is the same for all possible datasets on a given domain is characterized as {\em data independent}.  \cref{tbl:algorithms} indicates which algorithms are data-independent.

It can be shown that all of the data independent algorithms studied here are instances of the matrix mechanism~\cite{Li:2010Optimizing-Linear,li2015matrix}, a generic framework that computes linear combinations of cell counts, adds noise, and then reconstructs estimates for the individual cells through linear transformations.  The algorithms differ in the choice of linear combination and thus experience different error rates on a given workload $\W$.  (Computing the optimal linear combination is computationally infeasible.) 

Thus, for data-independent algorithms, the only property of the input that affects performance is domain size.  We next describe strategies of data-dependent algorithms, whose performance is affected by dataset shape and scale.  

{\bf Partitioning:} One kind of data-dependent algorithm is one that partitions the data.  These algorithms  
reduce the error due to noise by only computing noisy aggregations of cell groups.  An example of such an approach is an equi-width histogram: the domain is partitioned into disjoint intervals (called ``buckets'') of equal size and a noisy count for each bucket is obtained.  To approximate the count for a cell within a bucket, the standard {\em assumption of uniformity} is employed.  The success of such a strategy depends critically on the {\em shape} of the dataset: If the distribution of cell counts within a bucket are nearly uniform, the noise is effectively reduced because each cell only receives a fraction of the noise; on the other hand, a non-uniform distribution implies high error due to approximating each cell by the bucket average.  Algorithms that partition include \AG, \hybridtree, \quadtree, \dpcube, \dawa, as well as \php~\cite{Acs2012compression}, \ahp~\cite{zhangtowards}, \structurefirst~\cite{xu2013differential}, \UG~\cite{qardaji2013differentially}.

Equi-width partitions are used by \quadtree, \UG, and \AG whereas other algorithms select the partition adaptively based on the characteristics of the data.  This is non-trivial because one must prove that the adaptive component of the algorithm satisfies differential privacy (and thus does not adapt too specifically to the input).  
Other algorithms, such as \mwem and \efpa~\cite{Acs2012compression}, adapt to the data as well, but use a different strategy than partitioning.

While one might expect that the performance of data-dependent algorithms is affected by dataset shape, a novel contribution of this paper is to show that it is also affected by {\em scale}.  The intuition for this can be seen by considering equi-width histograms.  Holding shape fixed, as scale increases, any deviations from a uniform distribution become magnified.  Thus, the relative benefit of partitioning cells into buckets diminishes.  While data adaptive approaches could, in principle, adjust with increasing scale, our findings and theoretical analysis show that this is not always the case.

{\bf Baselines:} \identity is a data-independent algorithm described earlier: it  adds noise to each cell count in $\x$, and is equivalent to applying the Laplace mechanism on the function that transforms $\db$ into $\x$.
\uniform is the second baseline, which uses its privacy budget to estimate the number of tuples (scale) and then produces a data-dependent estimate of the dataset by assuming uniformity.  It is equivalent to an equi-width histogram that contains a single bucket as wide as the entire domain.

With each baseline, the workload queries can be answered by summing the corresponding noisy cell counts.

\subsection{Empirical Comparisons in Prior Work}  \label{sec:prior_results}

We review prior studies with a focus on two things.  First, we characterize what is known about the state of the art for both 1- and 2D settings and identify gaps and inconsistencies.  Second, we look at the algorithmic strategies described in \cref{sec:algorithm_overview} and what is known about their effect on performance.

For the 1D setting, the study of Qardaji et al.~\cite{Qardaji13Understanding} suggests that \Hierb generally is the best data-independent algorithm, achieving lower error than many competing techniques, including \identity, \Hier, \privelet, \structurefirst, and others~\cite{li2015matrix,Cormode11Differentially,chaopvldb12,Yuan12Low-Rank}. However, their study does not include any data-dependent algorithms other than \structurefirst.  Thus, their study does not address whether \Hierb can outperform the data-dependent algorithms listed in \cref{tbl:algorithms}, a gap in knowledge that has yet to be resolved.

The results of Hardt et al.~\cite{hardt2012a-simple} suggest that \Hierb and the other data-independent techniques studied in \cite{Qardaji13Understanding} would be outperformed by \mwem for both 1- and 2D range queries.  In their study, \mwem almost always achieves an error rate that is less than a lower bound that applies to any instance of the matrix mechanism~\cite{li2015matrix,Li:2010Optimizing-Linear}, a generic class of differentially private algorithms that includes the top performers from the study of Qardaji et al.~\cite{Qardaji13Understanding} and all of the data-independent techniques listed in \cref{tbl:algorithms}.

Data-dependent and data-independent techniques are also compared in Li et al.~\cite{Li14Data-} and their findings are inconsistent with the results of Hardt et al.  Specifically, \mwem and other data-dependent techniques do not always outperform matrix mechanism techniques; for some ``hard'' datasets the error of \mwem is 2-$10\times$ higher than \privelet, one of the matrix mechanism instances included in their study.  The authors investigate the difficulty of datasets, but do not identify what properties of the input affect the performance of data-dependent algorithms.  In this paper, by explicitly controlling for scale and shape, we are able to resolve this gap and offer a much richer characterization of when data-dependent algorithms do well and explain the apparent contradiction between \cite{hardt2012a-simple} and \cite{Li14Data-}. 

For the 2D setting, we are not aware of any work that offers a comprehensive evaluation of all available techniques.  The 2D study of Qardaji et al.~\cite{Qardaji13Understanding} only compares \Hierb against \identity.  A second work by Qardaji et al.~\cite{qardaji2013differentially} compare their proposed algorithms with four techniques from the literature, and so is not as comprehensive as our study here.  Li et al.~\cite{Li14Data-} which appeared later did not compare to~\cite{qardaji2013differentially}.  Thus, another gap in the literature is a clear picture of the state of the art for 2D.

Finally, we highlight key findings regarding the algorithmic strategies discussed in \cref{sec:algorithm_overview}.  For hierarchical aggregation, Qardaji et al.~\cite{Qardaji13Understanding} carefully consider the effect of domain size and show that hierarchies can achieve significantly lower error than ``flat'' approaches like \identity but only when the domain size is sufficiently large.  Further, the minimum domain size increases exponentially with dimensionality.  
It must be at least 45 for 1D; at least $64^2$ for 2D, and at least $121^3$ for 3D.  For the data-dependent strategy of partitioning, a number of works show that it leads to low error for some datasets~\cite{Acs2012compression,Li14Data-,zhangtowards,xu2013differential,qardaji2013differentially,Cormode11Differentially,xiao2014dpcube}; however, we are not aware of any work that characterizes precisely when it improves upon data-independent techniques.

In summary, the existing literature has shown that data-dependent algorithms can sometimes do well and that domain size is an important factor in algorithm performance.  However, it has not clarified whether the best data-dependent algorithms can consistently outperform data-independent algorithms, and also leaves open the question of what factors, beyond domain size, affect  performance.


\section{Evaluation Principles}\label{sec:principles}
In this section we present  fundamental principles that our evaluation framework \dpbench (see \cref{sec:benchmark}) is based on. These are principles that are often not followed in prior work, and when unheeded, can result in (a) gaps/incomplete understanding of the performance of differentially private algorithms, or (b) unfair comparison of algorithms. Our principles are categorized into three classes: {\em diversity of inputs}, {\em end-to-end private algorithms} and {\em sound  evaluation of outputs}. 

\subsection{Diversity of Inputs}
The first set of principles pertain to key inputs to the algorithm: the private data vector $\x$ and the differential privacy parameter $\epsilon$.  The intent of the principles is to ensure that algorithms are evaluated across a diverse set of inputs in order to provide a data owner with a comprehensive picture of algorithm performance.
Almost all work adheres to \cref{pr:eps}, but we state it for completeness.
\begin{principle}[$\epsilon$ Diversity]\label{pr:eps}
Evaluate algorithms under different $\epsilon$ values.
\end{principle}

In addition, most empirical work compares algorithms on several datasets.  We identify three critical characteristics of the dataset -- {\em domain size}, {\em scale}, and {\em shape}, as described in \cref{sec:data_model} -- each of which significantly impacts the behavior of privacy algorithms.  A novelty of our evaluation approach is to study the sensitivity of algorithm error to each of these properties (while holding the other two constant).

\begin{principle}[Scale Diversity] \label{pr:scale-div}
Evaluate algorithms using datasets of varying scales.
\end{principle}

\begin{principle}[Shape Diversity] \label{pr:shape-div}
Evaluate algorithms using datasets of varying shapes.
\end{principle}

\begin{principle}[Domain Size Diversity]\label{pr:domain-div}
Evaluate algorithms using datasets (and workloads) defined over varying domain sizes.
\end{principle}

We cannot anticipate in advance what dataset the algorithm might be run on, and what settings of $\epsilon$ a data owner may select. A sound evaluation should be able to predict algorithm performance on as many values of $\epsilon$, scale, shape and domain size as possible. This is especially important for data-dependent algorithms as their performance changes not only quantitatively but also {\em qualitatively} when changing these settings.

Our empirical comparisons of differentially private algorithms for range queries (\cref{sec:evaluation}), in both the one- and two-dimensional cases, show that the algorithms offering least error vary depending on the values of $\epsilon$, scale, shape, and domain size.  Because current algorithms offer no clear winner, evaluation on diverse inputs remains very important.

\stitle{Remarks on Prior Work} 
Empirical evaluations in prior work only compared algorithms on a single shape (e.g., \cite{Cormode11Differentially,xiao2014dpcube}), or a single domain size (e.g., \cite{Li14Data-}), or on multiple datasets without controlling for each of the input characteristics (e.g. \cite{Li14Data-, Acs2012compression, xu2013differential, zhangtowards,hardt2012a-simple}). These result in some of the gaps/inconsistencies in algorithm performance discussed in \cref{sec:prior_results}. 

\subsection{End-to-End Private Algorithms}  \label{sec:end-to-end}

The next three principles require that algorithms provide equivalent privacy guarantees to ensure fair comparisons.  While all published algorithms include a proof of differential privacy, some actions taken during deployment or evaluation of the algorithm can undermine that guarantee in subtle ways.

\begin{principle}[Private pre- and post-processing]  \label{pr:preprocess}
Any computation on the input dataset must be accounted for in the overall privacy guarantee of the algorithm being evaluated.
\end{principle}

Any computation that comes before or after the execution of the private algorithm must also be considered in the analysis of privacy.  This includes pre-processing steps such as data cleaning as well as any post-processing (e.g., selecting the ``best'' output from among a set of outputs).  If necessary, a portion of the user's $\epsilon$ privacy budget can be devoted to these tasks.

\begin{principle}[No Free Parameters]  \label{pr:param}
Include in every algorithm definition a data-independent or differentially private method for setting each required parameter.
\end{principle}

An important special case of preprocessing is parameter selection. Many published algorithms have additional parameters (beyond $\epsilon$) whose setting can have a large impact on the algorithm's behavior.  Borrowing a notion from physics, a parameter is considered {\em free} if it can be adjusted to fit the data -- in other words, the value of the parameter that optimizes algorithm performance is data-dependent.
Free parameters are problematic if they are later set by tuning them to the input data.  In a production environment, it risks violating the differential privacy guarantee.  In a research environment, it may impede a fair evaluation of algorithm performance, as some algorithms may be more sensitive to tuning than others. A principled methodology of parameter tuning is proposed in \cref{sec:freeparam-repair} and evaluated in \cref{sec:findings_tuning}.

\begin{principle}[Knowledge of Side Information] \label{pr:sideinfo} 
Use \\public knowledge of side information about the input dataset (e.g., scale) consistently across algorithms.
\end{principle}
Some algorithms assume that certain properties of the input dataset (e.g., scale) are considered public and use this during the execution of the algorithm.  Other algorithms do not use this information, or allocate some privacy budget to estimate it, leading to inconsistent comparisons.  In general, if the availability of ``side information'' is not accounted for properly, it can weaken the privacy guarantee.

\stitle{Remarks on Prior Work} 
Some work on differentially private machine learning pre-processes the input dataset using non-private data transformation algorithms~\cite{chaudhuri2013a-stability-based,fredrikson14usenix}.  In other work, parameters were tuned on the same datasets on which the algorithm's performance is later evaluated~\cite{hardt2012a-simple, zhangtowards,xu2013differential,Cormode11Differentially,xiao2011ireduct}.  Moreover, a few algorithms~\cite{hardt2012a-simple,xu2013differential,qardaji2013differentially} assume that the scale of the dataset is known.

\subsection{Sound Evaluation of Outputs}

The last four principles pertain to algorithm evaluation.

\begin{principle}[Measurement of variability]\label{pr:variation}
Algorithm output should be evaluated by expected error as well as a measure of error variability.
\end{principle}

The error of a differentially private algorithm is a random variable.  Most evaluations report its expectation or estimate it empirically.  However, when the algorithm is deployed, a user will apply it to private data and receive a single output.  Thus, it is also important to measure variability of error around its expectation. Note that this is different from reporting error bars representing standard error, which measures the uncertainty in the estimate of mean error and shrinks with increasing trials.  Instead, we want to measure how much the error of any single output might deviate from the mean.

\begin{principle}[Measurement of Bias]\label{pr:bias}
Algorithm output should be evaluated for bias in the answer.
\end{principle}

In addition to the expected error, a sound evaluation should also check whether the algorithm's output is biased. For instance, the Laplace mechanism is unbiased; i.e., the expected value of the output of the Laplace mechanism is equal to the original input. An attractive property of unbiased algorithms like the Laplace mechanism is that they are consistent -- as $\epsilon$ tends to $\infty$, the error tends to $0$.
Recent data-dependent algorithms incur less error than the Laplace mechanism by introducing some bias in the output. Understanding the bias is important; we show in this paper (\cref{sec:sub:var}) that some algorithms incur a bias (and therefore incur error) even when $\epsilon$ tends to infinity.

\begin{principle}[Reasonable privacy and utility] \label{pr:reasonable}
{\color{white}.} \\Evaluate algorithms for input settings that result in reasonable privacy and utility.
\end{principle}

Under extreme settings of the input parameters, one algorithm may outperform another, but both algorithms may produce useless results.  Similarly, it is not meaningful to make relative comparisons of algorithm performance when a reasonable privacy guarantee is not offered.

\stitle{Remarks on Prior Work} 
Almost all prior empirical work on differential privacy (with the exception of those that prove theoretical bounds on the error in terms of $(\alpha, \delta)$-usefulness~\cite{blum2013learning}) focus on mean error and not its variability. Algorithm bias is seldom analyzed.  While most algorithms are explicitly compared to a baseline like \identity (\cite{Cormode11Differentially} is an exception), none of algorithms have been compared with a data-dependent baseline  like \uniform.

\section{DPBench Evaluation Framework}  \label{sec:benchmark}

In this section, we describe \dpbench, a framework for evaluating the accuracy of differentially private algorithms. \dpbench's goal is to formulate a set of standards for empirical evaluation that adhere to the principles from \cref{sec:principles}.
The framework can be applied to a number of analysis tasks.  Because  
different analysis tasks require the use of disparate algorithms, it is necessary to instantiate distinct benchmarks for each kind of task.\footnote{This is akin to the various TPC benchmarks for evaluating database performance.} \dpbench aims to provide a common framework for developing these benchmarks.

We define a benchmark as a 9-tuple $\{{\cal T}, {\cal W}, {\cal D}, {\cal M}, {\cal L}, {\cal G}, {\cal R}, {\cal E}_M, {\cal E}_I\}$. ${\cal T}$ denotes the task that the benchmark is associated with. ${\cal W}$ denotes the set of query workloads that are representative for the task ${\cal T}$. For instance, if ${\cal T}$ is the task of ``answering one-dimensional range queries'', ${\cal W}$ could contain a workload of random range queries. Note that one could use more than one workload to evaluate algorithms for a task. Evaluating differentially private algorithms is typically done on publicly available datasets, which we refer to as a {\em source} data instance $I$. ${\cal D}$ denotes a set of such source datasets.  ${\cal M}$ lists the set of algorithms that are compared by the benchmark. ${\cal L}$ denotes {\em loss functions} that are used to measure the error (or distance) between $\y = \W \x$, the true answer to a workload,  and $\esty$, the output of a differentially private algorithm in ${\cal M}$. An example of a loss function that we will consistently use in this paper is the $L_2$ norm of the difference between $\y$ and $\esty$; i.e. error is $\Ltwo{\W\x - \esty}$. 

We call ${\cal W}, {\cal D}, {\cal M}$ and ${\cal L}$ {\em task specific components} of a \dpbench benchmark, since they vary depending on the task under consideration. We discuss them in detail in \cref{sec:setup}. On the other hand the remaining components ${\cal G}, {\cal R}, {\cal E}_M, {\cal E}_I$ are task independent, and help benchmarks to adhere to the principles in \cref{sec:principles}. ${\cal G}$ is a {\em data generator} that ensures diversity of inputs (\crefrange{pr:eps}{pr:domain-div}). ${\cal R}$ defines a set of {\em algorithm repair functions} that ensure that the algorithms being evaluated satisfy end-to-end privacy (\crefrange{pr:preprocess}{pr:sideinfo}). In particular, \dpbench provides a technique to automatically train free parameters of an algorithm. ${\cal E}_M$ and ${\cal E}_I$ denote standards for measuring and interpreting empirical error. These help adhere to \crefrange{pr:variation}{pr:reasonable} by explicitly defining standards for measuring variation and bias, declaring winners among algorithms, and defining baselines for interpreting error. Next, we describe these in detail, and on the way define two novel theoretical properties {\em scale-epsilon exchangeability} and {\em consistency}.

\subsection{Data Generator {\cal G}}\label{sec:datagen}

Let $I \in {\cal D}$ be a source dataset with true domain $D_I = (A_1 \times \ldots A_k)$ and true scale $s_I$. The data generator ${\cal G}$ takes as input a scale $m$, and a new domain $D$, which may be constructed by (i) considering a subset of attributes $\bb$, and (ii) coarsening the domain of attributes in $\bb$. Note that $m$ and $D$ are possibly different from the true scale $s_I$ and domain $D_I$. ${\cal G}$ outputs a new data vector $\x$ with the new scale $m$ and new domain $D$ as follows. First, ${\cal G}$ computes a histogram $\x'$ of counts on the new domain based on $I$. Then, ${\cal G}$ normalizes $\x'$ by its true scale to isolate the {\em shape} $\p$ of the source data on the new domain $D$.  ${\cal G}$ generates $\x$ by sampling $m$ times (with replacement) from $\p$. If we set the desired scale to the true scale $s_I$ we will generate a data vector $\x$ that is approximately the same as the original (modulo sampling differences).

The output vector $\x$ of ${\cal G}$ is used as input to the privacy algorithms.  Diversity of scale (\cref{pr:scale-div}) is achieved by varying $m$.  Diversity of domain size (\cref{pr:domain-div}) is achieved by varying $D$.  Shape is primarily determined by $I$ but also impacted by the domain definition, $D$, so diversity of shape (\cref{pr:shape-div}) requires a variety of source datasets and coordinated domain selection.	

The data generator ${\cal G}$ outputs datasets that help evaluate algorithms on each of the data characteristics shape, scale and domain size while keeping the other two properties constant. Without such a data generator, the effects of shape and scale on algorithm error can't be distinguished, thus prior work was unable to correctly identify features of datasets that impacted error. For a fixed shape, increasing scale offers a stronger ``signal'' about the underlying properties of the data. In addition, this sampling strategy always results in datasets with integral counts (simply multiplying the distribution by some scale factor may not).

\subsection{Algorithm Repair Functions ${\cal R}$}\label{sec:repair}

While automatically verifying whether an algorithm performs pre- or post-processing that violates differential privacy is out of the scope of this benchmark, we discuss two {\em repair functions} to help adhere to the free parameters and side information principles (\cref{pr:param,pr:sideinfo}, respectively).

\stitle{Learning Free Parameter Settings $R_{param}$}\label{sec:freeparam-repair}
We present a first cut solution to handling free parameters. Let $\alg_{\theta}$ denote a private algorithm $\alg$ instantiated with a vector of free parameters $\theta$. 
We use a separate set of datasets to tune these parameters; these datasets will {\em not} be used in the evaluation.
Given a set of training datasets ${\cal D}_{train}$, we apply data generator ${\cal G}$ and learn a function $R_{param}: (\epsilon, \scale{\x}, n) \rightarrow \theta$ that given $\epsilon$, scale, and the domain size, outputs parameters $\theta$ that result in the lowest error for the algorithm. Given this function, the benchmark extends the algorithm by  adaptively selecting parameter settings based on scale and epsilon. If the parameter setting depends on scale, a part of the privacy budget is spent estimating scale, and this introduces a new free parameter, namely the budget spent for estimating scale. The best setting for this parameter can also be learned in a similar manner. 

\stitle{Side Information $R_{side}$}\label{sec:sideinfo-repair}
Algorithms which use non-private side information can typically be corrected by devoting a portion of the privacy budget to learning the required side information, then using the noisy value in place of the side information.  
This process is difficult to automate but may be possible with program analysis in some cases. 
This has the side-effect of introducing a new parameter which determines the fraction of the privacy budget to devote to this component of the algorithm, which in turn can be set using our learning algorithm  (described above).

\subsection{Standards for Measuring Error ${\cal E}_M$}\label{sec:measuring}
\stitle{Error}
\dpbench uses {\em scaled average per-query error} to quantify an algorithm's error on a workload. 
\begin{definition}[Scaled Average Per-Query Error]\label{def:sample-error}
Let $\W$ be a workload of $q$ queries, $\x$ a data vector and $s = \scale{\x}$ its scale. Let $\esty = \alg(\x, \W, \epsilon)$ denote the noisy output of algorithm $\alg$. Given a loss function $L$, we define {\em scale average per-query error} as $\frac{1}{s\cdot q} L(\esty, \W\x)$.
\end{definition}

By reporting scaled error, we avoid considering a fixed absolute error rate to be equivalent on a small scale dataset and a large scale dataset. For example, for a given workload query, an absolute error of 100 on a dataset of scale 1000 has very different implications than an absolute error of 100 for a dataset with scale 100,000. In our scaled terms, these common absolute errors would be clearly distinguished as 0.1 and 0.001 scaled error. Accordingly, scaled error can be interpreted in terms of population percentages. Using scaled error also helps us define the scale-epsilon exchangeability property in \cref{sec:scale-eps}.

Considering per-query error allows us to compare the error on different workloads of potentially different sizes. For instance, when examining the effect of domain size $n$ on the accuracy of algorithms answering the identity workload, the number of queries $q$ equals $n$ and hence would vary as $n$ varies.

\stitle{Measuring Error}
The error measure (\cref{def:sample-error}) is a random variable.  We can estimate properties such as their mean and variance through repeated executions of the algorithm.  In addition to comparing algorithms using mean error, \dpbench also compares algorithms based on the 95 percentile of the error. This takes into account the variability in the error (adhering to \cref{pr:variation}) and might be an appropriate measure for a ``risk averse'' analyst who prefers an algorithm with reliable performance over an algorithm that has lower mean performance but is more volatile. Means and 95 percentile error values are computed on multiple independent repetitions of the algorithm over multiple samples $\x$ drawn from the data generator to ensure high confidence estimates.

\dpbench also identifies algorithms that are competitive for state-of-the-art performance for each setting of scale, shape and domain size. An algorithm is {\em competitive} if it either (i) achieves the lowest error, or (ii) the difference between its error and the lowest error is not statistically significant. Significance is assessed using a unpaired t-test with a Bonferroni corrected $\alpha = 0.05/(n_{algs}-1)$, for running $(n_{algs}-1)$ tests in parallel. $n_{algs}$ denotes the number of algorithms being compared. Competitive algorithms can be chosen both based on mean error (a ``risk neutral'' analyst) and 95 percentile error (a ``risk averse'' analyst).  

\subsection{Standards for Interpreting Error ${\cal E}_I$}\label{sec:interpreting}

When drawing conclusions from experimental results, \cref{pr:reasonable} should be respected.  One way to assess reasonable utility is by comparing with appropriate baselines. 

We use \identity and \uniform (described in \cref{sec:algorithm_overview}) as upper-bound baselines. Since \identity is a straightforward application of the Laplace mechanism, we expect a more sophisticated algorithm to provide a substantial benefit over the error achievable with \identity. Similarly, \uniform learns very little about $\x$, only its scale.  An algorithm that offers error rates comparable or worse than \uniform is unlikely to provide useful information in practical settings. Note that there might be a few settings where these baselines can't be beaten (e.g., when shape of $\x$ is indeed uniform). However, an algorithm should be able to beat these baselines in a majority of settings.

\subsection{Scale Epsilon Exchangeability} \label{sec:scale-eps}
We describe next a novel property of most algorithms due to which increasing $\epsilon$ and scale both have an equivalent effect on algorithm error.  We say an algorithm is scale-epsilon {\em exchangeable} if  increasing scale by a multiplicative factor $c$ has a precisely equivalent effect, on the scaled error, as increasing epsilon by a factor of $c$. Exchangeability has an intuitive meaning: to get better accuracy from an exchangeable algorithm, a user can either acquire more data or, equivalently, find a way to increase their privacy budget.  Moreover, for such algorithms, diversity of scale (\cref{pr:scale-div}) implies diversity of $\epsilon$ (\cref{pr:eps}).
\vspace{-2mm}
\begin{definition}[Scale-epsilon exchangeability]  \label{def:exchangeability}
Let $\p$ be a shape and $\W$ a workload. If $\x_1 = m_1\p$ and $\x_2 = m_2\p$, then algorithm $\alg$ is scale-epsilon exchangeable if $error(\alg(\x_1,\W,\epsilon_1))=error(\alg(\x_2,\W,\epsilon_2))$ whenever $\epsilon_1m_1=\epsilon_2m_2$.  
\end{definition}

We prove in \cref{sec:scale-eps-proofs} that all but one of the algorithms we evaluate in this paper are scale-epsilon exchangeable.  The exception is \structurefirst \cite{xu2013differential}, which is not exchangeable but empirically behaves so.  An important side-effect of this insight is that in empirical evaluations one can equivalently vary scale rather than $\epsilon$ if the algorithm is proven to satisfy scale-$\epsilon$ exchangeability.

\subsection{Consistency}

As $\epsilon$ increases, the privacy guarantee weakens.  A desirable property of a differential privacy algorithm is that its error decreases with increasing $\epsilon$ and ultimately disappears as $\epsilon \rightarrow \infty$.  We call this property consistency. Algorithms that are not consistent are inherently biased. 
\vspace{-2mm}
\begin{definition}[Consistency] \label{def:consistency}
An algorithm satisfies consistency if the error for answering any workload tends to zero as $\epsilon$ tends to infinity.
\end{definition}


\section{Experimental Setup} \label{sec:setup}

In this section we use our benchmark framework to define concrete benchmarks for 1- and 2-D range queries.  For each benchmark, we precisely describe the task-specific components of the benchmark $({\cal D}, {\cal W}, {\cal M}, {\cal L})$, namely the datasets, workloads, algorithms and loss functions with the goal of providing a reproducible standard of evaluation.  Experimental results appear in \cref{sec:evaluation}.

\subsection{Datasets}

\cref{tbl:datasets} is an overview of the datasets we consider.  11 of the datasets have been used to evaluate private algorithms in prior work.  We have introduced \hl{14} new datasets to increase shape diversity. Datasets are described in \cref{sec:dataset-desc}.  The table reports the original scale of each dataset.  We use the data generator ${\cal G}$ described before to generate datasets with scales of $\{10^3, 10^4, 10^5, 10^6, 10^7, 10^8\}$.  

The maximum domain size is 4096 for 1D datasets and $256 \times 256$ for 2D datasets.  The table also reports the fraction of cells in $\x$ that have a count of zero at this domain size.  By grouping adjacent buckets, we derive versions of each dataset with smaller domain sizes.  For 1D, the domain sizes are $\set{256, 512, 1024, 2048}$; for 2D, they are $\set{ 32 \times 32, 64 \times 64, 128 \times 128, 256 \times 256 }$.

For each scale and domain size, we randomly sample 5 data vectors from our data generator and for each data vector, we run the algorithms 10 times.

\begin{table}	
		\centering
	{\scriptsize
	\begin{tabular}{| l | l | l | l |} \hline
	{Dataset name} & {Original} & \% {Zero}   & {Previous} \\ 
	             & {Scale}      &  {Counts}   & {works}                \\ \hline \hline
	\multicolumn{4}{|l|}{\em 1D datasets} \\ \hline
	\adult      & 32,558    & 97.80\%  & \cite{hardt2012a-simple,Li14Data-} \\
	\hepph      & 347,414    & 21.17\% & \cite{Li14Data-} \\
	\income     & 20,787,122 & 44.97\% & \cite{Li14Data-} \\
	\medcost    & 9,415      & 74.80\% & \cite{Li14Data-} \\
	\nettrace   & 25,714     & 96.61\% & \cite{hay2010boosting,Acs2012compression,zhangtowards,xu2013differential} \\
	\patent     & 27,948,226 & 6.20\%  & \cite{Li14Data-} \\
	\searchlogs & 335,889    & 51.03\% & \cite{hay2010boosting,Acs2012compression,zhangtowards,xu2013differential} \\ 	\bidsfj &  1,901,799 &  0\% &     new \\
	\bidsfm & 2,126,344  &  0\% &     new \\
	\bidsall &  7,655,502 &  0\% &     new \\
	\mdsalary & 135,727 &  83.12\% &     new \\
	\mdsalaryfa & 100,534  &  83.17\% &     new \\
	\lcreqfa &  3,737,472 & 61.57\%  &     new \\
	\lcreqfb &  198,045 &  67.69\% &     new \\
	\lcreqall &  3,999,425 &  60.15\% &     new \\
	\lcdtirfa &  3,336,740 &  0\% &     new \\
	\lcdtirfb &  189,827 & 11.91\%  &     new \\
	\lcdtirall &  3,589,119 &  0\% &     new \\ \hline

	\multicolumn{4}{|l|}{\em 2D datasets} \\ \hline
	\TDbeijingtaxis &  4268780 & 78.17\% & \cite{he2015dpt} \\
	\TDbeijingtaxie &  4268780 & 76.83\% & \cite{he2015dpt} \\
	\TDcheckin & 6442863 & 88.92\% & \cite{qardaji2013differentially} \\ 
		\TDadult &  32561 & 99.30\% & \cite{hardt2012a-simple}  \\
	\TDcabspottings &  464040 & 95.04\% & \cite{epfl-mobility-20090224} \\
	\TDcabspottinge &  464040 & 97.31\% & \cite{epfl-mobility-20090224}  \\
	\TDmdsalary & 70526 & 97.89\% & new \\
	\TDloan &  550559 & 92.66\% & new \\
	\TDstoke &  19435 & 79.02\% & new \\

		\hline
	\end{tabular}
	}
	\caption{\label{tbl:datasets} Overview of datasets.} 
\end{table}	

\subsection{Workloads \& Loss Functions}

We evaluate our algorithms on different workloads of range queries.  For 1D, we primarily use the \prefixworkload workload, which consists of $n$ range queries $[1,i]$ for each $i \in [1, n]$. The \prefixworkload workload has the desirable property that any range query can be derived by combining the answers to exactly two queries from \prefixworkload. For 2D, we use 2000 random range queries as an approximation of the set of all range queries.

As mentioned in \cref{sec:benchmark}, we use $L_2$ as the loss function.

\subsection{Algorithms}

The algorithms compared are listed in \cref{tbl:algorithms}.  The dimension column indicates what dimensionalities the algorithm can support; algorithms labeled as Multi-D are included in both experiments.  Complete descriptions of algorithms appear in \cref{sec:alg-desc}.

\subsection{Resolving End-to-End Privacy Violations}  \label{sec:violators}

\noindent{\bf Inconsistent side information:} Recall that \cref{pr:sideinfo} prevents the inappropriate use of private side information by an algorithm.  \structurefirst, \mwem, \UG, and \AG assume the true scale of the dataset is known.  
{ To gauge any potential advantage gained from side information, we evaluated algorithm variants where a portion of the privacy budget, denoted $\propPart_{total}$, is used to noisily estimate the scale. 
To set $\propPart_{total}$, we evaluated the algorithms on synthetic data using varying values of $\propPart_{total}$.  In results not shown, we find that setting $\propPart_{total} = 0.05$ achieves reasonable performance.  For the most part, the effect is modestly increased error (presumably due to the reduced privacy budget available to the algorithm).  However, the error rate of \mwem increases significantly at small scales (suggesting it is benefiting from side information).  In \cref{sec:evaluation}, all results report performance of the original unmodified algorithms. While this gives a slight advantage to algorithms that use side information,
 it also faithfully represents the original algorithm design.
}

\vspace{2mm}
\noindent{\bf Illegal parameter setting}
\cref{tbl:algorithms} shows all the parameters used for each algorithm.  Parameters with assignments have been set according to fixed values provided by the authors of the algorithm.  Those without assignments are free parameters that were set in prior work in violation of \cref{pr:param}.  

For \mwem, the number of rounds $T$ is a free variable that has a major impact on \mwem's error.  According to a pre-print version of~\cite{hardt2012a-simple}, the best performing value of $T$ is used for each task considered.  For the one-dimensional range query task considered, $T$ is set to 10.  Similarly, for \ahp, two parameters are left free: $\eta$ and $\propPart$ which were tuned on the input data.   

To adhere to \cref{pr:param}, we use the learning algorithm for setting free parameters (\cref{sec:freeparam-repair}) to set free parameters for \mwem and \ahp. In our experiments, the extended versions of the algorithms are denoted \mwemv and \ahpv.  
{ In both cases, we train on shape distributions synthetically generated from power law and normal distributions. 

For \mwemv we determine experimentally the optimal
  $T \in [1, 200]$ for a range of $\epsilon$-scale products.  
}
 As a result, $T$ varies from 2 to 100 over the range of scales we consider.  This improves the performance of \mwem (versus a static setting of $T$) and does not violate our principles for private parameter setting.  The success of this method is an example of data-independent parameter setting.

\structurefirst requires three parameters: $\propPart$, $k$, $F$.  Parameter $F$ is free only in the sense that it is a function of scale, which is side information (as discussed above).  For $k$, the authors propose a recommendation of $k = \lceil\frac{n}{10}\rceil$ after evaluating various $k$ on input datasets.  Their evaluation, therefore, did not adhere to \cref{pr:param}.  However, because our evaluation uses different datasets, we can adopt their recommendation without violating \cref{pr:param} -- in effect, their experiment serves as a ``training phase'' for ours.  Finally, $\propPart$ is a function of $k$ and $F$, and thus is no longer free once those are fixed.

\subsection{Implementation Details}  \label{sec:impl_details}

We use implementations from the authors for \dawa, \greedyH, \Hier, \php, \efpa, and \structurefirst.  
We implemented \mwem, \Hierb, \privelet, \ahp, \dpcube, \AG, \UG and \quadtree ourselves in Python.  
All experiments are conducted on Linux machines running CentOS 2.6.32 (64-bit) with 16 Intel(R) Xeon(R) CPU E5-2643 0 @ 3.30GHz with 16G or 24G of RAM.

\section{Experimental Findings} \label{sec:evaluation}

We present our findings for the 1D and 2D settings.  
For the 1D case, we evaluated \hl{14} algorithms on 18 different datasets, each at 6 different scales and 4 different domain sizes.  For 2D, we evaluated \hl{14} algorithms on 9 different datasets, each at 6 scales and 4 domain sizes.
In total we evaluated 
7,920 different experimental configurations. 
The total computational time was $\approx22$ days of single core computation.  Given space constraints, we present a subset of the experiments that illustrate  key findings.

In results not shown, we compare the error of data-independent algorithms (\Hierb, \greedyH, \Hier, \privelet, \identity) on various range query workloads.  The error of these algorithms is, of course, independent of the input data distribution, and the absolute error is independent of scale.  (Or, in scaled-error terms, it diminishes predictably with increasing scale for all algorithms).  The error for this class of techniques is  well-understood because it is more amenable to analysis than data-dependent techniques.  
In order to simplify the presentation of subsequent results, 
we {\em restrict our attention to two data-independent algorithms:} \identity, which is a useful baseline, and \Hierb, which is the best performing data-independent technique for 1D and 2D range query workloads with larger ranges.

\begin{figure*}[!t]
\centering
	\begin{subfigure}[b]{.49\textwidth}
		\includegraphics[width=\textwidth]{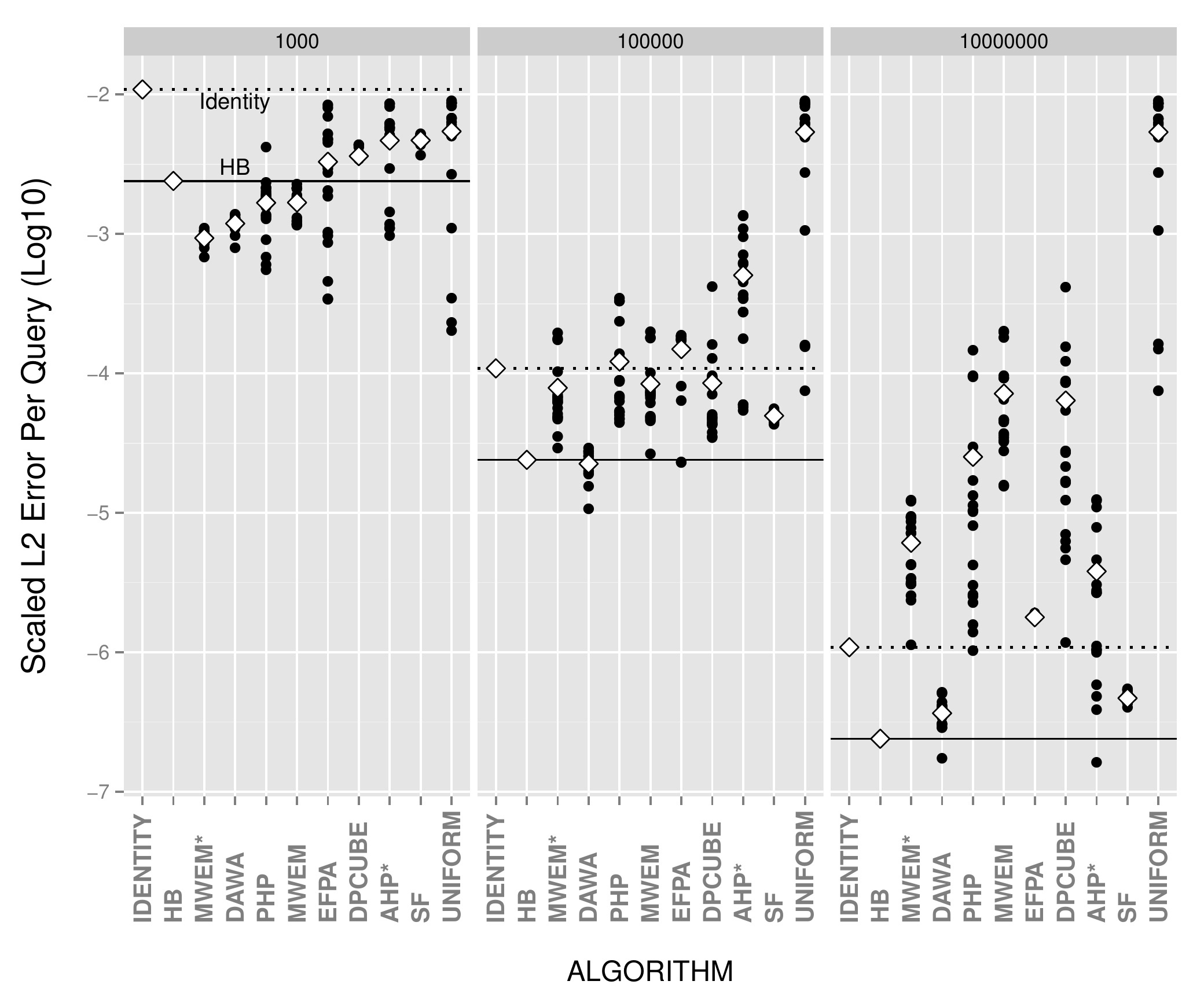}
		\caption{ \label{fig:shape-range-1D} 1D case: domain=4096, workload=Prefix}
	\end{subfigure}
	\begin{subfigure}[b]{.49\textwidth}
		\includegraphics[width=\textwidth]{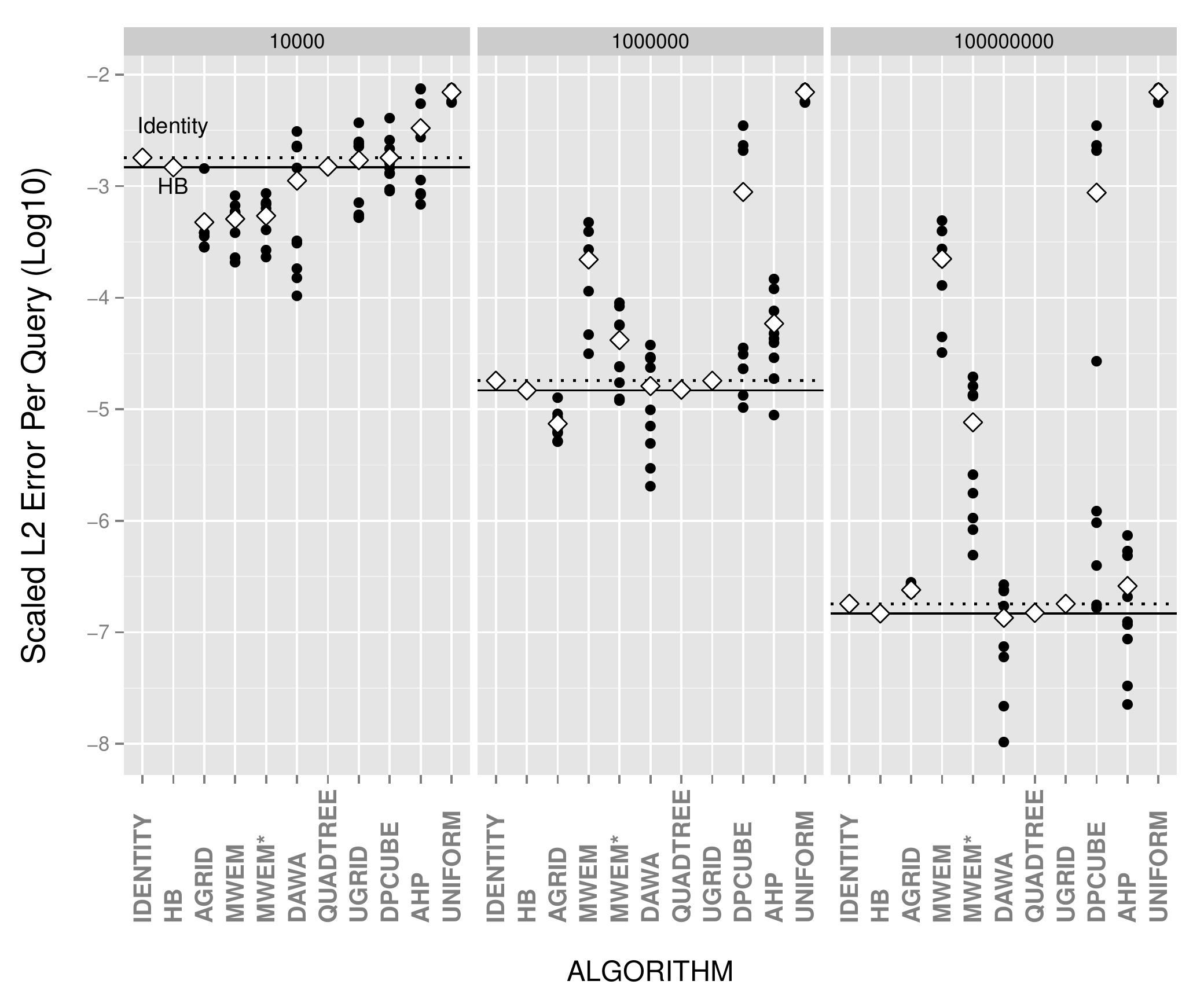}
		\caption{ \label{fig:shape-range-2D} 2D case: domain=128x128, workload=2000 random range queries}
	\end{subfigure}
	\caption{ \label{fig:shape-range} These figures show how average error rates of algorithms vary across the datasets described in \cref{tbl:datasets}, for both the 1D case (a) and 2D case (b).  Each black dot  ($\bullet$) corresponds to average error for one dataset; white diamonds ($\diamond$) correspond to the mean error over all datasets.  Horizontal lines are used to emphasize the error rates of the best and worst data-independent algorithms.}
\end{figure*}

\subsection{Impact of Input Properties on Performance}
	
As advocated in \crefrange{pr:eps}{pr:domain-div}, we now systematically examine the key properties of algorithm input that can affect performance: scale, domain size, shape, and $\epsilon$.  To aid in readability of figures, we report mean error without error bars but discuss variability over trials in detail in \cref{sec:sub:var}.

\noindent{\bf Scale:} \cref{fig:shape-range-1D,fig:shape-range-2D} compare algorithm performance at a fixed epsilon ($\epsilon=0.1$) and increasing scales. Because of the scale-epsilon exchangeability property (\cref{def:exchangeability}), which almost all algorithms satisfy,\footnote{\structurefirst is not exchangeable, but it empirically behaves so.} the figures look identical if we fix scale and show increasing $\epsilon$.  For example, the first panel of \cref{fig:shape-range-1D} reports error for scale=1000 and $\epsilon=.1$ but the plotted error rates and relationships are identical to those for scale=$10,000$ and $\epsilon=.01$, scale=100,000 and $\epsilon=.001$, etc.  \cref{fig:shape-range-1D,fig:shape-range-2D} also capture variation across datasets (each black dot is the error of an algorithm on a particular dataset at the given scale; the white diamond is the algorithm's error averaged over all datasets).

\begin{finding}[Benefits of Data-Dependence\\] \label{finding:data-dep}
Data-dep\-endence can offer significant improvements in error, especially at smaller scales or lower epsilon values.  
\end{finding}

The figures show that at small scales, almost all data-dependent algorithms outperform the best data-independent algorithm, \Hierb, on at least one dataset (black dot), sometimes by an order of magnitude.  Further, the best data-dependent algorithms beat the best data-independent algorithms on average across all datasets (white diamonds), often by a significant margin. (For 1D, at the smallest scale, it is as much as a factor of 2.47; for 2D, it is up to a factor of 3.10.)

\begin{finding}[Penalty of Data Dependence] 
Some data-dependent algorithms break down at larger scales (or higher $\epsilon$).
\end{finding}

Despite the good performance of data-dependent algorithms at small scales, at moderate or larger scales, many data-dependent algorithms have error significantly larger than \identity, the data-independent baseline, which indicates quite disappointing performance.  
Even for the best data-dependent methods, the comparative advantage of data-dependence decreases and, for the most part, eventually disappears as scale increases.  At the largest scales in both 1D and 2D, almost all data-dependent algorithms are beaten by \Hierb.  At these scales, the only algorithms to beat \Hierb are \dawa and \ahpv.  In the 1D case, this only happens on a couple of datasets; in the 2D case, it happens on about half the datasets.

\noindent{\bf Shape:} 
In \cref{fig:shape-range-1D,fig:shape-range-2D}, the effect of shape is evident in the spread of black dots for a given algorithm.  

\begin{finding}[Variation of Error with Data Shape] \label{finding:shape}
\qquad\: Algorithm error varies significantly with dataset shape, even for a fixed scale, and algorithms differ on the dataset shapes on which they perform well.
\end{finding}

Even when we control for scale, each data-dependent algorithm displays significant variation of error across datasets.  In the 1D case (\cref{fig:shape-range-1D}) at the smallest scale, \efpa's error varies by a factor of 24.88 from lowest error to highest.  Across all scales, algorithms like \mwem, \mwemv, \php, \efpa, and \ahpv show significantly different error rates across shape.  \structurefirst and \dawa offer the lowest variation across dataset shapes.  In the 2D case (\cref{fig:shape-range-2D}), we also see significant variation at all scales.  (Interestingly, \dawa exhibits high variation in 2D.)

While \cref{fig:shape-range-1D,fig:shape-range-2D} show variation, one cannot compare algorithms across shapes.  To do that, we use \cref{fig:shape1D,fig:shape2D}, where scale is fixed and shape is varied.  (For readability, we show only a subset of algorithms as explained in the caption for \cref{fig:shape-and-domain}.)  These figures show that the comparative performance of algorithms varies across shape.  \cref{fig:shape1D} shows that four different algorithms achieve the lowest error on some shape.  In addition, a dataset that is ``easy'' for one algorithm appears ``hard'' for another: witness \efpa is able to exploit the shape of \bidsall yet suffers on \nettrace; the opposite is true for \mwem.  
In \cref{fig:shape2D}, the shapes where \dawa performs poorly, \AG does quite well.
These results suggest that data-dependent algorithms are exploiting different properties of the dataset.  

The unpredictable performance across data shape is a challenge in deployment scenarios because privacy algorithms cannot be selected by looking at the input data and because data-dependent algorithms rarely come with public error bounds.  We discuss these issues further in \cref{sec:discussion}.

\begin{figure*}[t]
   \begin{subfigure}[b]{0.33\textwidth}    \centering
		\includegraphics[width=\textwidth]{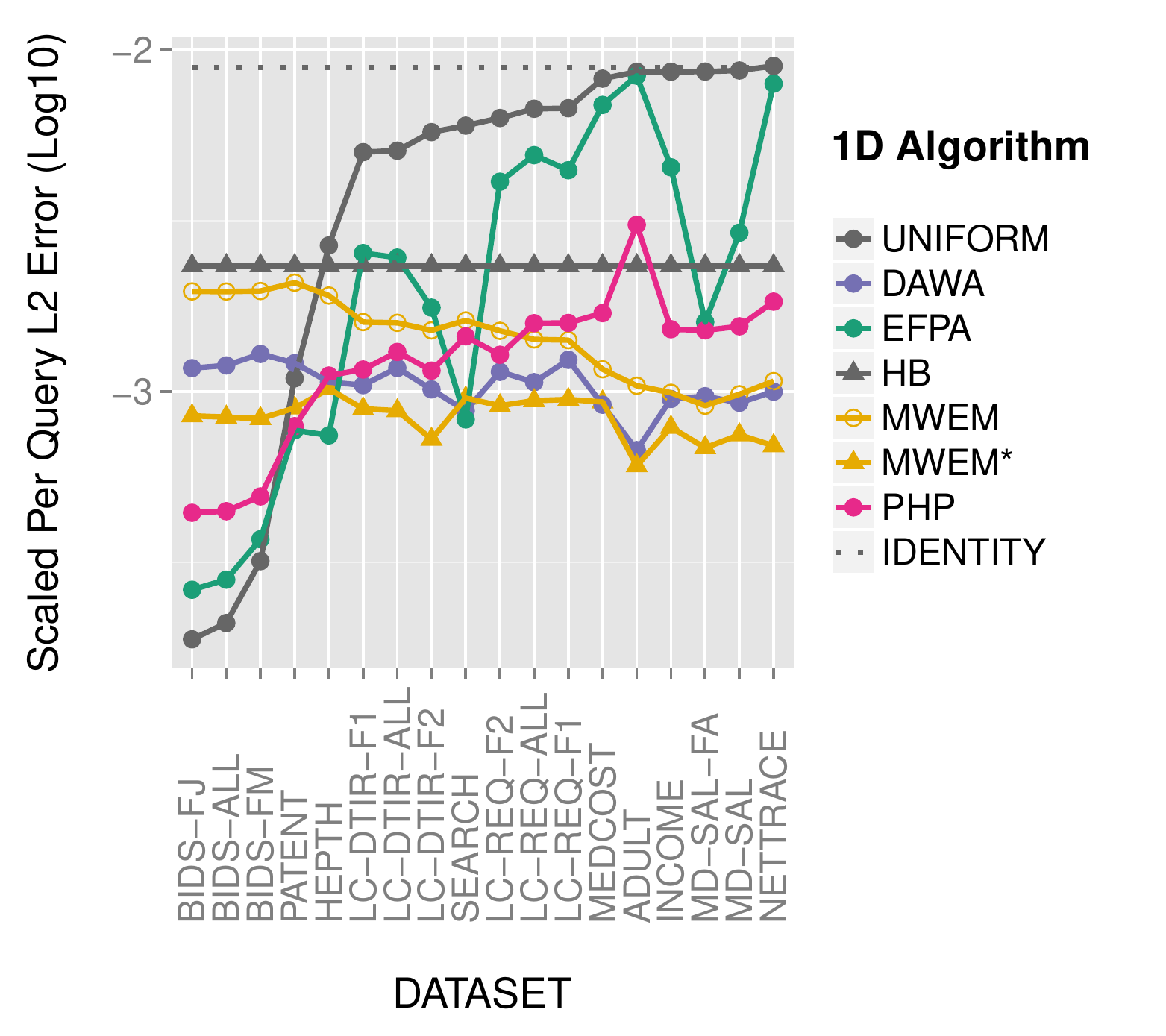}
		\caption{\label{fig:shape1D} Error by shape (1D, scale$=10^3$, domain size$=4096$)}
   \end{subfigure}
   \begin{subfigure}[b]{0.33\textwidth}    \centering		
		\includegraphics[width=\textwidth]{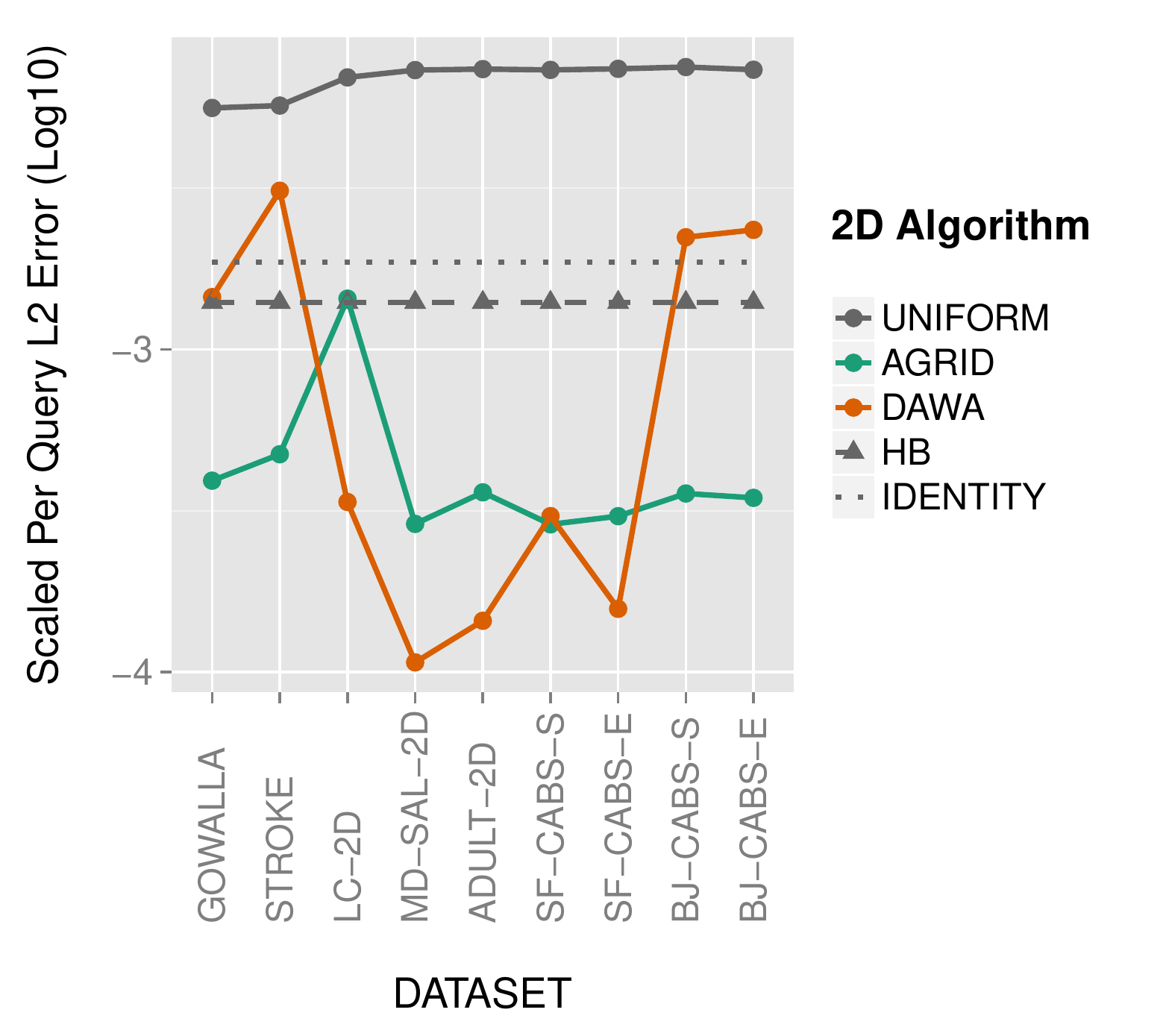}
		\caption{\label{fig:shape2D} Error by shape (2D, scale$=10^4$,domain size$=128\times 128$)}
   \end{subfigure}
   \begin{subfigure}[b]{0.31\textwidth} 	\centering	
		\includegraphics[width=\textwidth]{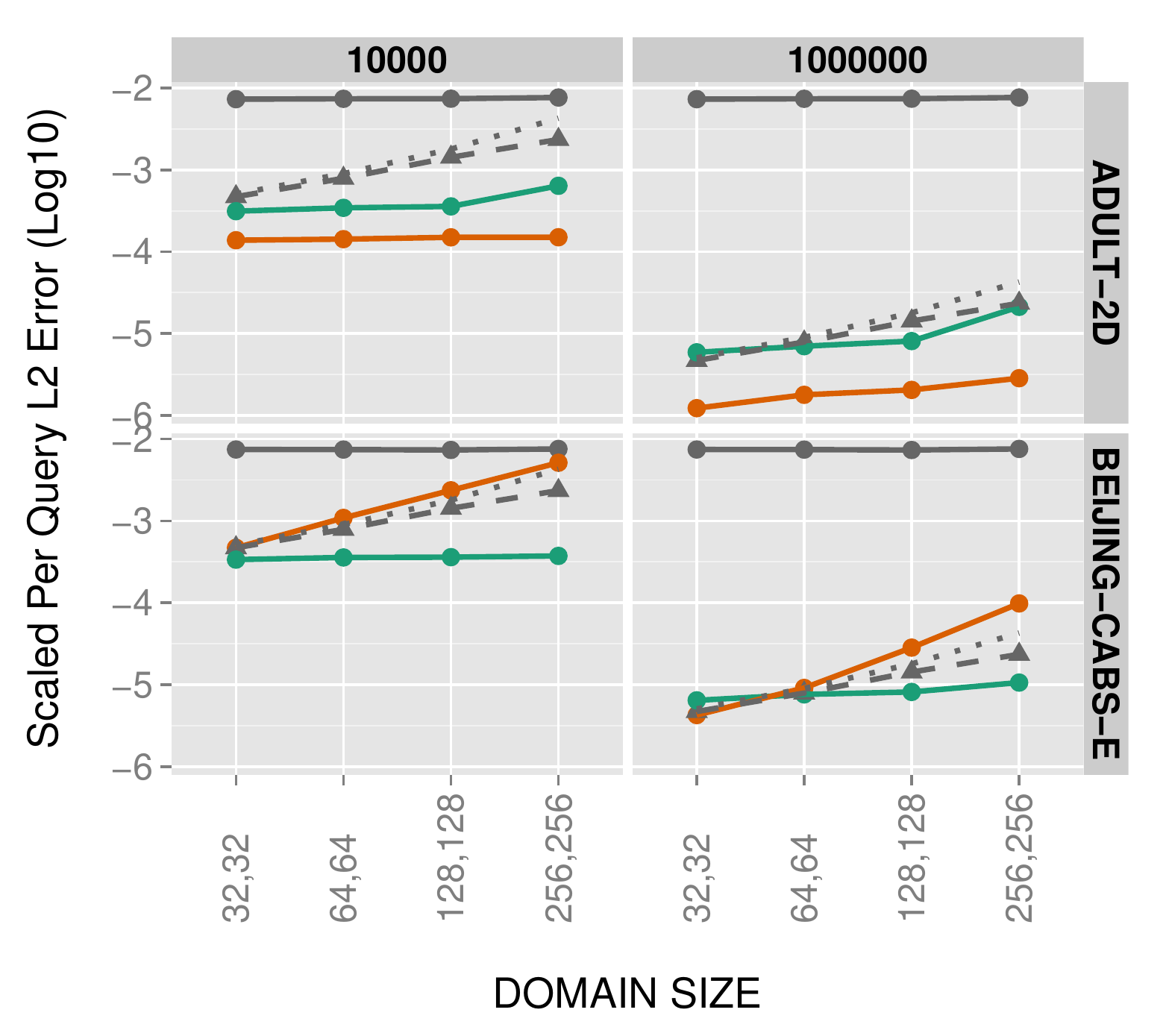}
		\caption{\label{fig:domain2D} Error variation with 2D domain size (scale$\in\{10^4,10^6\}$, shape$\in\{\mbox{\TDadult,\TDbeijingtaxie}\}$)}
   \end{subfigure}
  \caption{ \label{fig:shape-and-domain} Error variation with data shape (1D and 2D cases) and domain size (2D case, for two shapes and two scales).  The only algorithms shown are baselines, \Hierb, and those data-dependent algorithms that were competitive (on some dataset) for the scales shown.}
\end{figure*}

\noindent{\bf Domain Size:} 
\cref{fig:domain2D} shows the effect of domain size for the 2D case.  For two datasets (\TDadult and \TDbeijingtaxie) at two different scales ($10^4$ and $10^6$), it reports error for domain sizes varying from $32 \times 32$ to $256 \times 256$.

\begin{finding}[Different domain size effect]
Domain size affects data-dependent algorithms differently than data-independent algorithms.
\end{finding}

As discussed in~\cite{Qardaji13Understanding}, the error of data-independent algorithms \identity and \Hierb is expected to increase with domain size, and at sufficiently large domains, \Hierb should achieve lower error than \identity.  Those effects are confirmed in \cref{fig:domain2D}.  However, the data-dependent algorithms exhibit different behavior.  \AG's error is nearly flat across domain size.  This makes sense because the grid size is selected independent of domain size.  \dawa's error is flat for some datasets but for others, it increases with domain size.  \dawa selects a partition based on the data shape: for some shapes, the partition selected may grow increasingly fine-grained with domain size.

\begin{table}	[t]
	\begin{subtable}[b]{.49\columnwidth}
	\centering
	{\scriptsize 
	\begin{tabular}{|l|c|c|c|c|}
	\hline
	& \multicolumn{3}{|c|}{Scale} \\ \cline{2-4}
	 &$10^3$	&$10^5$	&$10^7$  \\ \hline
	\dawa		&8	&17	&3  \\
	\Hierb		& 	&14	&16 \\
	\mwemv		&15	& 	&  \\ 
	\efpa		&4	&2	&  \\
	\php		&2	& 	&  \\
	\mwem		&3	& 	&  \\
	\uniform	&3	& 	&  \\
	\ahpv		&	& 	&1 \\
	\hline
			\end{tabular}
	}
	\caption{ \label{tab:competitive-1D} 1D case for domain size $4096$}
	\end{subtable}
	\begin{subtable}[b]{0.49\columnwidth}	
		\centering
	{\scriptsize
	\begin{tabular}{|l|c|c|c|}
	\hline
	& \multicolumn{3}{|c|}{Scale} \\ \cline{2-4}
	 &$10^4$	&$10^6$	&$10^8$\\ \hline
	\dawa		&5	&3	&5 \\
	\AG			&5	&6	&  \\
	\Hierb		& 	& 	&4 \\
	\quadtree   &   &   &4 \\ 
	\ahp		& 	& 	&1 \\
	\hline
	\end{tabular}
	}
	\caption{\label{tab:competitive-2D} 2D case for domain size $128 \times 128$.}
	\end{subtable}
	 \caption{ \label{tab:competitive} Tables show the number of datasets on which algorithms are {\em competitive}, based on statistical significance of mean error.}
\end{table} 

\subsection{Assessment of State of the Art}

\cref{tab:competitive-1D,tab:competitive-2D} show the number of datasets for which an algorithm is {\em competitive} for a given scale. We determine which algorithms are competitive by accounting for the variability in the mean error using t-tests (as discussed in \cref{sec:interpreting}).  

\begin{finding}[Competitive algorithms] \label{finding:utility}
No single algorithm offers uniformly low error. At small scales, data-dependent algorithms dominate; at large scales data-independent algorithms dominate.
\end{finding}

There is no algorithm that dominates all others by offering uniformly lower error.  For 1D, 8 algorithms are competitive for at least one combination of scale and shape; for 2D, 5 algorithms are competitive.
At small scales for both 1D and 2D, the competitive algorithms are all data-dependent.  At the largest scales, the only truly data-dependent algorithms are \dawa and \ahpv (in the 2D case, \quadtree is effectively data-independent at this domain size because the leaves of the tree are individual cells).

While no algorithm universally dominates across settings, \dawa offers the best overall performance in the following sense.  We compare the error experienced by a user who selects a single algorithm to run on all datasets and scales to the error experienced by a user with access to an oracle allowing them to select the optimal algorithm (which would vary among the algorithms mentioned above on a case-by-case basis).  We compute the ratio of errors per dataset/scale and then compute the average ratio, using a geometric mean.  We call this measure ``regret.''  \dawa has a regret of 1.32 on 1D and 1.73 on 2D.  No single algorithm achieves lower regret (the next best on 1D is \Hierb with regret of 1.51 and on 2D it is \AG with regret of 1.90).

\subsection{Tuning Free Parameters}  \label{sec:findings_tuning}

\begin{finding}[Improper Tuning Skews Evaluation] 
For some algorithms tuning free parameters to fit the data can reduce error substantially, in some cases by a factor larger than 7.5.	
\end{finding}
To demonstrate the sensitivity of performance to parameter settings, we ran \ahp, \dawa, and \mwem under several different scenarios (i.e., 1D datasets of varying scale and shape) and found settings of their respective parameters that were optimal for some scenario (i.e., achieved lowest error on some 1D dataset at some scale).
Then for a particular 1D dataset, \medcost,  at a scale of $10^5$, we measure the lowest error and the highest error achieved, considering {\em only those parameter settings that were optimal for some scenario}.  All three algorithms are significantly impacted by the choice of parameters.  Errors can be 2.5x (for \dawa) and around 7.5x (for \mwem and \ahp) larger using sub-optimal parameters (even though those same parameters were in fact optimal for other reasonable inputs).

We proposed a method (\cref{sec:violators}) for setting free parameters in a manner consistent with \cref{pr:param} and for removing dependence on side information (\cref{pr:sideinfo}).  We applied this procedure to \mwem and \ahp -- violators of \cref{pr:param} -- to yield \mwemv and \ahpv respectively. We report our results for the 1D case.

\begin{finding}[Improved \mwem]
Training on synthetic data to decide free parameters (described in \cref{sec:violators}) leads to significant performance improvements for \mwem, especially at large scales.
\end{finding}

\begin{center}
\begin{tabular}{||l|c|c|c|c|c|c||} \hline
\multicolumn{7}{||c||}{Ratio of error ($\mwem / \mwemv$), average all datasets} \\ \hline
Scale & $10^3$ & $10^4$ & $10^5$ & $10^6$ & $10^7$ & $10^8$\\ \hline
 Error ratio & 1.799 & .951 & 1.063 & 5.166 & 12.000 & 27.875 \\ 
\hline
\end{tabular}
\end{center}

The reason for this improved performance is that a stronger signal (here higher scale; equivalently higher epsilon) allows \mwem to benefit from a higher number of rounds because a larger number measurements can be taken with reasonable accuracy.  The training procedure, even with a single synthetic dataset, can help to find the right relationship between the scale-epsilon product and $T$.  Since our method for setting $T$ does not depend on the input data, this result shows that $T$ can be set profitably in a data-independent way.  This result also shows that we can cope with the removal of side information of scale without a significant impact on error. 

For \ahp, we applied our training method to determine parameters $\eta$ and $\propPart$.  We compare the performance of \ahpv against \ahp using one of the several parameters settings used in~\cite{zhangtowards}.
At smaller scales, the error rates of both algorithms were similar.  At larger scales, the results were mixed: one some datasets \ahpv outperformed the fixed setting but on others vice versa.  On average, over all datasets, \ahpv was slightly better, with a ratio of improvement of $1.03$ at scale $10^7$.  This shows that parameter tuning for \ahp is more sensitive to shape and that it may be worth using a portion of the privacy budget to derive parameters.

\subsection{Measurement of Variability and Bias} \label{sec:sub:var}

\begin{finding}[Risk averse algorithm evaluation] {\color{white}.} \\
Algorithms differ in the variation in error.  The algorithm with lowest mean error may not be preferred by a risk averse user. 
\end{finding}

In a real deployment scenario, a user will receive the output of a single run of the algorithm. Thus, she has no opportunity to reliably achieve mean error rates if error rates vary considerably around the mean.  We are not aware of any prior work that has considered this issue in empirical algorithm evaluation. 

While a risk-neutral user may be happy to evaluate algorithms with respect to mean error, a {\em risk averse} user should evaluate algorithms with respect to error rates she is almost certain to achieve.  Therefore, for the risk averse user, we consider the 95$^{th}$ percentile error rate (i.e. the error rate $X$ for which 95\% of trials have error less that $X$).  

In most cases algorithms that are the best according to mean error are also the best according to $95^{th}$ percentile error. In 1D, we see that \dawa has high variability in error and hence in ten scenarios (i.e., 10 scale-shape pairs at domain size 4096), we see \dawa being the best under mean error but not under 95\% error. We see three scenarios (\patent at scale $10^3$, \mdsalary at scale $10^5$ and \nettrace at scale $10^7$, all at domain size 4096) where  an algorithm that was not competitive under mean error has the least 95\% error. That algorithm was either \uniform or \Hierb, both of which have  low variability in error. In 2D we see very few scenarios where an algorithm has the least mean error  but not the least 95\% error.

\begin{finding}[Bias and Consistency]
The high error of \mwem, \mwemv, \php, and \uniform at large scales is due to bias.  These algorithms are proven to be inconsistent.  
\end{finding}

As scale increases, one expects error to decrease.  However, \cref{fig:shape-range} shows that the rate of decrease varies by algorithm.  In fact, for some algorithms -- \php (1D only), \mwem, \mwemv, and \uniform -- error remains quite high even at large scales.  The cause is due to inherent bias introduced by these algorithms.  In empirical results (not shown) we decompose error into bias and variance terms and find that at scale increases, the error of these algorithms becomes dominated by bias.  

We complement our empirical findings with theoretical analysis.  We show these algorithms are, in fact, inconsistent -- implying that the bias does not vanish even as scale (or $\epsilon$) goes to $\infty$.  \cref{tbl:algorithms} reports a complete analysis of consistency for all algorithms; proofs appear in \cref{sec:scale-eps-proofs}.

\subsection{Findings: Reasonable Utility} \label{sec:reasonable_eval}

The previous section compared published techniques in a relative sense: it examined which algorithm performs the best in a given experimental setting.  In this section, we evaluate the algorithms in an absolute sense: do the published algorithms offer reasonable utility?  This is not an easy question to answer, as the notion of ``reasonable'' may be specific to the particular task that an analyst wishes to perform on the dataset.  We address this by identifying situations when the error is unreasonably high for any application by comparing algorithms against a few simple baselines.

\begin{finding}[Comparison to Baselines]
For both 1D and 2D, many algorithms are beaten by the \identity baseline at large scales.  For 1D, the \uniform baseline beats all algorithms on some datasets at small scales.  
\end{finding}

The experiments include two baseline methods (defined in \cref{sec:algorithm_overview}) that can help us assess whether algorithms are providing meaningful improvements.  The first baseline, \uniform learns virtually nothing about the input data (only its scale) and exploits a simplistic assumption of uniformity.  Therefore, we argue that when sophisticated algorithms are delivering error rates comparable to \uniform, even on datasets that are decidedly non-uniform, they are not providing reasonable utility.  \cref{fig:shape-range} shows that most algorithms outperform \uniform at larger scales in 1D and at all scales in 2D.  However, for the 1D case at scale 1000, \uniform achieves lowest error on some datasets.  
By examining \cref{fig:shape1D}, one can see on which datasets \uniform beats competing algorithms.  This figure shows that on some datasets, many algorithms do not offer much improvement over \uniform, suggesting that results for this scale (at least at this $\epsilon$) are unlikely to be useful. 

The second baseline is \identity.  A sophisticated algorithm that does not consistently and significantly improve over \identity does not justify its complexity.  This standard is highly dependent on scale, since, in terms of scaled error, \identity improves with scale.  
For 1D, if we examine \cref{fig:shape-range-1D} and compare mean error across all datasets (the white diamonds), this standard rules out \php, \efpa, \ahpv at moderate scales ($10^5$) and \efpa, \ahpv, \mwem, \mwemv, \php, \dpcube at large scales ($10^7$).  
For 2D (\cref{fig:shape-range-2D}), this standard rules out \mwem, \mwemv, \dpcube, \ahp at moderate scales ($10^6$) and \AG, \mwem, \mwemv, \dpcube, \ahp at large scales ($10^8$).

\vspace{-1em}
\section{Discussion and Takeaways} \label{sec:discussion}
In this section, we synthesize our findings to (a) explain gaps or resolve inconsistencies from prior work, (b) present lessons for practitioners and (c) pose open research questions. 

\stitle{Explaining Results from  Prior Work}\label{sec:discussion-priorwork}
We revisit the gaps and inconsistencies identified in \cref{sec:prior_results} in light of our findings.

\mybullet We systematically compared data-dependent algorithms against the state of the art data-independent ones and learn there is no clear winner. Both scale and shape significantly affect algorithm error. Moreover, shape affects each data dependent algorithm differently.

\mybullet We can resolve the apparent inconsistency between \cite{hardt2012a-simple} and \cite{Li14Data-} regarding the comparative performance of \mwem against data-independent matrix mechanism techniques by accounting for the effect of scale as well as the scale-epsilon exchangeability property.  \mwem outperforms data-independent techniques (including instances of the matrix mechanism) either when the scale is small or when $\epsilon$ is small.  Hardt et al. evaluate on small datasets and low $\epsilon$ whereas  the datasets of Li et al. where \mwem struggled turn out to be datasets with large scale (over $335,000$).  We demonstrated this by using the same datasets as \cite{hardt2012a-simple, Li14Data-} and controlling for scale.

\mybullet Our results are consistent with the findings of Qardaji et al.~\cite{Qardaji13Understanding} regarding domain size and data-independent hierarchical algorithms.  Interestingly, however, we show that data-dependent algorithms have a different relationship with domain size.

\mybullet We conduct a comprehensive evaluation of algorithms for answering 2D range queries, which is missing from the literature.

\stitle{Lessons for practitioners}
Our results identify guidelines for practitioners seeking the best utility for their task and facing the daunting task of selecting and configuring an appropriate algorithm.

The first consideration for a practitioner should be the overall strength of the "signal" available to them.  This is determined by both the $\epsilon$ budget and the scale of the dataset.  (We have shown that these two factors are exactly exchangeable in their impact on scaled error.)  In a ``high signal'' regime (high scale, high $\epsilon$), it is unlikely that any of the more complex, data-dependent algorithms will beat the simpler, easier-to-deploy, data independent methods such as \identity and \Hierb.  This greatly simplifies algorithm selection and deployment because error bounds are easy to derive, performance doesn't depend on the input dataset, and there are few parameters to set for these algorithms.  In ``low signal'' regimes (low scale, low $\epsilon$), deploying a data-dependent algorithm should be seriously considered, but with an awareness of the limitations: depending on the properties of the input data, error can vary considerably and error bounds are not provided by the data-dependent algorithms. For answering 1D range queries, \dawa is a competitive choice on most datasets in the low and medium signal regimes. For 2D range queries, \AG consistently beats the data independent techniques, but \dawa can significantly outperform \AG and the data independent techniques on very sparse datasets. 

\stitle{Open research Problems}

Our experimental findings raise a number of research challenges we believe are important for advancing differentially private algorithms.   
\mybullet {\em Understanding Data Dependence:} We have shown that data-dependent algorithms do not appear to be exploiting the same features of shape.  The research community appears to know very little about the features of the input data that permit low error.  Are there general algorithmic techniques that can be effective across diverse datasets, or should algorithm designers seek a set of specialized approaches that are effective for specific data properties?

\mybullet {\em Algorithm Selection:} While our evaluation of 1D and 2D algorithms identifies a class of state-of-the-art algorithms, it still does not fully solve the problem faced by the practitioner of selecting the algorithm that would result in the least error given a new dataset. We believe further research into analytical and empirical methods for algorithm selection would greatly aid the adoption of differential privacy algorithms in real world systems.

\mybullet {\em Error Bounds:} A related problem is that data-dependent algorithms typically do not provide public error bounds (unlike, e.g., the Laplace mechanism).   Hence, users cannot predict the error they will witness without knowing the dataset, presenting a major challenge for algorithm deployment.  Developing publishable error bounds is particularly important for this class of algorithms.

\mybullet {\em Parameter Tuning:} We showed that  parameter tuning can result in significant gains in utility. Our training procedure is a first step towards setting parameters, but already distinguishes between a case where simple tuning can greatly improve performance over a fixed value (\mwem) and a case where parameter setting is more challenging because of data-dependence (\ahp).

\mybullet {\em Expected Error versus Variability:} Existing empirical evaluations have focused on mean error while ignoring the variation of error over trials.  We show that risk-averse users favoring low variability over low expected error may choose different algorithms than risk-seeking users.  Current algorithms do not offer users the ability to trade off expected error and variability of achieved error, which could be an important feature in practice.

\section{Conclusion}\label{sec:conclusion}

We have presented \dpbench a novel and principled  framework for evaluating differential privacy algorithms. We use \dpbench to evaluate algorithms for answering 1- and 2-D range queries and as a result (a) resolved gaps/inconsistencies in prior work, (b) identified state-of-the-art algorithms that  achieve the least error for their datasets, and (c) posed open research questions. 

We are eager to extend our investigation in a number of ways.  We have focused here primarily on evaluating the utility of algorithms, assuming that standard settings of $\epsilon$ offer sufficient privacy.  We would like to provide meaningful guidelines for setting privacy parameters along with utility standards.  We hope to expand our investigation to broader tasks (beyond 1- and 2-D range queries), and, as  noted in \cref{tbl:algorithms}, some of the algorithms considered support broader classes of workloads and may offer advantages not seen in our present experiments.

\noindent
\textbf{Acknowledgments} We appreciate the comments of each of the anonymous reviewers. This material is based upon work supported by the National Science Foundation under Grant Nos.~1253327, 1408982, 1443014, 1409125, and 1409143.

{\small
\bibliographystyle{abbrv}
\bibliography{bib/refs}
}

\appendix
\section{Dataset Descriptions} \label{sec:dataset-desc}

\cref{tbl:datasets} provides an overview of all datasets, 1D and 2D, considered in the paper.
\stitle{1D Datasets}

The first seven were described in the papers in which they originally appeared \cite{hardt2012a-simple,Li14Data-,hay2010boosting,Acs2012compression,zhangtowards,xu2013differential}.  The eleven new datasets were derived from three original data sources.  A single primary attribute was selected for an initial shape distribution (denote by suffix {\sc -ALL}).  In addition, filters on a secondary attribute were applied to the data, resulting in alternative 1D shape distributions, still on the primary attribute.

The {\sc Bids}\xspace datasets are derived from a Kaggle competition whose goal is to identify online auctions that are placed by robots.\footnote{https://www.kaggle.com/c/facebook-recruiting-iv-human-or-bot/data} The histogram attribute is the IP address of each individual bid. \bidsfj and \bidsfm result from applying distinct filtering conditions on attribute ``merchandise'': \bidsfj is a histogram on IP address but counting only individuals where ``merchandise=jewelry'', \bidsfm is a histogram on IP address but counting only individuals where ``merchandise=mobile''.

The {\sc MD-SAL}\xspace datasets are based on the Maryland salary database of state employees in 2012.\footnote{http://data.baltimoresun.com/salaries/state/cy2012/}  The histogram attribute is ``YTD-gross-compensation''. \mdsalaryfa is filtered on condition ``pay-type=Annually''. 

The {\sc LC}\xspace datasets are derived from data published by the ``Lending Club'' online credit market.\footnote{https://www.lendingclub.com/info/download-data.action}  It describes loan applications that were rejected. The {\sc LC-REQ}\xspace datasets are based on attribute ``Amount Requested'' while the {\sc LC-DTIR}\xspace datasets are based on attribute ``Debt-To-Income Ratio''.  In both cases, additional shapes are generated by applying filters on the ``Employment'' attribute. \lcreqfa and \lcdtirfa only count records with attribute ``Employment'' in range $\left[0,5\right]$, \lcreqfb and \lcdtirfb only count records with attribute ``Employment'' in range $\left(5,10\right]$.

\stitle{2D Datasets}
The {\sc BJ-CABS} datasets and \TDcheckin have been used and described in previous papers\cite{he2015dpt,qardaji2013differentially}. The {\sc SF-CABS} datasets are derived from well-known mobility traces data from San Francisco taxis \cite{epfl-mobility-20090224}.These location-based data are represented as latitude longitude pairs. To get more diverse data shape, we divide the two cab trace data into four by using only the start point and end point in a single dataset. \TDbeijingtaxis and \TDcabspottings contain only the start location of a cab trip, while \TDbeijingtaxie and \TDcabspottinge record only the end locations.

\TDadult, \TDmdsalary and \TDloan are derived from same sources as a subset of data we used in the 1D experiments. \TDadult is derived from source of \adult. This source is also used in \cite{hardt2012a-simple} with attributes ``age'' and ``hours'', but  we use the attributes ``capital-gain'' and ``capital-loss'' in order to test on a larger domain. \TDmdsalary is based on the Maryland salary database using the attributes ``Annual Salary'' and ``Overtime earnings''. \TDloan is derived from the accepted loan data from Lending Club with attributes ``Funded Amount'' and ``Annual Income''.

 \TDstoke is a new dataset derived from the International Stroke Trial database\cite{sandercock2011international}, which is one of the largest randomized trial conducted in acute stroke for public use. We used the attributes ``Age'' and ``Systolic blood pressure''.


\section{Algorithm Descriptions} \label{sec:alg-desc}

\cref{sec:algorithm_overview} and \cref{tbl:algorithms} provide an overview of the algorithms studied; here we describe each algorithm individually.  We begin with data-independent approaches, whose expected error rate is the same for all datasets, then describe data-dependent approaches. Baseline algorithms \identity and \uniform are already described in \cref{sec:algorithm_overview}.

\stitle{Data-Independent Algorithms}
Each of the data-independent approaches we consider can be described as an instance of the matrix mechanism \cite{Li:2010Optimizing-Linear,li2015matrix} (although a number were developed independently).  The central idea is to select a set of linear queries (called the strategy), estimate them privately using the Laplace mechanism, and then use the noisy results to reconstruct answers to the workload queries.  The strategy queries, which can be conveniently represented as a matrix, should have lower sensitivity than the workload and should allow for effective reconstruction of the workload.  

If the discrete Haar wavelet matrix is selected as the strategy, the result is the \privelet method~\cite{xiao2010differential} which is based on the insight that any range query can be reconstructed by just a few of the wavelet queries and that the sensitivity of the Haar wavelet grows with $\prod_{i=1}^k \log_2 n_i$ where $k$ is the dimensionality (number of attributes) and $n_i$ is the domain size of the $i^{th}$ attribute.  

Several approaches have been developed that use a strategy consisting of hierarchically structured range queries.  Conceptually the queries can be arranged in a tree.  The leaves of the tree consist of individual  queries $x_i$ (for $i = 1, \dots, n$).  Each internal node computes the sum of its children; at the root, the sum is therefore equal to the number of records in the database, $\scale{\x}$.  
The approaches differ in terms of the branching factor and the privacy budget allocation.  
The \Hier method~\cite{hay2010boosting} has a branching factor of $b$ and uniform budget allocation.  The \Hierb method~\cite{Qardaji13Understanding} uses the domain size to determine the branching factor: specifically, it finds the $b$ that minimizes the average variance of answering all range queries through summations of the noisy counts in the tree.  It also uses a uniform budget allocation.  \greedyH is a subroutine of the \dawa~\cite{Li14Data-} algorithm but can also be used as a stand-alone algorithm.  It has a branching factor $b$ and employs a greedy algorithm to tune the budget allocation to the workload.  When used in \dawa, \greedyH is applied to the partition that \dawa computes; by itself, it is applied directly to $\x$.

All of the above techniques reduce error by using a set of strategy queries which are a good match for the workload of interest.  \Hier, \Hierb, and \privelet were all initially designed to answer the set of {\em all} range queries, but the strategy each approach uses is static -- it is not adapted to the particular queries in the workload.  \greedyH is the only technique that modifies the strategy (by choosing weights for hierarchical queries) in response to the input workload.  In the table, it is marked as ``workload-aware.''  While \Hier and \Hierb naturally extend to the multi-dimensional setting (quadtrees, octrees, etc.), \greedyH extends to 2D by applying a Hilbert transform to obtain a 1D representation. 

\stitle{Data-Dependent Partitioning Algorithms} 

Many of the data-dependent approaches use a partitioning step to approximate the input data vector and we discuss them as a group.  These algorithms partition the domain into a collection of disjoint buckets whose union span the domain.
The partition must be selected using a differentially private algorithm, since it is highly dependent on the input data, and this constitutes the first step for each algorithm in this group.  Once a partition is selected, in a second step, they obtain noisy counts only for the buckets. 
A noisy data vector can be derived from the noisy bucket counts by assuming the data is uniform within each bucket, and this vector is then used to answer queries.  Partitioning trades off between two sources of error: the error due to adding noise to the bucket count, which diminishes with increased bucket size, and the error from assuming uniformity within a bucket, which increases when non-uniform regions are lumped together.  When the partition step can effectively find regions of the data that are close to uniform, these algorithms can perform well. 

By the sequential composition property of differential privacy, any $\epsilon_1$-differentially private algorithm for partition selection can be combined with any $\epsilon_2$-differentially private algorithm for count estimation and achieve $\epsilon$-differential privacy provided that $\epsilon_1 + \epsilon_2 \leq \epsilon$.  All these algorithms therefore share a parameter that determines how to allocate the privacy budget across these two steps.  We will use $\propPart$ to denote the proportion of $\epsilon$ used to identify the partition.  Thus, $\epsilon_1 = \epsilon \propPart$ and $\epsilon_2 = \epsilon (1 - \propPart)$.

Unlike the first step, the approaches for estimating the bucket counts are typically data-{\em in}dependent. In fact, \php, \ahp, \structurefirst, and \UG use the Laplace mechanism to obtain noisy counts.

The distinguishing features of each algorithm in this group are:

\php~\cite{Acs2012compression} finds a partition by recursively bisecting the domain into subintervals.  The midpoint is found using the exponential mechanism with a cost function based on the expected absolute error from the resulting partition.  \php is limited to 1D.

\ahp~\cite{zhangtowards} uses the Laplace mechanism to obtain a noisy count for each $x_i$ for $i = 1, \dots, n$ and sets noisy counts below a threshold to zero.  The counts are then sorted and clustered to form a partition.  As was done by \cite{zhangtowards}, the greedy clustering algorithm is used in the experiments.  The threshold is determined by parameter $\eta$.  \ahp extends to the multi-dimensional setting.

\dawa~\cite{chaopvldb12} uses dynamic programming to compute the least cost partition in a manner similar in spirit to V-optimal histograms~\cite{Jagadish:1998:OHQ:645924.671191}.  The cost function is the same as the one used in the exponential mechanism of \php~\cite{Acs2012compression}.  To ensure differential privacy, noisy costs are used in place of actual costs.  Once the partition is found, a hierarchical set of strategy queries are derived using \greedyH.  To operate on 2D data, \dawa applies a Hilbert transformation.

\structurefirst~\cite{xu2013differential} aims to find a partition that minimizes the expected sum of squared error.  While the aforementioned algorithms choose the size of the partition privately, in \structurefirst, the number of buckets is specified as a parameter $k$.  Xu et al.~\cite{xu2013differential} recommend setting $k$ as a function of the domain size and we follow those guidelines in our experimental evaluation (see \cref{sec:impl_details}).  Given the specified number of buckets, the bucket boundaries are selected privately using the exponential mechanism.  Xu et al.~\cite{xu2013differential} propose two variants of the algorithm, one based on the mean and another based on the median; our experiments evaluate the former.  
The parameters of \structurefirst include $k$, the number of buckets; $\propPart$, the ratio for privacy budget allocation; and $F$, an upper bound on the count of a bucket (which is necessary for the cost function used in the exponential mechanism to select bucket boundaries).  \structurefirst is limited to 1D.

\dpcube~\cite{xiao2014dpcube}, which is multi-dimensional, selects the partition by first applying the Laplace mechanism to obtain noisy counts for each cell in the domain, and then running a standard kd-tree on the noisy counts.  Once the partition is selected it obtains fresh noisy counts for the partitions and uses inference to average the two sets of counts.

The remaining partitioning algorithms 
are designed specifically for 2D.

\quadtree~\cite{Cormode11Differentially} generates a quadtree with fixed height and then noisily computes the counts of each node and does post-processing to maintain consistency.  The height of the tree is a parameter.  This algorithm becomes data-dependent if the tree is not high enough (i.e., the leaves contains aggregations of individual cells).  Because the partition structure is fixed, this algorithm does not use any privacy budget to select it (i.e., $\propPart = 0$).

\hybridtree~\cite{Cormode11Differentially} is the combination of a kd-tree and the aforementioned quadtree. It uses a differentially private algorithm to build a kd-tree in a top down manner (\dpcube goes bottom up).  This tree extends a few levels and then a fixed quadtree structured is used for the remaining levels until a pre-specified height is reached.

\UG~\cite{qardaji2013differentially} builds an equi-width partition where the width is chosen in a data-dependent way, based on the scale of the dataset.  \AG~\cite{qardaji2013differentially} builds a two-level hierarchy of partitions.  The top level partition produces equi-width buckets, similar to \UG.  Then within each bucket of the top-level partition, a second partition is chosen in a data-adaptive way based on a noisy count of the number of records in the bucket. 

\stitle{Other Data-Dependent Algorithms}

\mwem~\cite{hardt2012a-simple} is a workload aware algorithm that supports arbitrary workloads of linear queries (not just range queries).  It starts with an estimate of the dataset that is completely uniform (based on assumed knowledge of the scale), and updates the estimate iteratively.  Each iteration privately selects the workload query with highest error and then updates the estimate using a multiplicative weights update step.  The number of iterations is an important parameter whose setting has a significant impact on performance.   We propose a method for setting $T$, described in \cref{sec:violators}, that adheres to our evaluation principles.  
 
\efpa~\cite{Acs2012compression} is based on applying the discrete Fourier transform (DFT) to a 1D data vector $\x$.  It retains the top $k$ Fourier coefficients, adds noise to them using the Laplace mechanism, and then inverts the transformation.  By choosing $k < n$, the magnitude of the noise is lowered at the potential cost of introducing approximation error.  The value of $k$ is chosen in a data-dependent way using the exponential mechanism with a cost function that is based on expected squared error.  The privacy budget is divided evenly selecting $k$ and measuring the $k$ coefficients.


\section{Theoretical Analysis} \label{sec:scale-eps-proofs}
In this section, we theoretically analyze whether the algorithms considered in the paper satisfy scale-epsilon exchangeability (\cref{def:exchangeability}) and consistency (\cref{def:consistency}).
\begin{lemma}
\label{lemma:1}
Any instance of the Matrix Mechanism satisfies consistency and scale-epsilon exchangeability.
\end{lemma}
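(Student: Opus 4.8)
The plan is to exploit the defining structure of the matrix mechanism: for any of its instances the workload error vector is a \emph{fixed linear image of the injected Laplace noise} and does not depend on $\x$ at all. Recall that such a mechanism fixes a strategy matrix $\A$ --- a function of the workload $\W$ and the domain size $n$ only, never of the data --- releases $\A\x + \z$ where $\z$ has i.i.d.\ $\Lap(\sens\A/\epsilon)$ entries ($\sens\A$ the maximum $L_1$ column norm of $\A$), and returns $\esty = \W\A^{+}(\A\x + \z)$ with $\A^{+}$ the (pseudo)inverse of $\A$. For every concrete instance considered here (\identity, \privelet, \Hier, \Hierb, \greedyH) the strategy has full column rank, so $\A^{+}\A = I$ and hence $\esty - \W\x = \W\A^{+}\z$, an expression in which $\x$ has cancelled.

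First I would normalize the noise. Writing $\z = (\sens\A/\epsilon)\,\w$ with $\w$ a vector of i.i.d.\ standard ($\Lap(1)$) Laplace variables, the loss becomes $L(\esty,\W\x) = \Ltwo{\W\A^{+}\z} = (\sens\A/\epsilon)\,\Ltwo{\W\A^{+}\w}$, so by \cref{def:sample-error} the scaled per-query error satisfies
\[
  error(\alg(\x,\W,\epsilon)) \;\overset{d}{=}\; \frac{\sens\A}{\epsilon\,\Lone{\x}\,q}\;\Ltwo{\W\A^{+}\w},
\]
the law of the right-hand side being determined entirely by $\w$ and the fixed quantities $\W,\A,q$.

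For scale-epsilon exchangeability, take $\x_1 = m_1\p$ and $\x_2 = m_2\p$ over a common domain (hence a common $\A$, $\sens\A$, $q$) with $\epsilon_1 m_1 = \epsilon_2 m_2$. Since $\Lone{\x_i} = m_i$, the display gives $error(\alg(\x_i,\W,\epsilon_i)) \overset{d}{=} \frac{\sens\A}{m_i\epsilon_i q}\Ltwo{\W\A^{+}\w}$, and $m_1\epsilon_1 = m_2\epsilon_2$ makes the two error random variables identically distributed --- in particular equal in mean and in every percentile --- which is exactly \cref{def:exchangeability}. For consistency, the same display yields $\E[error(\alg(\x,\W,\epsilon))] = \frac{\sens\A}{\epsilon\Lone{\x}q}\,\E\Ltwo{\W\A^{+}\w} = O(1/\epsilon)$, so the error tends to $0$ (in mean, and in probability) as $\epsilon\to\infty$, which is \cref{def:consistency}.

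The one step with genuine content is the claim $\W\A^{+}\A = \W$, i.e.\ that linear reconstruction is exactly unbiased; the main obstacle is simply to establish this --- either case by case, or via the generic matrix-mechanism framework of~\cite{li2015matrix} --- by noting that each strategy used here either has full column rank or at least has a row space containing the range workload, so that $\A^{+}\A$ fixes $\W$. Everything else follows immediately from the error being a deterministic linear transform of Laplace noise of scale $\sens\A/\epsilon$.
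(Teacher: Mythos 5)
Your proposal is correct and follows essentially the same route as the paper's proof: both observe that the data term cancels so the workload error is a fixed linear image of the Laplace noise, whence the scaled per-query error is (in distribution) a data-independent random variable times $\frac{1}{\epsilon\Lone{\x}}$, which immediately gives both exchangeability and consistency. Your version is somewhat more careful — stating equality in distribution rather than a loose ``constant $\alpha$,'' and flagging the $\W\A^{+}\A=\W$ reconstruction condition that the paper glosses over by writing $\mathbf{S}^{-1}$ — but these are refinements of the same argument, not a different one.
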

\begin{proof}
If Matrix Mechanism uses strategy $\mathbf{S}$ to answer workload $\W$ on data vector $\x$, the scaled per query error ($err$) is
\begin{eqnarray*}
err &=& \frac{1}{\scale{\x}\cdot |\W|}||\W\mathbf{S}^{-1}(\mathbf{S}\x + Lap(\frac{\Delta}{\epsilon})) - \W\x||_2 \\
&=& \frac{1}{\scale{\x}\cdot |\W|}||\W\mathbf{S}^{-1}Lap(\frac{\Delta}{\epsilon})||_2 
= \alpha * \frac{1}{\epsilon \scale{\x}}
\end{eqnarray*}
where $\alpha$ is constant in $\epsilon$ and $\scale{\x}$. Thus, Matrix Mechanism ensures both consistency and scale-epsilon exchangeability.
\end{proof}

\begin{theorem} \label{thm:dataind_consistency}
\identity, \privelet, \Hier, \Hierb, \greedyH ensure both consistency and scale-epsilon exchangeability.
\end{theorem}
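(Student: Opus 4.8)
The plan is to reduce the statement to \cref{lemma:1} by showing that each of \identity, \privelet, \Hier, \Hierb, and \greedyH is an instance of the Matrix Mechanism whose strategy matrix $\mathbf{S}$ is chosen independently of the private data, of $\epsilon$, and of the scale $\scale{\x}$, and whose reconstruction step is a fixed linear map. Recall that \cref{lemma:1} establishes that any such instance has scaled per-query error equal to $\alpha \cdot \frac{1}{\epsilon\,\scale{\x}}$, where $\alpha$ is constant in $\epsilon$ and $\scale{\x}$ (it may depend on the workload $\W$, the domain size, and fixed algorithm parameters, all of which are held fixed in \cref{def:exchangeability} and \cref{def:consistency}). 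Given that, consistency is immediate by letting $\epsilon\to\infty$, and scale-epsilon exchangeability is immediate because the error depends on $\epsilon$ and $\scale{\x}$ only through their product. So the whole theorem follows once the reduction is carried out.

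The reduction itself I would do algorithm by algorithm. \identity is literally the Laplace mechanism on $\x$, i.e.\ the Matrix Mechanism with $\mathbf{S}=\mathbf{I}$ and linear reconstruction $\mathbf{S}^{-1}=\mathbf{I}$. \privelet takes $\mathbf{S}$ to be the (tensor) discrete Haar wavelet matrix, which is full rank and determined by the domain size alone, with reconstruction the linear inverse wavelet transform. For the hierarchical methods I would write $\mathbf{S}$ as the matrix whose rows are the hierarchical range queries used by the algorithm, and take the reconstruction to be the standard linear least-squares consistency projection $\mathbf{S}^{+}$ (a left inverse, since $\mathbf{S}$ has full column rank); then in the spirit of \cref{lemma:1}, $\W\mathbf{S}^{+}(\mathbf{S}\x + Lap(\Delta/\epsilon)) - \W\x = \W\mathbf{S}^{+}Lap(\Delta/\epsilon)$, again of the form $\alpha/(\epsilon\,\scale{\x})$. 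The content is checking that the \emph{choice} of $\mathbf{S}$ is not private-data-dependent: for \Hier the branching factor is a fixed parameter and the budget split is uniform, so $\mathbf{S}$ is a function of $n$ and $b$ only; for \Hierb the branching factor minimizes the average variance of answering all range queries, which is a function of the domain size only; and for \greedyH the per-level budget weights are tuned to the workload $\W$ (as a stand-alone algorithm the underlying partition is trivial), but never to $\x$, $\epsilon$, or $\scale{\x}$, so $\mathbf{S}=\mathbf{S}(\W,n)$, which is exactly what \cref{lemma:1} permits. The multi-dimensional versions (quadtrees/octrees, tensor Haar, and the data-independent Hilbert-curve relabeling used by \greedyH in 2D) fit the same template with the analogous fixed strategy matrices.

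The step I expect to require the most care — and the only place the argument could slip — is verifying that none of these five algorithms interposes a \emph{nonlinear} post-processing operation (such as truncating negative estimates to zero) that would invalidate the exact error expression of \cref{lemma:1}: each instead reconstructs cell estimates by an ordinary (linear) projection, so the end-to-end map ``apply $\mathbf{S}$, add Laplace noise, reconstruct'' stays affine in $\x$ and the noise cancels $\W\x$ exactly. A secondary point to state carefully is that ``workload-aware'' (\greedyH) and ``domain-size-aware'' (\Hierb) are not the same as ``data-dependent,'' since $\W$ and $n$ are fixed inputs in both definitions. Once these points are nailed down, each algorithm is a bona fide Matrix Mechanism instance with a data-, $\epsilon$-, and scale-independent strategy, and the theorem is a direct consequence of \cref{lemma:1}.
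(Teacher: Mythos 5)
Your proposal is correct and follows the same route as the paper: the paper's proof simply observes that all five data-independent algorithms are instances of the Matrix Mechanism and invokes \cref{lemma:1}. Your version just spells out the algorithm-by-algorithm verification that the strategy matrices are data-, $\epsilon$-, and scale-independent, which the paper leaves implicit.
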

\begin{proof}
All these five data independent methods are the instances of Matrix Mechanism. \cref{lemma:1} shows Matrix Mechanism satisfies consistency and scale-epsilon exchangeability. Thus, they all ensure both consistency and scale-epsilon exchangeability.
\end{proof}

\begin{lemma}
\label{lemma:2}
When $\epsilon$ goes to infinity, Exponential Mechanism picks one of the items with the highest score with probability 1.
\end{lemma}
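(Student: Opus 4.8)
The plan is to write down the exact sampling probability of the Exponential Mechanism and show that the probability mass assigned to any non-maximal item vanishes. Recall the Exponential Mechanism, given a score function $q$ over a finite set of candidates $\mathcal{C}$ with sensitivity $\Delta q$, samples candidate $r$ with probability proportional to $\exp\!\big(\tfrac{\epsilon\, q(r)}{2\Delta q}\big)$. Let $q^* = \max_{r \in \mathcal{C}} q(r)$ be the highest score, let $\mathcal{C}^* = \{r : q(r) = q^*\}$ be the set of maximizers, and let $\delta > 0$ be the gap between $q^*$ and the second-highest distinct score value (this exists and is strictly positive because $\mathcal{C}$ is finite). The first step is simply to divide numerator and denominator of the sampling probability by $\exp\!\big(\tfrac{\epsilon q^*}{2\Delta q}\big)$, which rewrites the probability of picking any fixed $r$ as
\begin{equation*}
\Pr[r \text{ chosen}] = \frac{\exp\!\big(\tfrac{\epsilon}{2\Delta q}(q(r) - q^*)\big)}{\sum_{r' \in \mathcal{C}} \exp\!\big(\tfrac{\epsilon}{2\Delta q}(q(r') - q^*)\big)}.
\end{equation*}

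Next I would bound the two pieces. In the denominator, every maximizer contributes exactly $1$, so the denominator is at least $|\mathcal{C}^*| \geq 1$. For any non-maximizer $r$, the exponent $q(r) - q^* \leq -\delta < 0$, so its numerator is at most $\exp\!\big(-\tfrac{\epsilon \delta}{2\Delta q}\big)$. Combining, the total probability assigned to the set $\mathcal{C} \setminus \mathcal{C}^*$ is at most $|\mathcal{C}| \cdot \exp\!\big(-\tfrac{\epsilon \delta}{2\Delta q}\big)$, which tends to $0$ as $\epsilon \to \infty$ since $\delta, \Delta q > 0$ are constants and $|\mathcal{C}|$ is finite. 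Hence the probability of picking some item in $\mathcal{C}^*$ tends to $1$, which is exactly the claim.

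I do not expect a serious obstacle here — this is a routine limiting argument. The one point that requires a little care is the implicit finiteness assumption: the statement ``picks one of the items with the highest score'' only makes sense, and the gap $\delta$ only exists, when the candidate set is finite (or at least when the supremum of scores is attained with a positive gap). All uses of the Exponential Mechanism in the algorithms surveyed in this paper (selecting a split point, a number of buckets, a bucket boundary, a workload query) are over finite candidate sets, so I would either state finiteness as a standing assumption or note it explicitly at the start of the proof. A secondary cosmetic point is the factor of $2$ in the exponent (from the standard sensitivity convention); it is harmless, since it only rescales $\epsilon$, but I would keep it consistent with whatever convention the body of the paper uses when the Exponential Mechanism is invoked.
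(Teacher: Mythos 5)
Your proposal is correct and follows essentially the same route as the paper's proof: both normalize the sampling probabilities by the maximal-score term $e^{\epsilon q^*}$ and observe that every non-maximizer's contribution decays like $e^{-\epsilon\delta}$ for a positive gap $\delta$, so the mass on the maximizers tends to $1$. Your version is slightly more explicit about the finite candidate set and the score gap, and it carries the $2\Delta q$ normalization that the paper omits, but these are presentational differences rather than a different argument.
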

\begin{proof}
Suppose $T$ is the set of all items, $T^*$ contains all the items with the highest score $\alpha$. Since Exponential Mechanism picks any item t with the probability proportional to $e^{\epsilon score(t)}$, the probability of picking any item in $T^*$ is 
\begin{eqnarray*}
P(T^{*}) &=& \frac{\sum_{t \in T^*}e^{\epsilon \alpha}}{\sum_{t \in T}e^{\epsilon score(t)}} 
= \frac{\sum_{t \in T^*}e^{\epsilon \alpha}}{\sum_{t \in T^*}e^{\epsilon \alpha} + \sum_{t \in T - T^*}e^{\epsilon score(t)}} \\
&=&\frac{|T^*|}{|T^*| + \sum_{t \in T - T^*}e^{\epsilon (score(t) - \alpha)}} 
=  1 (as \epsilon \to \infty)
\end{eqnarray*} 
\end{proof}

\begin{theorem}
\efpa ensures consistency.
\end{theorem}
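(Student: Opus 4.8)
The plan is to exploit the two-phase structure of \efpa. In the first phase it privately selects the number $k$ of Fourier coefficients to retain, using the exponential mechanism with budget $\epsilon/2$; in the second phase it measures those $k$ coefficients of the discrete Fourier transform of $\x$ with the Laplace mechanism (budget $\epsilon/2$) and applies the inverse transform to obtain the released vector $\estx$. Since the DFT is a fixed orthonormal map, the key observation is that once the measurement noise is negligible, the reconstruction error is governed entirely by whether the retained coefficients suffice to represent $\x$ exactly.

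I would first dispatch the measurement phase, which is routine. Fix any retained set of size $k$. As $\epsilon\to\infty$ the Laplace scale (which, for fixed $k$ and fixed domain size $n$, is a constant times $1/\epsilon$) tends to $0$, so the noisy coefficients converge to the true ones; since the inverse DFT is a fixed linear map, $\estx$ converges (in expectation and in probability) to $\x_k$, the best $k$-term Fourier approximation of $\x$. In particular, for any $k$ large enough that $\x_k = \x$ --- e.g.\ $k = n$ --- this gives $\estx \to \x$.

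The delicate step is the selection phase, and this is where I expect the main obstacle. \efpa chooses $k$ by the exponential mechanism with a cost function equal to expected squared reconstruction error, which splits as $c(k,\epsilon) = a(k) + \nu(k,\epsilon)$, where $a(k)$ is the $\epsilon$-independent truncation error of keeping the top $k$ coefficients and $\nu(k,\epsilon)$ is the expected noise contribution, which for each fixed $k$ tends to $0$ as $\epsilon\to\infty$. \cref{lemma:2} concerns a score function that does not depend on $\epsilon$, whereas here the scores drift with $\epsilon$, so I would argue the drift is harmless: let $K^\star=\{k : a(k)=0\}$ (nonempty, as $n\in K^\star$) and $\delta = \min_{k\notin K^\star} a(k) > 0$ (a minimum over a finite set; if the complement is empty the conclusion is immediate). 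For $k\in K^\star$ we have $c(k,\epsilon)=\nu(k,\epsilon)\to 0$, while for $k\notin K^\star$ we have $c(k,\epsilon)\ge\delta$; hence for all large enough $\epsilon$ the minimum cost is attained inside $K^\star$ with a gap of at least $\delta/2$ over every $k\notin K^\star$, and the computation in the proof of \cref{lemma:2} then shows the exponential mechanism selects some $k\in K^\star$ with probability tending to $1$.

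Finally I would assemble the pieces. On the event that selection lands in $K^\star$ --- probability $\to 1$ --- the measurement phase gives $\estx\to\x$, so for any workload $\W$ the scaled per-query error $\tfrac{1}{\scale{\x}\,|\W|}\Ltwo{\W\estx-\W\x}\le\tfrac{\Ltwo{\W}}{\scale{\x}\,|\W|}\Ltwo{\estx-\x}$ tends to $0$. On the complementary event the error stays uniformly bounded as $\epsilon\to\infty$ (indeed $a(k)\le\Ltwo{\x}^2$ and $\nu(k,\epsilon)$ is decreasing in $\epsilon$), so its contribution to the expectation is at most a vanishing probability times a bounded quantity. Hence the error for any workload tends to $0$ as $\epsilon\to\infty$, which is exactly consistency (\cref{def:consistency}). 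The part needing the most care is making the selection argument precise given the $\epsilon$-dependence of the cost function --- in particular verifying that $\nu(k,\epsilon)$ decays to $0$ uniformly over the finitely many candidate values of $k$, so that the effective score gap separating $K^\star$ from its complement does not shrink.
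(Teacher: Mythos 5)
Your proposal is correct and follows essentially the same route as the paper's proof: decompose \efpa into the selection and measurement phases, observe that the truncation (bias) term vanishes for $k=n$ while the noise term vanishes as $\epsilon\to\infty$, invoke the exponential-mechanism limit (\cref{lemma:2}) for the selection, and note that the Laplace noise on the retained coefficients disappears. Your treatment is in fact more careful than the paper's on one point it glosses over --- \cref{lemma:2} is stated for a fixed score function, whereas \efpa's score drifts with $\epsilon$ --- and your gap argument over the finite candidate set $K^\star$ is exactly what is needed to make that appeal rigorous.
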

\begin{proof}
\efpa works in two steps: (a) pick $k<n$ coefficients, and (b) add noise to the chosen coefficients. The number of coefficients $k$ is chosen by Exponential Mechanism with the score function of expected noise added to the $k$ chosen coefficients ($\frac{2k}{\epsilon}$) plus the error due to dropping $n-k$ Fourier coefficients. When $\epsilon$ goes to infinity, the noise term goes to zero. Thus $k = n$ (where all the coefficients are chosen) receives the highest score. By \cref{lemma:2}, this will be chosen as $\epsilon$ tends to infinity. Therefore, the scaled error from \efpa will tend to zero when $\epsilon$ tends to infinity, which satisfies consistency.
\end{proof}

\begin{theorem}
\ahp, \dawa and \dpcube are consistent.
\end{theorem}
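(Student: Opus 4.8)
The plan is to handle all three algorithms with a common template. Each of \ahp, \dawa, and \dpcube first uses a differentially private subroutine to choose a partition of the domain, then obtains noisy counts for the buckets of that partition, and finally reconstructs a data vector by spreading each bucket's noisy count uniformly over its cells (\dawa and \dpcube insert an extra linear-inference step in between). The error thus decomposes into (i) the magnitude of the noise injected during partition selection, bucket counting, and inference, each of which is $O(1/\epsilon)$ and hence vanishes as $\epsilon\to\infty$; and (ii) the bias of the within-bucket uniformity assumption, which is exactly $0$ whenever every bucket of the chosen partition is \emph{constant} (all its cells have equal count). So, to get consistency (\cref{def:consistency}), it suffices to show that as $\epsilon\to\infty$ the selected partition has all-constant buckets --- or, for \dpcube, that the algorithm keeps an asymptotically exact cell-level estimate no matter what the partition is. Where a boundary is chosen by the exponential mechanism I would invoke \cref{lemma:2}, and for the hierarchical sub-step (\greedyH inside \dawa) I would invoke \cref{lemma:1} and \cref{thm:dataind_consistency}.

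\emph{\dawa and \ahp.} Both pick a partition by (approximately) minimizing a within-bucket homogeneity cost. For \dawa a dynamic program returns the partition minimizing a sum of per-bucket costs, each equal to a within-bucket $L_1$ deviation $\sum_{i\in b}|x_i-\bar x_b|$ plus a per-bucket penalty of order $1/(\propPart\epsilon)$, with true costs replaced by noisy costs perturbed by $O(1/(\propPart\epsilon))$. For \ahp the first-round noisy counts $x_i+\mathrm{Lap}(1/(\propPart\epsilon))$ converge to $x_i$, the threshold $\eta\to0$ so that for large $\epsilon$ no positive cell is zeroed, and the counts are then sorted and greedily clustered into buckets. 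In either case the singleton partition (resp.\ the partition into maximal runs of equal counts) has within-bucket cost $0$ and total noisy cost $O(n/\epsilon)\to0$, so the returned partition has total noisy, hence also true, cost tending to $0$; since any bucket mixing two distinct integer counts contributes a within-bucket cost of at least $1$, for all sufficiently large $\epsilon$ every returned bucket is constant, and spreading the (vanishing-noise) bucket counts uniformly is exact. For \dawa the bucket counts are additionally processed by \greedyH, which is consistent by \cref{thm:dataind_consistency}, and the sub-budgets $\propPart\epsilon$ and $(1-\propPart)\epsilon$ both tend to infinity. Hence the error of each of \dawa and \ahp tends to $0$.

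\emph{\dpcube.} The first step applies the Laplace mechanism to obtain unbiased noisy cell counts whose noise is $O(1/\epsilon)$; the kd-tree partition is a measurable function of those noisy counts; the second step obtains unbiased noisy counts for the selected buckets; and the final estimate is the (weighted) least-squares reconciliation of these two mutually consistent sets of noisy counts, which is unbiased in each noisy input and, crucially, retains the cell-level counts with positive weight. Consequently the per-cell estimates converge to the true $x_i$ as both noise sources vanish --- regardless of how coarse the kd-tree happens to be --- so the error of answering any workload tends to $0$, which establishes consistency of \dpcube and completes the proof.

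The step I expect to be the main obstacle is the limiting-partition argument for \ahp and \dawa, namely ruling out that vanishing noise could still cause cells with distinct true counts to be placed in one bucket. For \dawa this is comparatively clean, because the cost function penalizes within-bucket non-uniformity directly and the singleton partition witnesses cost $0$. For \ahp the greedy clustering rule of \cite{zhangtowards} is less transparent, and one must check that in the $\epsilon\to\infty$ regime it indeed strictly prefers splitting two distinct integer counts --- using that their gap is at least $1$ while the threshold and the sorting perturbation are both $o(1)$. For \dpcube the only real work is verifying that the inference step is unbiased and genuinely propagates the asymptotically exact cell-level signal, which is essentially bookkeeping.
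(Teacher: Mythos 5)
Your proposal is correct and follows essentially the same decomposition as the paper's proof: the error splits into Laplace noise, which vanishes as $\epsilon\to\infty$, and the bias of the chosen partition, which also vanishes in the limit. Your write-up is in fact considerably more detailed than the paper's, which simply asserts that ``when $\epsilon$ tends to infinity, these algorithms will use a partitioning with zero bias'' without the integer-gap argument you give for \dawa and \ahp or the observation that \dpcube's inference step retains the asymptotically exact cell-level counts.
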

\begin{proof}
All \ahp, \dawa and \dpcube noisily partition the domain and then estimate partition counts. When $\epsilon$ tends to infinity, these algorithms will use a partitioning with zero bias. Also, partition counts are estimated using Laplace Mechanism whose scaled error goes to zero as $\epsilon$ or scale tends to infinity. Thus \ahp, \dawa and \dpcube ensure consistency.
\end{proof}

\begin{theorem}
\AG and \UG ensure consistency.
\end{theorem}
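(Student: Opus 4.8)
The plan is to reuse the template already applied to \ahp, \dawa, and \dpcube: show that, with the domain and the scale $\scale{\x}$ held fixed, as $\epsilon\to\infty$ (i) the partition each algorithm ends up working with becomes \emph{bias-free}, i.e.\ every final bucket coincides with a single cell of $\x$ (or is empty), and (ii) the bucket counts, which are obtained by the Laplace mechanism, converge to the true counts, so their scaled contribution to the error vanishes as in \cref{lemma:1}. The case of \UG is immediate: its grid granularity is a deterministic function of the public scale and of $\epsilon$ (roughly an $m\times m$ grid with $m\propto\sqrt{\scale{\x}\,\epsilon}$) and consumes no privacy budget, so for fixed $\scale{\x}$ we have $m\to\infty$; hence there is an $\epsilon_0$ beyond which $m$ exceeds every domain dimension $n_j$, at which point each grid cell is a single cell of $\x$ and \UG is literally \identity applied to $\x$. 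Consistency of \UG then follows from \cref{thm:dataind_consistency}. (If one prefers not to appeal to the threshold, note that the within-bucket uniformity error contributes $O(1/m)$ to the scaled error, which also tends to $0$.)

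For \AG I would split the scaled per-query error into a \emph{noise} term, coming from the Laplace-perturbed counts of the $B\le n$ final buckets, and a \emph{bias} term, coming from assuming uniformity inside each final bucket. The noise term is $O\!\big(\sqrt{B}\,/(\scale{\x}\sqrt{q}\,\epsilon)\big)$ (per \cref{def:sample-error}, dividing the $L_2$ error by $\scale{\x}\cdot q$), which tends to $0$. For the bias term, the second-level grid chosen inside a top-level bucket $i$ has resolution growing like $\sqrt{\tilde N_i\,\epsilon}$, where $\tilde N_i = N_i + \mathrm{Lap}(\Delta/(\propPart\epsilon))$ is a noisy count of the true number of records $N_i$ in that bucket. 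If $N_i=0$ the bucket injects no bias regardless of how it is (not) refined. If $N_i\ge 1$, the bucket is refined down to single cells — hence contributes no bias — unless $\tilde N_i$ falls below a threshold $\tau(\epsilon)=\Theta(1/\epsilon)$; since $\tilde N_i$ is Laplace-distributed around $N_i\ge 1$, this happens with probability at most $\tfrac12\exp\!\big(-(1-\tau(\epsilon))\,\propPart\epsilon/\Delta\big)\to 0$. On the complementary ``bad'' event the uniformity assumption in bucket $i$ distorts any query answer by at most the bucket total $N_i$, so the total bias over all $\le n$ buckets is at most $\scale{\x}$; taking expectations, the scaled bias term is at most $\tfrac{1}{\scale{\x} q}\cdot \scale{\x}\cdot n\cdot\max_i\Pr[\text{bucket }i\text{ bad}]\to 0$. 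Combining the noise and bias terms gives consistency of \AG. I would also remark that the consistency-enforcement / least-squares inference step both algorithms run at the end is a deterministic projection onto a set of count-vectors containing the truth, hence non-expansive, so the bounds above are preserved.

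The main obstacle is the \AG bias argument: one must make precise the relationship between the noisy bucket count $\tilde N_i$, the second-level resolution it induces, and the bucket's dimensions so that the ``under-refinement'' event on a nonempty bucket has probability decaying in $\epsilon$, while simultaneously showing that the error contributed in that low-probability event is bounded (by the bucket's count) so that it washes out in expectation. Everything else — the noise-term bound, the empty-bucket case, the treatment of post-processing, and the \UG reduction — is a routine adaptation of \cref{lemma:1} and the partitioning argument already used for \ahp, \dawa, and \dpcube.
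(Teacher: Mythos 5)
Your proposal is correct and follows the same route as the paper: both arguments reduce to showing that as $\epsilon\to\infty$ the grids refine down to individual cells, at which point \AG and \UG coincide with \identity and inherit its consistency. The paper's proof is a one-line assertion of this limit; your version additionally supplies the probabilistic detail for \AG's second-level grid (bounding the probability that the Laplace-perturbed bucket count causes under-refinement, and bounding the bias contributed on that vanishing-probability event), which the paper glosses over but which is exactly the point that needs care.
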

\begin{proof}
When $\epsilon$ goes to infinity, both \AG and \UG are equal to \identity having grids with size 1. Thus \AG and \UG are consistent.
\end{proof}

\begin{theorem}
On sufficiently large domains, \quadtree and \hybridtree do not ensure consistency.
\end{theorem}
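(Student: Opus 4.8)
The plan is to produce, for each of \quadtree and \hybridtree, a fixed family of inputs on which the error stays bounded away from $0$ as $\epsilon\to\infty$, which by \cref{def:consistency} establishes inconsistency. The starting observation is that both algorithms commit in advance to a tree of \emph{bounded height}: \quadtree uses a quadtree of pre-specified height $h$, and \hybridtree uses a fixed number of (data-dependent) kd-tree levels followed by a fixed number of quadtree levels up to a pre-specified total height. Consequently the partition of the domain induced by the leaves has at most some constant number $C$ of cells, where $C$ depends only on the height parameters (for \quadtree, $C\le 4^h$; for \hybridtree, $C$ is bounded by the kd-tree leaf count times the quadtree blow-up over the remaining levels). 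I would first make this counting argument precise, concluding that once the domain size $n$ exceeds $C$, at least one leaf cell $b$ of the tree must group together two or more cells of $\x$.

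Next I would trace what each algorithm does in the $\epsilon\to\infty$ limit. Both obtain noisy counts for the tree nodes via the Laplace mechanism, run a post-processing step that restores consistency among ancestor/descendant counts, and then answer a query by assuming uniformity within leaf cells, i.e.\ effectively reconstructing a per-cell vector $\hat\x$ in which each leaf's count is spread uniformly over the cells it contains. As $\epsilon\to\infty$ the Laplace noise on every node vanishes; \quadtree's post-processing is a linear weighted-least-squares adjustment that is exact on exact inputs and hence returns the true leaf counts; so $\hat\x$ converges to the ``uniform-within-leaves'' approximation $\bar\x$ of $\x$, where $\bar x_i = \big(\sum_{j\in b} x_j\big)/|b|$ for the leaf $b$ containing $i$. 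I would then fix any $n>C$, take a leaf $b$ with $|b|\ge 2$, and choose $\x$ non-uniform inside $b$ (say all mass on one cell of $b$). For a workload $\W$ containing a query that distinguishes cells of $b$ — e.g.\ a single-cell query, or any range query with a boundary strictly inside $b$ — we get $\|\W\bar\x-\W\x\|_2 \ge \delta$ for a constant $\delta>0$ that depends only on the chosen instance, not on $\epsilon$, so the scaled error is at least $\delta/(\scale{\x}\,|\W|)>0$ for all $\epsilon$, contradicting consistency.

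I expect the main obstacle to be the \hybridtree case, because its kd-tree levels are data-dependent and one might worry that with unbounded privacy budget the kd-tree adapts so as to isolate every populated cell. The resolution I would give is that a kd-tree of bounded depth has a bounded number of leaves regardless of how its splits are chosen, so on a sufficiently large domain some leaf must still contain $\ge 2$ cells, and the uniform-within-leaf reconstruction forces the same irreducible bias; if needed this can be strengthened by choosing $\x$ so that \emph{every} partition of the domain into at most $C$ axis-aligned blocks leaves some block non-uniform (e.g.\ strictly increasing counts along one axis). A secondary technical point is to confirm explicitly that \quadtree's consistency-enforcement post-processing is continuous in its inputs and reproduces the true counts in the noiseless limit, so that the limit $\hat\x\to\bar\x$ is justified; this is immediate from its description as a linear least-squares adjustment but worth stating.
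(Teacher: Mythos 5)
Your proposal is correct and rests on exactly the same idea as the paper's proof: the pre-specified maximum tree height forces some leaf to aggregate multiple domain cells once the domain is large enough, and the uniformity assumption within that leaf introduces a bias that does not vanish as $\epsilon\to\infty$. The paper states this in two sentences; your version simply fills in the counting, limit, and witness-instance details.
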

\begin{proof}
Both \quadtree and \hybridtree set the maximum height of the tree. If the input dataset has large domain size, the leaf node in the tree contains more than one cell in the domain introducing bias. Therefore, \quadtree and \hybridtree are not consistent.
\end{proof}

\begin{theorem}
PHP does not ensure consistency.
\end{theorem}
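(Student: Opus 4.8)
The plan is to exhibit a fixed shape $\p$ on which PHP retains, in the limit $\epsilon \to \infty$, a bucket whose cells do not all have equal counts; the uniformity assumption on that bucket then forces a bias that does not vanish, contradicting \cref{def:consistency}. Concretely, I would show that for this $\p$ (at any fixed scale) there is a single workload query whose expected error stays bounded below by a positive constant as $\epsilon$ grows.

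First I would describe PHP's behavior as $\epsilon \to \infty$. PHP builds its partition by recursive bisection: at each step the exponential mechanism is used to pick a split point (or to stop) according to a cost that combines a perturbation term --- the expected Laplace noise on bucket counts, proportional to $1/\epsilon$ --- and an approximation term --- the expected absolute error of replacing each cell of a bucket by the bucket mean. As $\epsilon$ grows the perturbation term vanishes, so by \cref{lemma:2} every exponential-mechanism step converges to choosing the option of minimum approximation cost. The key step is then to produce a $\p$ on which this limiting greedy procedure halts with a non-uniform bucket. There are two natural ways to obtain this: (i) if PHP performs only a bounded number of recursion rounds, then on a sufficiently refined domain a leaf bucket necessarily spans several cells with unequal counts, exactly as in the \quadtree/\hybridtree argument; or (ii) choose a (roughly periodic) shape for which no admissible bisection of some sub-interval strictly reduces the mean-based absolute-error cost, so the recursion stops early even though the sub-interval is not uniform.

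Given such a retained bucket $B$ with cells $x_i$, $i \in B$, not all equal, PHP answers any query intersecting $B$ using the reconstructed per-cell value equal to the (noisy) count of $B$ divided by $|B|$, whose expectation tends to the bucket mean as $\epsilon \to \infty$. Hence some workload query --- e.g.\ a prefix query ending strictly inside $B$ --- has expected squared error bounded below by a fixed positive constant $c(\p) > 0$ for all sufficiently large $\epsilon$. The scaled per-query error therefore cannot tend to $0$, so PHP violates \cref{def:consistency}.

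The main obstacle is step (ii) above (or establishing (i) from the published description): the absolute-error cost with mean-based reconstruction is not obviously monotone under refinement, so one must argue carefully either that PHP's recursion depth is capped independently of the data or that for the chosen shape every candidate split of some interval fails to improve the limiting cost. Once that is in place, turning a non-uniform retained bucket into a non-vanishing error lower bound is routine, since the perturbation contribution disappears while the uniformity bias does not.
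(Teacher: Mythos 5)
Your overall skeleton matches the paper's: as $\epsilon \to \infty$ the exponential-mechanism steps converge to the minimum-cost choices (this is exactly \cref{lemma:2}), and if the limiting partition retains a bucket with unequal cell counts, the uniformity reconstruction leaves a bias that does not vanish. But you explicitly leave open the one step that actually carries the proof --- exhibiting a shape on which the limiting procedure really does retain such a bucket --- and that is a genuine gap, not a routine detail. Moreover, your route (i) is slightly misstated: PHP's recursion is not capped at a constant number of rounds independent of the data; it is capped at $\log_2 n$ rounds, which \emph{would} suffice to reduce every bucket to a singleton if the bisections were balanced. So ``bounded depth on a sufficiently refined domain'' does not by itself yield a non-uniform leaf the way it does for \quadtree and \hybridtree.

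The paper closes the gap by combining the $\log_2 n$ iteration cap with the fact that PHP's split point is chosen adaptively and can be maximally unbalanced. Concretely, take $x_i = 2^{n-i}$ (scale $m = 2^n - 1$). In the $\epsilon = \infty$ limit, the cost-minimizing bisection at round $j$ peels off the single heaviest cell, splitting $\{x_j,\dots,x_n\}$ into $\{x_j\}$ and $\{x_{j+1},\dots,x_n\}$. After the allowed $\log_2 n$ rounds, the remaining bucket $\{x_{\log_2 n + 1},\dots,x_n\}$ still contains cells with pairwise distinct counts, so its uniformity bias is bounded away from zero and consistency fails. You would need to supply this (or an equivalent) witness and verify that the greedy split really behaves this way for your chosen shape; without it the argument is only a proof outline.
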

\begin{proof}
PHP recursively bisects the domain but sets the maximum iterations to be $\log_2 n$ (n is the domain size). After $\log_2 n$ iterations, it is possible that there exist clusters with bias greater than zero.
\begin{example}
Let the scale $m = 2^n - 1$ and $X = \{x_1,\dots,x_n\}$ be the dataset with $x_i = 2^{n-i}$. In iteration j $(\epsilon = \infty)$, PHP will divide the domain into $\{x_j\}$ and $\{x_{j+1},\dots,x_n\}$. After $\log_2 n$ iterations, the last partition will contain domain elements with different counts. Thus the bias of the partition does not go to zero even when $\epsilon$ tends to infinity.
\end{example}
\end{proof}

The consistency of \structurefirst depends on the implementation.  Our experiments use the implementation provided by the authors that includes the modification described in Sec.~6.2 of~\cite{xu2013differential}.
\begin{theorem}
StructureFirst does not satisfy consistency.  If the modification described in Sec.~6.2 of~\cite{xu2013differential} is applied, then it does satisfy consistency.
\end{theorem}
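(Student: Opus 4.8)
The plan is to split the statement into its two halves and treat each by a limiting argument, leaning on \cref{lemma:2} (the exponential mechanism picks a top-scoring item with probability $1$ as $\epsilon\to\infty$) and on the fact that Laplace noise of scale $\propto 1/\epsilon$ vanishes as $\epsilon\to\infty$.

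\emph{Inconsistency of the unmodified algorithm.} The crucial structural fact is that \structurefirst commits to a partition into a \emph{fixed} number of buckets $k$, and for the recommended setting $k=\lceil n/10\rceil$ (indeed for any $k<n$, which holds whenever $n\ge 2$) it is then forced to average together cells whose true counts differ whenever the data is not piecewise constant with at most $k$ pieces. It therefore suffices to exhibit one dataset and one workload on which the error stays bounded away from $0$: take a data vector $\x$ on a domain of size $n\ge 2$ whose cells have pairwise-distinct counts, and the workload consisting of the single-cell queries. As $\epsilon\to\infty$: (i) by \cref{lemma:2} the exponential mechanism over bucket boundaries selects, with probability tending to $1$, a minimum-cost $k$-partition (which one is immaterial, since \emph{every} $k$-partition with $k<n$ has a bucket containing two cells of unequal count); and (ii) the Laplace noise on the $k$ bucket counts has scale $\propto 1/\epsilon\to 0$, so the released histogram converges to the true bucket-average histogram of that partition. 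In that histogram at least one cell is misestimated by at least half the gap between two distinct cell counts, a positive quantity independent of $\epsilon$; hence the scaled per-query error on the single-cell query for that cell has a positive $\liminf$ as $\epsilon\to\infty$, and \structurefirst is not consistent.

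\emph{Consistency under the Sec.~6.2 modification.} Here I would invoke the modification of~\cite{xu2013differential}, which augments the output so that per-cell estimates are no longer taken to be bare bucket averages but are reconciled with additional, directly measured information about the cells themselves. The argument is then purely structural. As $\epsilon\to\infty$: (a) every discrete choice the algorithm makes (the partition, via the exponential mechanism, and any further refinement decisions) stabilizes, by \cref{lemma:2}, to a deterministic optimum; (b) every Laplace-perturbed quantity converges to its true value, since the noise scale is $\propto 1/\epsilon$; and (c) the extra information released by the modification, in this limit, is exactly the vector of true cell counts, so the reconciled estimate $\estx$ converges to $\x$. Consequently $\W\estx\to\W\x$ for \emph{every} workload $\W$, and the expected scaled error tends to $0$; the vanishing-probability event on which the exponential mechanism misfires contributes negligibly because its probability goes to $0$ while the error on it remains bounded by a scale-dependent constant. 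Thus the modified algorithm is consistent, which together with the first half yields the theorem.

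\emph{Main obstacle.} I expect step (c) to be the crux: one must pin down precisely what the Sec.~6.2 modification releases and verify that the limit of its post-processing --- which combines a biased, low-variance histogram estimate with an unbiased, high-variance per-cell estimate, presumably via an inference or averaging step --- retains \emph{no} residual bias as $\epsilon\to\infty$, rather than settling on some convex combination that still leans on the histogram. A secondary technical point is that the exponential mechanism in \structurefirst scores partitions using costs clamped via the parameter $F$; one should check that this clamping does not distort the arg-max in a way that matters --- though under the modification this is moot, since by (c) the identity of the selected partition is irrelevant in the limit.
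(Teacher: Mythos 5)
Your proof of the first half is correct and is essentially the paper's argument: the paper likewise takes a dataset with pairwise-distinct cell counts (concretely $x_i=i$) and observes that with the number of buckets fixed at $k=\lceil n/10\rceil<n$, some bucket must merge cells with unequal counts, so the uniformity bias within that bucket does not vanish as $\epsilon\to\infty$. Your version is if anything more careful, since you also track the exponential-mechanism selection via \cref{lemma:2} and note that the conclusion holds for \emph{every} $k$-partition, so it does not matter which one is selected in the limit.

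For the second half there is a genuine gap, and you correctly located it yourself: the argument hinges entirely on knowing what the Sec.~6.2 modification of~\cite{xu2013differential} actually releases, and your step (c) is an assumption rather than a verified fact. The resolution is that the modification builds a hierarchical histogram (i.e., \Hier) \emph{within each bucket}, rather than releasing only a single noisy bucket total. Since \Hier is an instance of the matrix mechanism, it is unbiased and its scaled error is $O(1/(\epsilon\Lone{\x}))$ (\cref{lemma:1}), so within each bucket the per-cell estimates converge to the true cell counts as $\epsilon\to\infty$; the within-bucket uniformity assumption --- the sole source of the persistent bias in the first half --- is simply no longer used. This also dissolves your anticipated worry about a residual ``convex combination that still leans on the histogram'': there is no such averaging step, the bucket structure only determines how the privacy budget is organized, and the identity of the selected partition is indeed irrelevant in the limit, exactly as you suspected. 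With that one fact supplied, your argument for the second half goes through as written.
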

\begin{proof}
StructureFirst fixes the number of clusters first (it sets the number k to be $\frac{n}{10}$ empirically, n is the domain size), which will introduce bias if the real number of clusters are greater than k. 
\begin{example}
Let $X = \{x_1,\dots,x_n\}$ be the dataset with $x_i = i$. If we set the number of clusters $k<n$ (suppose $k = \frac{n}{10}$), there will be one cluster $c$ containing more than one domain point. Since all the $x_i$ are different, the bias of $c$ does not go to zero even if $\epsilon$ tends to infinity.
\end{example}
The modification proposed in~\cite{xu2013differential} builds a hierarchical histogram (i.e., \Hier) within each bin.  This makes the algorithm consistent for the same reason that \Hier is consistent.
\end{proof}

\begin{theorem} \label{thm:mwem_consistency}
MWEM does not ensure consistency.
\end{theorem}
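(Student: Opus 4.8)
The plan is to follow the template used above for \php and \structurefirst: exhibit a single dataset and workload on which the bias of \mwem does not vanish as $\epsilon \to \infty$. The key structural fact is that \mwem runs a \emph{fixed} number of rounds $T$, and each round's multiplicative-weights update reweights every cell of the current estimate by a strictly positive, finite factor and then renormalizes to the known scale, so it can never drive any cell's count to zero. Since \mwem starts from the uniform estimate $\hat\x^{(0)} = (\scale{\x}/n)\,\mathbf{1}$, which is strictly positive on every cell, an immediate induction shows that every intermediate estimate $\hat\x^{(t)}$, and hence \mwem's output (which is either the final estimate or an average of the per-round estimates), is strictly positive on every cell, for any finite $T$ and any $\epsilon$.

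First I would analyze \mwem's behavior as $\epsilon \to \infty$. In each round it (i) selects via the exponential mechanism the workload query on which the current estimate has the largest error, (ii) measures it with the Laplace mechanism, and (iii) performs the multiplicative-weights update. By \cref{lemma:2}, as $\epsilon \to \infty$ the exponential mechanism returns a true argmax query with probability tending to $1$, while the Laplace noise tends to $0$; hence the execution converges to the deterministic process that, in each round, picks an exact worst query and measures it exactly. Provided the worst query is unique in every round of this deterministic path, the limiting output $\hat\x^\ast$ is well defined, and by the previous paragraph it is strictly positive on every cell of the domain.

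Next I would instantiate a concrete counterexample: domain size $n=3$, scale $m$, identity workload $\W = \{\e_1,\e_2,\e_3\}$, and $\x = (m,0,0)$. Under the uniform estimate the errors of $\e_1,\e_2,\e_3$ are $2m/3,\ m/3,\ m/3$, so $\e_1$ is the unique worst query and is measured as $m$; one checks that $\e_1$ stays the unique worst query in every subsequent round, since its estimate approaches $m$ from below but never reaches it in finitely many rounds. Hence no ties arise, $\hat\x^\ast$ is well defined, and $\hat\x^\ast_2 = \hat\x^\ast_3 > 0 = x_2 = x_3$. The scaled per-query error of \mwem therefore converges to $\frac{1}{3m}\Ltwo{\hat\x^\ast - \x} \ge \frac{1}{3m}\,\hat\x^\ast_2 > 0$, a positive constant independent of $\epsilon$; and since a wrong selection by the exponential mechanism has probability $o(1)$ while every run has bounded error, the \emph{expected} error converges to the same positive constant. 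Thus the error does not tend to $0$ as $\epsilon \to \infty$, so \mwem is not consistent.

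The main obstacle is making the ``$\epsilon \to \infty$ limit equals the deterministic process'' step precise: it requires \cref{lemma:2} for each exponential-mechanism step, the vanishing of Laplace noise for each measurement, continuity of the multiplicative-weights update and of the final averaging as functions of the measured values, and the observation that a wrong exponential-mechanism selection occurs with vanishing probability and contributes only a bounded amount to the expected error. Choosing $\x$ so that the worst query is unique in every round (as in the $n=3$ example) is exactly what lets us sidestep any tie-breaking subtleties and conclude that $\hat\x^\ast$ is genuinely fixed.
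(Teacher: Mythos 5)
Your proof is correct, but it reaches the conclusion by a genuinely different route than the paper. The paper's argument is combinatorial: it takes the workload of all $n$ single-count queries on a dataset with $n$ distinct counts and observes that in $T<n$ rounds only $T$ of the cells can ever be selected and ``corrected,'' so the untouched cells retain bias. Your argument instead exploits the analytic structure of the multiplicative-weights update: because each round multiplies every cell by a strictly positive finite factor and renormalizes to the known scale, no cell of the strictly positive uniform initialization can reach zero in finitely many rounds, so on $\x=(m,0,0)$ the estimate of the zero cells stays bounded away from the truth for any fixed $T$ --- \emph{even if} $T\geq n$ and every workload query gets measured exactly. This buys you a stronger and more robust statement (the paper's argument would not survive a tuning of $T$ above the domain size, and it glosses over the fact that renormalization does perturb the ``untouched'' cells), at the cost of having to control the $\epsilon\to\infty$ limit of the whole randomized execution, which you handle correctly via the exponential-mechanism limit (the paper's \cref{lemma:2}), the uniqueness of the argmax query in every round of your $n=3$ instance, and the observation that renormalization keeps every run's scaled error bounded so that the $o(1)$-probability deviations do not affect the limiting expected error. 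Both proofs establish the theorem; yours isolates the deeper reason MWEM is inconsistent (finitely many multiplicative updates cannot annihilate mass), while the paper's is shorter and needs only the hypothesis $T<n$.
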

\begin{proof}
MWEM fixes the number of iterations (it sets T = 10 empirically), which will introduce bias if T queries are not enough to update the estimate.
\begin{example}
Let $X = \{x_1,\dots,x_n\}$ be the dataset with $x_i = i$ and workload W contains all the single count queries. If the number of iterations T is less than n (suppose T = 10 $<$ n), we can only pick T single count queries and correct T domain point when $\epsilon$ goes to infinity. The bias generated from other domain points will not go to zero even if $\epsilon$ tends to infinity.
\end{example}
\end{proof}

\begin{lemma}
\label{lemma:3}
Laplace Mechanism is scale-epsilon exchangeable. Exponential Mechanism is scale-epsilon exchangeable when the score is a linear function of scale.
\end{lemma}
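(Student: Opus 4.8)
The plan is to track, for each primitive, how its output distribution changes under the substitution $(m,\epsilon)\mapsto(cm,\epsilon/c)$ with the shape $\p$ held fixed, and to show that whatever governs the scaled error is left invariant. Two facts are used repeatedly: for $\x=m\p$ we have $\scale{\x}=m$, and any linear query on $\x$ (in particular any workload answer $\W\x$) is homogeneous of degree one in $m$ once $\p$ is fixed.

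For the Laplace mechanism the only randomness is the noise vector $\z$ with $z_i\sim\Lap(\sens f/\epsilon)$, and the key observation is the scaling property of the Laplace law: if $Z\sim\Lap(b)$ then $cZ\sim\Lap(cb)$, so a vector of independent $\Lap(\sens f/(\epsilon/c))$ entries is distributed as $c$ times a vector of independent $\Lap(\sens f/\epsilon)$ entries. Since $\sens f$ is a data-independent constant, and since the map from noisy measurements to workload answers is linear and data-independent (it is the identity on $\z$ when the workload is measured directly, and otherwise $\z$ composed with a fixed reconstruction matrix and with $\W$), the scaled error on $\x=m\p$ has the form $\tfrac{1}{m\,|\W|}\Ltwo{\M\z}$ for a fixed matrix $\M$. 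Replacing $(m,\epsilon)$ by $(cm,\epsilon/c)$ replaces $\z$ by a variable equal in distribution to $c\z$ and the normaliser $m$ by $cm$; the two factors of $c$ cancel, so the scaled-error random variable at $(cm,\epsilon/c)$ is equal in distribution to the one at $(m,\epsilon)$. Hence its mean and any percentile — the summaries used for $error(\cdot)$ — agree, which is exactly exchangeability.

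For the exponential mechanism, it selects an item $t$ from a candidate set $T$ with probability proportional to $\exp\!\big(\epsilon\cdot score(t,\x)\big)$ (with any scale-independent normalisation of the score folded in, as in the proof of \cref{lemma:2}). Assume $T$ does not depend on scale — true for the uses in this paper, e.g. a split point or a number of retained coefficients. If $score(t,\cdot)$ is a linear function of scale, i.e. $score(t,m\p)=m\cdot g(t,\p)$ up to a $t$-independent additive term that cancels under normalisation, then the selection probability of $t$ is $\exp(\epsilon m\,g(t,\p))$ normalised over $T$, a function of the product $\epsilon m$ only. Consequently, whenever $\epsilon_1 m_1=\epsilon_2 m_2$, the exponential mechanism induces identical output distributions on $\x_1=m_1\p$ and $\x_2=m_2\p$, which is the form in which the claim is later invoked inside composite algorithms.

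I expect the one delicate point to be the behaviour of the score's sensitivity under rescaling: exchangeability would fail if the constant used to normalise the utility in the exponent grew with scale. I would argue it does not for scores that are themselves linear in $\x$: passing to a neighbouring database adds or removes a single record, so a linear score changes by an amount bounded by a constant determined by the score's coefficients and independent of how many records are already present; thus the sensitivity is constant in $m$ and the exponent really does depend only on $\epsilon m$. (When this primitive is used inside a larger algorithm one additionally needs the downstream steps to be exchangeable and the candidate set to be scale-independent, but those obligations are discharged when the pieces are composed, not here.)
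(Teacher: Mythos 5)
Your proof is correct and follows essentially the same route as the paper's: for the Laplace mechanism the scaled error depends on $\epsilon$ and the scale only through the product $\epsilon m$ (the paper simply writes the scaled expected error as $\sens/(\epsilon\Lone{\x})$, whereas you establish the stronger fact that the whole scaled-error distribution is invariant), and for the exponential mechanism the selection probabilities $\propto e^{\epsilon\,score(t)}$ become a function of $\epsilon m$ alone when the score is linear in scale. Your added remark that the score's sensitivity must not grow with $m$ is a caveat the paper leaves implicit, but it does not change the argument.
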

\begin{proof}
Laplace Mechanism ensures scale-epsilon exchangeability since the scaled expected $L_1$ error for Laplace Mechanism is $\frac{\Delta}{\epsilon \Lone{\x}}$ ($\Delta$ is the fixed sensitivity given workloads). 

Exponential Mechanism picks any item t with the probability proportional to $e^{\epsilon score(t)}$. When the score is a linear function of scale, Exponential Mechanism is scale-epsilon exchangeable.
\end{proof}

\begin{theorem}
PHP, MWEM, EFPA are scale-epsilon exchangeable.
\end{theorem}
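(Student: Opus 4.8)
The plan is to reduce everything to \cref{lemma:3} by writing each of \php, \mwem, and \efpa\ as a composition of (i) exponential-mechanism calls whose score functions are linear in scale, (ii) Laplace-mechanism calls, and (iii) deterministic post-processing that is positively homogeneous of degree one in the data. Fix a shape $\p$ and a workload $\W$, and suppose $\x_1 = m_1\p$ and $\x_2 = m_2\p$ with $\epsilon_1 m_1 = \epsilon_2 m_2$; set $c = m_2/m_1$, so that $\x_2 = c\,\x_1$ and $\epsilon_2 = \epsilon_1/c$. I would then prove, by induction over the sequence of subroutine invocations, that when the algorithm runs on $(\x_2,\epsilon_2)$ it makes every structural choice (a split point in \php, the selected high-error query in \mwem, the number of retained coefficients in \efpa) with exactly the same probability as on $(\x_1,\epsilon_1)$, and that every intermediate noisy quantity it produces --- noisy bucket counts, noisy measured query answers, the running estimate $\estx^{(t)}$, the retained noisy Fourier coefficients --- has, on $(\x_2,\epsilon_2)$, the distribution of $c$ times the corresponding quantity on $(\x_1,\epsilon_1)$.

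The inductive step rests on two facts from \cref{lemma:3}. First, for a counting function $f$ (so $f(c\,\x) = c\,f(\x)$), and since $\Lap(c\,\sens f/\epsilon)$ has the law of $c\,\Lap(\sens f/\epsilon)$, the Laplace mechanism satisfies $\LM_{\epsilon/c}(c\,\x) \stackrel{d}{=} c\,\LM_{\epsilon}(\x)$. Second, the exponential mechanism, which selects $t$ with probability proportional to $\exp(\epsilon'\cdot \mathit{score}(t))$, is unchanged under $(\x,\epsilon)\mapsto(c\,\x,\epsilon/c)$ whenever the score (normalized by its sensitivity) scales by exactly $c$, because then $\epsilon'\cdot\mathit{score}_{c\x}(t) = (\epsilon/c)(c\cdot\mathit{score}_{\x}(t))$ is invariant. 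I would verify this scaling for each algorithm: for \php, the midpoint cost is the within-bucket $L_1$ deviation from uniformity (linear in the counts) plus the expected Laplace perturbation (proportional to $1/\epsilon' = c/\epsilon$), so it scales by $c$ while its record-level sensitivity is constant, and the fixed per-level budget split together with the recursion --- which acts on sub-vectors that are themselves scaled by $c$ --- lets the induction close; for \mwem, the initial uniform estimate is $(\Lone{\x}/n)\ones$, which scales by $c$, the selection score (the error of a workload query against the current estimate) scales by $c$ with constant sensitivity, the Laplace measurement scales by $c$ by the first fact, and the multiplicative-weights update rescales each coordinate of $\estx^{(t)}$ by an exponential factor whose argument is a linear function of the counts divided by the public scale $\Lone{\x}$, hence invariant when both are multiplied by $c$, so $\estx^{(t+1)}$ again scales by $c$; for \efpa, the DFT is linear so all coefficients scale by $c$, the cost for choosing $k$ --- expected squared noise of the $k$ retained coefficients plus squared magnitude of the $n-k$ dropped ones --- scales by $c^2$ while its sensitivity scales by $c$, so the normalized score scales by $c$, and the final inverse DFT is linear.

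Granting the inductive claim, the output estimate $\estx$ produced on $(\x_2,\epsilon_2)$ has the law of $c\,\estx$ produced on $(\x_1,\epsilon_1)$; hence $\esty = \W\estx$ scales by $c$, and since the $L_2$ loss is positively homogeneous, $L(\esty,\W\x_2) \stackrel{d}{=} c\,L(\esty,\W\x_1)$. Dividing by the respective scales $m_2 = c\,m_1$ and $m_1$ as in \cref{def:sample-error} cancels the factor $c$, so the scaled per-query error is identically distributed under the two settings, which gives $error(\alg(\x_1,\W,\epsilon_1)) = error(\alg(\x_2,\W,\epsilon_2))$ as required by \cref{def:exchangeability}. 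The step I expect to be the main obstacle is verifying the ``linear in scale'' property of the exponential-mechanism scores after normalization by sensitivity --- particularly for \efpa, where both the squared-error score and its sensitivity scale nontrivially --- and making the induction fully rigorous through \php's recursion and its division of the privacy budget across levels.
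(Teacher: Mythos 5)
Your proposal is correct and follows essentially the same route as the paper's proof: reduce everything to \cref{lemma:3}, check that each exponential-mechanism score is linear in scale so the same structural choice is made with the same probability under $(c\,\x,\epsilon/c)$, and observe that the remaining Laplace noise and the bias both scale so as to cancel in the scaled per-query error. The paper's version is far terser (it simply asserts the score linearity and that ``the bias remains the same''), and it sidesteps your flagged \efpa{} worry by treating that selection score as degree-one homogeneous in the counts (the $2k/\epsilon$ noise term plus the magnitude of the dropped coefficients) with constant sensitivity, rather than as a squared error whose sensitivity must also be argued to scale.
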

\begin{proof}
All these three methods use Exponential Mechanism to choose clustering or parameter. Since their score functions are all linear functions of scale, based on \cref{lemma:3}, they will choose the same clustering or parameter with the same probability when $\epsilon$ * scale is fixed. Once they use the same clustering or parameter, the scaled error is the sum of the bias and the variance. The variance comes from Laplace Mechanism which is scale-epsilon exchangeable by \cref{lemma:3}. It is also easy to check those bias remain the same when fixing $\epsilon$ * scale. Thus PHP, MWEM, EFPA ensure scale-epsilon exchangeability.
\end{proof}

\begin{theorem}
StructureFirst does not satisfy scale-epsilon exchangeability.
\end{theorem}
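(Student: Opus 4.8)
The plan is to isolate the one ingredient of \structurefirst that fails the linearity hypothesis of \cref{lemma:3} and then make it concrete. \structurefirst runs two private stages: (i) it selects the $k$ bucket boundaries with the exponential mechanism using budget $\epsilon\propPart$, and (ii) it releases noisy bucket counts with the Laplace mechanism using budget $\epsilon(1-\propPart)$; the answer to $\W$ is then reconstructed from the chosen partition $P$ under within-bucket uniformity. First I would write $error(\structurefirst(\x,\W,\epsilon))$ as an expectation over the random $P$ from stage (i) of the scaled error of the fixed-partition estimator, which splits into a partition-dependent bias term and a Laplace-variance term. Conditioned on $P$, the variance term is a Laplace-mechanism error on a fixed set of linear queries and is scale-$\epsilon$ exchangeable by \cref{lemma:3}, while the bias term is deterministic of the form $\Lone{\x}\cdot c_P(\p,\W)/\Lone{\x}=c_P(\p,\W)$, depending only on the shape. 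So the only possible source of non-exchangeability is the distribution of $P$ produced by stage (i).

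Next I would analyze stage (i). Its exponential-mechanism score for a candidate partition $P$ is (minus) a truncated sum of squared errors, $\mathrm{SSE}_F(P,\x)=\sum_b\sum_{i\in b}(\min(x_i,F)-\mathrm{mean}_b)^2$, with sensitivity $\Delta$ governed by the bucket-count bound $F$, so $\Pr[P]\propto\exp\!\big(-\tfrac{\epsilon\propPart}{2\Delta}\,\mathrm{SSE}_F(P,\x)\big)$. The key point --- and the contrast with the linear scores of \mwem, \efpa, and \php handled by \cref{lemma:3} --- is that $\mathrm{SSE}$ is \emph{quadratic} in scale (scaling $\x$ by $c$ scales an untruncated $\mathrm{SSE}$ by $c^2$), whereas $\Delta$ is driven by $F$, which \structurefirst sets as a function of scale that is side information rather than a clean multiple of $\Lone{\x}$; consequently the exponent $\mathrm{SSE}_F/\Delta$ is not a function of the product $\epsilon\Lone{\x}$ alone. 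To finish, I would fix a non-uniform shape such as $p_i\propto i$ (the shape underlying the $x_i=i$ counterexample used above for consistency), for which the SSE-optimal $k$-partition is the same at every scale, and pick pairs $(m_1,\epsilon_1)$, $(m_2,\epsilon_2)$ with $\epsilon_1 m_1=\epsilon_2 m_2$ at which the effective inverse temperature $\epsilon\propPart\, m^2/\Delta$ genuinely differs (either because $\Delta$ is not $\Theta(m)$, or because truncation is active at one scale but not the other). Then the exponential mechanism places different mass on a sub-optimal partition in the two settings, so $\E_P[c_P(\p,\W)]$ and hence $error(\structurefirst(\x_1,\W,\epsilon_1))\neq error(\structurefirst(\x_2,\W,\epsilon_2))$, contradicting \cref{def:exchangeability}.

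The hard part will be being precise about the two things I deliberately left loose: the exact score/sensitivity used in the authors' implementation --- in particular whether $F$ is taken proportional to $\Lone{\x}$, to a noisy estimate of it, or is effectively held fixed --- and the effect of truncation, since a clean choice $F=\Theta(\Lone{\x})$ with no active truncation would actually restore exchangeability of stage (i). The proof must therefore exhibit the \structurefirst-prescribed setting of $F$ and show it breaks the product-only dependence; this is also consistent with the paper's footnote that the violation is quantitatively mild and \structurefirst ``empirically behaves so.'' A secondary obstacle is confirming that the difference in the distribution of $P$ survives the expectation --- that $\E_P[c_P(\p,\W)]$ really moves --- which holds as long as the two partition distributions disagree on partitions with distinct shape-only bias values; I would lower-bound this gap using the chosen counterexample shape rather than evaluate it exactly.
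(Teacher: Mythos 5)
Your proposal follows essentially the same route as the paper's proof: both pinpoint the exponential-mechanism partition-selection step, observe that its score is quadratic in scale (so the exponent is not a function of $\epsilon\Lone{\x}$ alone), and conclude that the distribution over partitions---and hence the bias---differs across settings with equal $\epsilon\cdot m$. The paper simply asserts this and exhibits a two-cell example ($D_1=\{1,2\}$ at $\epsilon=1$ versus $D_2=\{2,4\}$ at $\epsilon=0.5$); your added care about the truncation bound $F$ and the sensitivity normalization is a genuine subtlety the paper glosses over, but it does not change the structure of the argument.
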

\begin{proof}
The score function used in StructureFirst is a linear function of the square of scale. Based on \cref{lemma:3}, even if $\epsilon$ * scale is fixed, the probability of picking the same clustering will be different for different $\epsilon$ settings. Once the clustering is different, the bias will be different.
\begin{example}
Suppose we use $\epsilon = 1$ on $D_1 = \{1,2\}$ and $\epsilon=0.5$ on $D_2 = \{2,4\}$. For both $D_1$ and $D_2$, there are only two clusterings $C_1 = \{(x1,x2)\}, C_2 = \{(x1),(x2)\}$. For $D_1$, $P_1(C_1) = \frac{e^{\epsilon score(C_1)}}{e^{\epsilon score(C_1)} + e^{\epsilon score(C_2)}} = \frac{e^{1.5}}{e^{1.5} + e^{2}}$.  For $D_2$, $P_2(C_1) = \frac{e^3}{e^3 + e^4}$. We can see $P_1(C_1) \neq P_2(C_2)$.
\end{example}
\end{proof}

\begin{theorem}\label{thm:DAWA-se}
DAWA ensures scale-epsilon exchangeability.
\end{theorem}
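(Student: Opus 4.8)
The plan is to split $\dawa$ into its two privacy-consuming stages --- data-adaptive partition selection (using budget $\propPart\epsilon$) and measurement of the bucket counts via \greedyH followed by uniform within-bucket reconstruction (using budget $(1-\propPart)\epsilon$) --- and to show each stage behaves predictably under a simultaneous rescaling of the data vector and of $\epsilon$. Fix a shape $\p$ and workload $\W$, and take $\x_1=m_1\p$, $\x_2=m_2\p$ with $\epsilon_1 m_1=\epsilon_2 m_2$; set $t:=m_2/m_1=\epsilon_1/\epsilon_2>0$. The goal is to show the error of $\dawa(\x_2,\W,\epsilon_2)$ is distributed exactly as the error of $\dawa(\x_1,\W,\epsilon_1)$, so that every summary statistic of it agrees. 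For the $2$D case I would first peel off $\dawa$'s fixed, data-independent Hilbert mapping: it merely permutes cells (preserving $\Lone{\x}$) and rewrites $\W$ as a fixed $1$D workload, so it suffices to treat $1$D.

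First, I would show Stage~1 outputs a partition whose distribution does not depend on the scenario. $\dawa$ scores every candidate interval $b$ by a cost $c_b(\x)$ --- the cost function of \php, namely the $L_1$ deviation of $\x$ restricted to $b$ from its best constant fit --- perturbs it to $\widehat c_b=c_b(\x)+\mathrm{Lap}\!\bigl(\Delta/(\propPart\epsilon)\bigr)$, and runs a dynamic program returning the partition $P$ minimizing $\sum_{b\in P}\widehat c_b$. Here $c_b(\cdot)$ is positively homogeneous of degree one, so $c_b(\x_2)=t\,c_b(\x_1)$, while $\mathrm{Lap}\!\bigl(\Delta/(\propPart\epsilon_2)\bigr)\stackrel{d}{=}t\cdot\mathrm{Lap}\!\bigl(\Delta/(\propPart\epsilon_1)\bigr)$, and any additional $(1-\propPart)\epsilon$-dependent term in the cost, being $\propto 1/\epsilon$, also scales by $t$. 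Hence the whole noisy-cost vector in scenario~$2$ equals in distribution $t$ times the one in scenario~$1$, and since $\arg\min$ over partitions is invariant under multiplying all costs by the positive constant $t$, $P$ has the same law in both; I would then couple the two executions so that $P$ is literally the same random object.

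Second, conditioning on $P$, the rest of $\dawa$ becomes a matrix-mechanism computation and the argument of \cref{lemma:1} carries over. Given $P$, $\dawa$ forms the bucket-count vector $\x^P=\T_P\x$ (with $\T_P$ the $|P|\times n$ aggregation matrix), runs \greedyH --- a workload-aware but \emph{data-independent} matrix-mechanism instance on the reduced domain --- to get $\widehat\x^P=\x^P+\mathbf{N}_P\z$ for a fixed matrix $\mathbf{N}_P$ (a function of $P$ and $\W$ only) and a vector $\z$ of i.i.d.\ $\mathrm{Lap}\!\bigl(\Delta_{\mathbf{S}_P}/((1-\propPart)\epsilon)\bigr)$ entries, and then spreads counts uniformly within buckets, $\tilde\x=\R_P\widehat\x^P$ (with $(\R_P)_{ib}=1/|b|$ for $i\in b$). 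With $\mathbf{B}_P:=\W\R_P\mathbf{N}_P$ and $\mathbf{C}_P:=\W(\R_P\T_P-\I)$, both depending only on $P$ and $\W$,
\[
\mathrm{error}=\frac{1}{\Lone{\x}\,|\W|}\,\Ltwo{\,\mathbf{B}_P\,\z+\mathbf{C}_P\,\x\,}.
\]
Plugging in scenario~$2$: $\Lone{\x_2}=t\,m_1$, $\mathbf{C}_P\x_2=t\,\mathbf{C}_P\x_1$, and $\z$ is distributed as $t$ times its scenario-$1$ counterpart; pulling the factor $t$ out of the norm cancels it against $1/m_2$, returning exactly the scenario-$1$ expression. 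As this holds conditionally on each value of $P$ and $P$ is identically distributed across scenarios, the unconditional error distributions coincide, establishing scale-epsilon exchangeability.

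The main obstacle I anticipate is entirely in Stage~1: one has to pin down the precise form of $\dawa$'s interval cost used in the dynamic program and verify that \emph{every} term in it scales by exactly $t$ under the joint change $(\x,\epsilon)\mapsto(t\x,\epsilon/t)$ --- the data-fit term via $\x$ and any noise-calibration term via $1/\epsilon$ --- so that the argmin, and hence the selected partition, is genuinely scale-invariant in distribution. Everything downstream of a fixed partition is a conditional replay of the \cref{lemma:1} calculation, augmented only by the deterministic, degree-one bias term $\mathbf{C}_P\x$, which presents no real difficulty.
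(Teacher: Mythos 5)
Your proof is correct and follows essentially the same route as the paper's: show that the selected partition has the same distribution under the joint rescaling $(\x,\epsilon)\mapsto(t\x,\epsilon/t)$ --- the paper does this via a change of variables in the Laplace densities for a pairwise merge decision, while you argue more globally that the entire noisy cost vector scales by $t$ in distribution and the argmin is invariant under positive scaling --- and then observe that, conditioned on the partition, the measurement stage is a matrix-mechanism computation whose scaled error is unchanged. If anything your treatment is slightly more complete, since you carry the deterministic uniformity-bias term $\mathbf{C}_P\x$ explicitly through the second stage, which the paper only asserts informally.
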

\begin{proof}
DAWA assumes order on the domain and perturbs the bias of all possible partitions by adding Laplace noise. Then it computes the partitioning based on the these noisy bias to minimize the objective function of $Error = \sum_{c \in P}bias(c) + \frac{|P|}{\epsilon}$ on any partitioning P. Let $D_1$ and $D_2$ be two datasets with the same shape but with scales m and $\alpha$m respectively. For $D_1$ under $\epsilon$, DAWA combines partitions $c_1$ and $c_2$ if $\tilde{bias}(c_1) + \tilde{bias}(c_2) + \frac{2}{\epsilon} > \tilde{bias}(c_1 + c_2) + \frac{1}{\epsilon}$. We have (suppose $bias(c_1)$ = x, $bias(c_2)$ = y, $bias(c_1+c_2)$ = z)
\begin{eqnarray*}
&&P(\tilde{bias}(c_1) + \tilde{bias}(c_2) + \frac{2}{\epsilon} > \tilde{bias}(c_1 + c_2) + \frac{1}{\epsilon} | D_1)\\
&=& P(Lap(x,\frac{1}{\epsilon} )+ Lap(y,\frac{1}{\epsilon}) + \frac{1}{\epsilon} > Lap(z,\frac{1}{\epsilon})) \\
&=& \int_a \int_b f(a | x,\frac{1}{\epsilon})f(b | y,\frac{1}{\epsilon}) F(a+b+\frac{1}{\epsilon} | z,\frac{1}{\epsilon}) dadb \\
&=& \int_a \int_b \alpha^2 f(\alpha a | \alpha x,\frac{\alpha}{\epsilon})f(\alpha b| \alpha y,\frac{\alpha}{\epsilon}) F(\alpha a+\alpha b+\frac{\alpha }{\epsilon} | \alpha z,\frac{\alpha}{\epsilon}) dadb \\
&=&\int_c \int_d f(c | \alpha x,\frac{\alpha}{\epsilon})f(d| \alpha y,\frac{\alpha}{\epsilon}) F(c+d+\frac{\alpha }{\epsilon} | \alpha z,\frac{\alpha}{\epsilon}) dcdd~(c = \alpha a, d = \alpha b)\\
&=& P(Lap(\alpha x,\frac{\alpha }{\epsilon}) + Lap(\alpha y,\frac{\alpha }{\epsilon}) + \frac{2\alpha }{\epsilon} > Lap(\alpha z,\frac{\alpha }{\epsilon}) + \frac{\alpha }{\epsilon})\\
&=&P(\tilde{bias}(c_1) + \tilde{bias}(c_2) + \frac{2\alpha}{\epsilon} > \tilde{bias}(c_1 + c_2) + \frac{\alpha}{\epsilon} | D_2)
\end{eqnarray*}
Since both $D_1$ under $\epsilon$ and $D_2$ under $\frac{\epsilon}{\alpha}$ combine $c_1$ and $c_2$ with the same probability, for any partitioning P of the domain, Pr[DAWA returns P with $D_1$ under $\epsilon$] = Pr[DAWA returns P with $D_2$ under $\frac{\epsilon}{\alpha}$]. 

In the second step, DAWA uses Laplace Mechanism to estimate the partition counts. Given a partitioning, the scaled error is the same in $D_1$ under $\epsilon$ and $D_2$ under $\frac{\epsilon}{\alpha}$. Since the probability a partitioning P is chosen is the same in both cases, the expected scaled error for DAWA is the same in both cases, which ensures scale-epsilon exchangeability.
\end{proof}

\begin{theorem}
AHP ensures scale-epsilon exchangeability.
\end{theorem}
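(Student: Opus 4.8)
The plan is to reuse, almost verbatim, the coupling argument behind \cref{thm:DAWA-se}. Fix a shape $\p$, a workload $\W$, a base scale $m$, and a factor $\alpha>0$, and compare the run of \ahp on $\x=m\p$ with budget $\epsilon$ against the run on $\x'=\alpha m\p$ with budget $\epsilon'=\epsilon/\alpha$, so that $\epsilon\scale{\x}=\epsilon'\scale{\x'}$; the goal is to show the two runs induce the same distribution of scaled per-query error. Write $\epsilon_1=\propPart\epsilon$ and $\epsilon_2=(1-\propPart)\epsilon$ for the budget \ahp allots to partition selection and to bucket counting, and $\epsilon_1'=\epsilon_1/\alpha$, $\epsilon_2'=\epsilon_2/\alpha$ for the analogous split of $\epsilon'$.

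First I would handle the partition-selection phase, which for \ahp is Laplace noise followed by purely deterministic post-processing. The phase-one noisy counts are $\tilde x_i=x_i+\Lap(1/\epsilon_1)$; since $\Lap(\alpha/\epsilon_1)$ has the same distribution as $\alpha\cdot\Lap(1/\epsilon_1)$ and $x_i'=\alpha x_i$, the noisy-count vector $\tildex'$ produced on $(\x',\epsilon')$ is distributed exactly as $\alpha\tildex$. \ahp then (i) zeros every cell whose noisy count is below a threshold $\theta$, (ii) sorts the surviving noisy counts, and (iii) greedily merges consecutive sorted counts into buckets. I would argue each step is invariant under the joint rescaling that multiplies every phase-one noisy count by $\alpha$ while replacing $\epsilon_1$ by $\epsilon_1/\alpha$: the threshold $\theta$ is a function of $\epsilon_1$ and the parameter $\eta$ only and satisfies $\theta(\epsilon_1/\alpha,\eta)=\alpha\,\theta(\epsilon_1,\eta)$, so on $(\x',\epsilon')$ it becomes $\alpha\theta$ and $\{\tilde x_i'<\alpha\theta\}=\{\tilde x_i<\theta\}$; the sort order of $\{\alpha\tilde x_i\}$ equals that of $\{\tilde x_i\}$; and the greedy clustering decisions compare differences and running means of the surviving noisy counts against multiples of the Laplace scale $1/\epsilon_1$, which likewise becomes $\alpha/\epsilon_1$. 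Chaining these, for every partition $P$ of the domain, $\Pr[\,\ahp\text{ selects }P\text{ on }(\x',\epsilon')\,]=\Pr[\,\ahp\text{ selects }P\text{ on }(\x,\epsilon)\,]$.

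Next, conditioning on a selected partition $P=\{B_1,\dots,B_k\}$, the counting phase draws noisy bucket totals $\tilde b_j=b_j(\x)+\Lap(1/\epsilon_2)$ and expands them uniformly into an estimate $\estx$ with $\hat x_i=\tilde b_j/|B_j|$ for $i\in B_j$. Because $b_j(\x')=\alpha b_j(\x)$ and $\Lap(\alpha/\epsilon_2)$ has the same distribution as $\alpha\cdot\Lap(1/\epsilon_2)$, the estimate on $(\x',\epsilon')$ satisfies $\estx'\stackrel{d}{=}\alpha\estx$ conditional on $P$ (and this stays true if \ahp additionally folds in the phase-one counts by inference, since those also scale by $\alpha$). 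Hence $\W\estx'-\W\x'\stackrel{d}{=}\alpha(\W\estx-\W\x)$, so $\Ltwo{\W\estx'-\W\x'}\stackrel{d}{=}\alpha\,\Ltwo{\W\estx-\W\x}$; dividing by the scaled-error normalizers $\scale{\x'}\,|\W|=\alpha m|\W|$ and $\scale{\x}\,|\W|=m|\W|$ respectively cancels the factor $\alpha$. Taking expectations and averaging over $P$ using the equality of selection probabilities from the previous step gives equality of the expected scaled per-query errors, i.e., scale-epsilon exchangeability (\cref{def:exchangeability}).

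The only step needing real care — and the only place the internals of~\cite{zhangtowards} enter — is checking that both the zeroing threshold and the greedy-clustering decision rule are positively homogeneous of degree one once all the internal $1/\epsilon_1$-scaled constants are made explicit; everything else is the same Laplace-plus-homogeneous-post-processing bookkeeping used for \dawa in \cref{thm:DAWA-se} and for the Laplace mechanism in \cref{lemma:3}. I expect this to be routine: unlike \structurefirst, \ahp's decision rules contain no scale-dependent constant, and \ahp uses no side information about the number of tuples.
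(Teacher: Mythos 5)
Your proposal is correct and follows essentially the same route as the paper, which simply states that the proof is ``very similar to that of \cref{thm:DAWA-se}'' and leaves the details implicit; you have carried out exactly that \dawa-style coupling argument (Laplace noise scaling by $\alpha$ under $\epsilon\mapsto\epsilon/\alpha$, homogeneity of the thresholding/sorting/greedy-clustering post-processing, and cancellation of $\alpha$ in the scaled error) specialized to \ahp's two phases. Your explicit identification of the degree-one homogeneity of the threshold and clustering rule as the only step requiring inspection of~\cite{zhangtowards} is the right caveat, and your write-up is in fact more complete than the paper's own.
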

\begin{proof}
The proof is very similar to that of \cref{thm:DAWA-se}.
\end{proof}

\begin{theorem}
\AG, \UG, \quadtree and \hybridtree are scale-epsilon exchangeable.
\end{theorem}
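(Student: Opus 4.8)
The plan is to prove scale-epsilon exchangeability (\cref{def:exchangeability}) for each of the four algorithms by the same coupling argument. Fix a shape $\p$, a workload $\W$, and consider the two runs: one on $\x_1 = m_1\p$ with budget $\epsilon_1$, the other on $\x_2 = m_2\p$ with budget $\epsilon_2$, where $m_1\epsilon_1 = m_2\epsilon_2$; write $\alpha = m_2/m_1$, so that $\epsilon_1 = \alpha\epsilon_2$. I will couple the two executions so that (i) every grid or tree structure that gets selected is \emph{identical} in the two runs, and (ii) every noisy count produced in the second run equals $\alpha$ times the corresponding quantity in the first run. The two ingredients that make this possible are: $\mathrm{Lap}(\Delta/\epsilon_2) \stackrel{d}{=} \alpha\,\mathrm{Lap}(\Delta/\epsilon_1)$, because $\alpha\Delta/\epsilon_1 = \Delta/\epsilon_2$; and the observation behind \cref{lemma:3} that an exponential-mechanism step whose score is linear in scale selects its output with the same probability whenever the product of scale and $\epsilon$ is held fixed. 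Granted such a coupling, the reconstructed data vector, and hence the vector of workload answers $\esty_2$, equals $\alpha\,\esty_1$; since $\W\x_2 = \alpha\,\W\x_1$, the absolute error $\Ltwo{\esty_2 - \W\x_2}$ equals $\alpha\,\Ltwo{\esty_1 - \W\x_1}$, and dividing by the scale ($m_2 = \alpha m_1$) and by the common number of queries $q$ shows the scaled per-query errors agree.

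For \quadtree, the tree is fixed in advance from the domain size $n$ and the height parameter alone — no data and no budget are used to choose it — so the algorithm has the form $\esty = \M(\A\x + \z)$, where the measurement matrix $\A$, the linear consistency-enforcing reconstruction $\M$, and the sensitivity $\Delta$ depend only on $n$, and $\z$ is a vector of i.i.d.\ $\mathrm{Lap}(\Delta/\epsilon)$ entries. Then $\esty - \W\x = (\M\A - \W)\x + \M\z$: the first term (the bias from coarse leaves, if the tree does not reach individual cells) scales exactly linearly in $m$, and the second rescales by $\alpha$ under the coupling since $\z_2 \stackrel{d}{=}\alpha\z_1$. Hence the absolute error scales by $\alpha$ and \quadtree is exchangeable.

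For \UG, the per-dimension grid granularity has the form $g = \max\!\big(1,\lceil\sqrt{\scale{\x}\,\epsilon/c}\,\rceil\big)$, a function of the \emph{product} $\scale{\x}\,\epsilon$ only (the benchmark treats scale as public side information, so no noise enters this step); since the product is preserved, the partition is literally the same in both runs and the remaining analysis is identical to \quadtree. For \AG, the top-level equi-width grid is again a function of $\scale{\x}\,\epsilon$ and hence preserved; the algorithm then spends a $\propPart$-fraction of the budget on a noisy count $\tilde N_i = m\,p^{(i)} + \mathrm{Lap}(\Delta/(\propPart\epsilon))$ of the records in each top-level cell $i$ (with $p^{(i)}$ the shape mass of cell $i$) and sets the second-level granularity of cell $i$ to a ceiling of a function of the product $\tilde N_i\cdot(1-\propPart)\epsilon$. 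Under the coupling $\tilde N_i^{(2)} \stackrel{d}{=}\alpha\,\tilde N_i^{(1)}$, jointly over all $i$, so $\tilde N_i^{(2)}(1-\propPart)\epsilon_2 \stackrel{d}{=}\tilde N_i^{(1)}(1-\propPart)\epsilon_1$ and the refined (random) two-level partition has the same law in both runs; conditioned on it, the final bucket-count estimation is again Laplace plus linear post-processing, so the argument of the first paragraph applies. For \hybridtree, the top-down kd-tree phase selects each split via a private mechanism whose utility is linear in the underlying cell counts, hence linear in scale; by \cref{lemma:3} it returns the same tree with the same probability when $m\epsilon$ is fixed, and below the kd-tree the fixed quadtree layers and the noisy-count estimation with consistency post-processing are handled exactly as for \quadtree.

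The step I expect to be the main obstacle is \AG, because, unlike \UG and \quadtree, its partition is genuinely random and scale-dependent: I must verify that a \emph{single} coupling simultaneously rescales the scale-estimation noise, the per-cell noise used to choose the second-level grids, and the final bucket-count noise — this works precisely because these are independent Laplace vectors and each transforms correctly on its own — and I must check that the ceiling and $\max(\cdot,1)$ clamps in the granularity formulas do not break equality of the selected grids (they do not, since their arguments are \emph{exactly} equal under the coupling once the common randomness is fixed, not merely equal in distribution). A minor point to confirm is that the per-query normalization in \cref{def:sample-error} divides by $q$, which is identical across the two runs, so only the division by scale matters and it exactly cancels the factor $\alpha$ in the absolute error.
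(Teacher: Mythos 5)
Your proposal is correct and follows essentially the same route as the paper's proof: the grid/tree selection is invariant once the product $m\cdot\epsilon$ is fixed, and the subsequent Laplace-based count estimation (plus linear post-processing) rescales by $\alpha$, so the scaled error is unchanged. Your write-up is considerably more careful than the paper's two-sentence argument — in particular, the joint coupling handling \AG's genuinely data-dependent second-level grid and the explicit treatment of \quadtree's bias term $(\M\A-\W)\x$ are details the paper glosses over — but the underlying idea is the same.
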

\begin{proof}
The partitioning or the tree generation of these four algorithms are independent with $m\cdot\epsilon$, where $m$ is the scale of the input dataset and $\epsilon$ is the privacy budget. After the partitioning or tree generation, each algorithm applies Laplace Mechanism to answer node queries.  Laplace Mechanism is scale-epsilon exchangeable. Thus, \AG, \UG, \quadtree and \hybridtree ensure scale-epsilon exchangeability.
\end{proof}

\end{document}